\documentclass[10pt]{article}
\usepackage[utf8]{inputenc}
\usepackage[T1]{fontenc}
\usepackage{amsmath,amsthm,amsfonts,amssymb}
\usepackage{graphicx}
\usepackage{physics}
\usepackage{tikz}
\usepackage{siunitx}
\usetikzlibrary{arrows.meta, positioning, calc, decorations.pathreplacing, patterns}
\usepackage{pgfplots}
\pgfplotsset{compat=1.18}
\usepackage{enumitem}
\usepackage{hyperref}
\usepackage{tikz-cd}
\usepackage{geometry}
\newtheorem{theorem}{Theorem}[section]
\newtheorem{lemma}[theorem]{Lemma}
\newtheorem{proposition}[theorem]{Proposition}
\newtheorem{corollary}[theorem]{Corollary}
\newtheorem{definition}[theorem]{Definition}
\newtheorem{assumption}[theorem]{Assumption}
\newtheorem{remark}[theorem]{Remark}

\newcommand{\cB}{\mathcal{B}}

\newcommand{\cD}{\mathcal{D}}
\newcommand{\cE}{\mathcal{E}}

\newcommand{\cH}{\mathcal{H}}

\newcommand{\cL}{\mathcal{L}}
\newcommand{\cM}{\mathcal{M}}

\newcommand{\cP}{\mathcal{P}}
\newcommand{\cQ}{\mathcal{Q}}
\newcommand{\cR}{\mathcal{R}}

\newcommand{\cT}{\mathcal{T}}

\newcommand{\cK}{\mathcal{K}}
\newcommand{\cX}{\mathcal{X}}

\newcommand{\C}{\mathbb{C}}
\newcommand{\N}{\mathbb{N}}

\newcommand{\id}{\mathbb{I}}

\newcommand{\poly}{\operatorname{poly}}
\newcommand{\eps}{\varepsilon}

\newcommand{\spec}{\operatorname{spec}}
\newcommand{\re}{\operatorname{Re}}

\newcommand{\ran}{\operatorname{ran}}

\newcommand{\dist}{\operatorname{dist}}

\newcommand{\dnorm}[1]{\left\Vert#1\right\Vert_{\diamond}}

\AtBeginDocument{\RenewCommandCopy\qty\SI}

\title{Operator-Algebraic Methods for Asymptotic-Preserving Quantum Simulation of Open Systems}
\author{M.W. AlMasri\thanks{mwalmasri2003@gmail.com}}
\date{\today}

\begin{document}
\maketitle

\begin{abstract}
\noindent We develop a mathematically rigorous framework for simulating \emph{multiscale physical systems} using quantum computational resources, by translating the \emph{language of asymptotic-preserving (AP) schemes} into the formalism of quantum channels and Lindbladian dynamics. For stiff open quantum systems governed by singularly perturbed generators $\cL_\eps = \eps^{-1}\cL_{\mathrm{fast}} + \cL_{\mathrm{slow}}$ with $\eps \to 0$, we prove that layered quantum protocols, which implement fast-scale relaxation via native analog evolution or analytic manifold projection, converge uniformly in the diamond norm to consistent discretizations of the limiting slow dynamics, with explicit error bound $\mathcal{O}(\eps\Delta t + \Delta t^2)$ independent of stiffness. We establish precise resource-complexity bounds showing that superlinear gate-count savings $\Omega(\kappa\cdot(d_{\mathrm{tot}}/d_{\mathrm{slow}})^c)$ arise if and only if fast dynamics are resolved via (i) hardware-native analog evolution, or (ii) analytic adiabatic elimination reducing effective Hilbert space dimension. The framework is illustrated through cavity QED in the bad-cavity limit and a quantum-inspired AP discretization of kinetic equations converging to fluid limits, with quantified error propagation in trace and diamond norms. This work provides a principled mathematical bridge between classical multiscale numerical analysis and quantum simulation algorithms.
\end{abstract}

\section{Introduction}
\label{sec:introduction}

The simulation of multiscale physical systems, spanning disparate temporal, spatial, or energetic scales, represents a grand challenge across computational physics, chemistry, and engineering. Classical numerical analysis addresses such problems through asymptotic-preserving (AP) schemes~\cite{Jin0,Jin,e2007hmm,E,Hairer}, which maintain uniform accuracy as scale-separation parameters vanish, thereby avoiding prohibitive resolution requirements. Concurrently, quantum simulation offers exponential potential speedups for certain many-body and open-system dynamics~\cite{Georgescu,Cirstoiu}. However, a systematic methodology for importing the language of AP schemes---commutative diagrams, limiting procedures, uniform error bounds---into the quantum computational setting remains underdeveloped.

In this work, we bridge these domains by formulating AP quantum simulation as a rigorous operator-theoretic framework for multiscale physical systems. We consider stiff open quantum systems whose Lindbladian generators admit singular perturbations:
\begin{equation}
    \mathcal{L}_\eps = \frac{1}{\eps}\mathcal{L}_{\mathrm{fast}} + \mathcal{L}_{\mathrm{slow}}, \qquad 0 < \eps \ll 1,
    \label{eq:singular_perturbation}
\end{equation}
where $\mathcal{L}_{\mathrm{fast}}$ generates rapid dissipative relaxation on time scale $\mathcal{O}(\eps)$ and $\mathcal{L}_{\mathrm{slow}}$ governs slow coherent or dissipative dynamics on time scale $\mathcal{O}(1)$. Standard quantum simulation algorithms based on Trotter--Suzuki product formulas~\cite{Childs,Berry} require time steps $\Delta t = \mathcal{O}(\eps)$ to resolve the fast scale, leading to gate complexity $\Omega(T/\eps)$ even when observables of interest evolve slowly. This stiffness penalty can render simulations infeasible on near-term hardware.

Our central insight is that the language of AP schemes, originally developed for classical kinetic and fluid equations, can be faithfully translated into quantum information theory by:
\begin{enumerate}
    \item Replacing function-space norms with the diamond norm to bound worst-case distinguishability of quantum channels,
    \item Interpreting asymptotic consistency as convergence of quantum instruments in the limit $\eps \to 0$,
    \item Encoding scale separation via spectral gaps of Lindbladian generators and centering conditions,
    \item Accounting for quantum resource costs (gate counts, coherence times, measurement overhead) in complexity bounds.
\end{enumerate}

The mathematical core of our framework is Theorem~\ref{thm:ap_commutative}, a commutative diagram theorem establishing uniform convergence of layered quantum protocols to limiting slow dynamics under explicit spectral conditions. We further prove Theorem~\ref{thm:complexity_rigorous}, which precisely characterizes when time-scale hierarchy yields computational advantage: superlinear savings arise if and only if fast dynamics are resolved via native analog evolution or analytic dimensionality reduction.\vskip 5mm

Our contributions are threefold:
\begin{enumerate}
    \item \textbf{Formalization}: We define asymptotic-preserving quantum simulation (Definition~\ref{def:ap_quantum_rigorous}) through explicit diamond-norm error bounds uniform in $\eps$, $\Delta t$, and target accuracy $\delta$, establishing consistency, asymptotic consistency, and resource uniformity.
    
    \item \textbf{Commutative Diagram Theorem}: We prove Theorem~\ref{thm:ap_commutative}, showing that layered protocols satisfy the AP property with error $\mathcal{O}(\eps\Delta t + \Delta t^2)$ under spectral gap, primitivity, and centering conditions. The proof utilizes Duhamel's principle and spectral projection estimates to establish uniform-in-$\eps$ bounds.
    
    \item \textbf{Resource Characterization}: We establish Theorem~\ref{thm:complexity_rigorous}, demonstrating that genuine superlinear gate-count savings require either (i) hardware-native analog evolution of fast dynamics, or (ii) analytic adiabatic elimination reducing effective dimension \cite{Brion}; purely digital implementations yield at most constant-factor improvement.
\end{enumerate}

This work establishes a mathematically precise foundation for adapting classical asymptotic-preserving techniques to quantum simulation of multiscale physical systems. By precisely characterizing when and why time-scale hierarchy provides computational advantage, we guide the co-design of quantum hardware and multiscale algorithms for problems in quantum chemistry, condensed matter physics, and kinetic theory.

\section{Mathematical Preliminaries}
\label{sec:preliminaries}

We establish the operator-theoretic framework required for rigorous analysis of asymptotic-preserving quantum simulation. Throughout, $\cH$ denotes a finite-dimensional complex Hilbert space with $\dim\cH = d < \infty$.

\subsection{Operator Spaces and Quantum Channels}
\label{subsec:operator_spaces}

Let $\cB(\cH)$ denote the Banach space of bounded linear operators on $\cH$ equipped with the operator norm
\begin{equation}
\norm{X} := \sup_{\norm{\psi}=1} \norm{X\psi}, \qquad X \in \cB(\cH).
\end{equation}
Let $\cT(\cH) \subset \cB(\cH)$ denote the subspace of trace-class operators equipped with the trace norm
\begin{equation}
\norm{X}_1 := \tr\sqrt{X^\dagger X}, \qquad X \in \cT(\cH).
\end{equation}
The set of density operators is $\cD(\cH) := \{\rho \in \cT(\cH) : \rho \geq 0,\ \tr\rho = 1\}$.

\begin{definition}[Completely Positive Trace-Preserving Map]
\label{def:cptp}
A linear map $\Phi : \cT(\cH) \to \cT(\cH)$ is completely positive trace-preserving (CPTP) if:
\begin{enumerate}
    \item $\Phi$ is trace-preserving: $\tr[\Phi(X)] = \tr[X]$ for all $X \in \cT(\cH)$,
    \item $\Phi$ is completely positive: for all $n \in \N$, the map $\Phi \otimes \id_n : \cT(\cH \otimes \C^n) \to \cT(\cH \otimes \C^n)$ is positive, where $\id_n$ is the identity map on $\cB(\C^n)$.
\end{enumerate}
\end{definition}

By the Stinespring dilation theorem~\cite{Stinespring}, any CPTP map $\Phi$ admits a representation
\begin{equation}
\Phi(X) = \tr_{\cE}[V X V^\dagger], \qquad X \in \cT(\cH),
\end{equation}
where $\cE$ is an auxiliary Hilbert space and $V : \cH \to \cH \otimes \cE$ is an isometry.

\begin{definition}[Diamond Norm]
\label{def:diamond_norm}
The \emph{diamond norm} of a linear map $\Phi : \cT(\cH) \to \cT(\cH)$ is
\begin{equation}
\dnorm{\Phi} := \sup_{n \geq 1} \sup_{\norm{X}_1 \leq 1} \norm{(\Phi \otimes \id_n)(X)}_1,
\end{equation}
where the supremum is over all $X \in \cT(\cH \otimes \C^n)$.
\end{definition}

The diamond norm satisfies the following properties~\cite{Paulsen, Watrous}:
\begin{enumerate}
    \item Submultiplicativity: $\dnorm{\Phi \circ \Psi} \leq \dnorm{\Phi} \cdot \dnorm{\Psi}$,
    \item Stability: $\dnorm{\Phi \otimes \id_n} = \dnorm{\Phi}$ for all $n \in \N$,
    \item Distinguishability: For CPTP maps $\Phi, \Psi$, $\dnorm{\Phi - \Psi}$ equals the maximum probability of distinguishing $\Phi$ from $\Psi$ using any input state (including entangled ancillas) and any measurement.
\end{enumerate}

\begin{remark}[Norm relations]
For any linear map $\Phi : \cT(\cH) \to \cT(\cH)$, we have the norm inequalities
\begin{equation}
\norm{\Phi} \leq \dnorm{\Phi} \leq d \cdot \norm{\Phi},
\end{equation}
where $\norm{\Phi}$ is the operator norm induced by the trace norm on $\cT(\cH)$. For primitive CPTP maps with unique steady state, exponential convergence in operator norm implies exponential convergence in diamond norm with the same rate.
\end{remark}

\subsection{Lindbladian Semigroups and Generators}
\label{subsec:lindbladians}

\begin{definition}[CPTP Semigroup]
\label{def:cptp_semigroup}
A family $\{\Phi_t\}_{t \geq 0}$ of CPTP maps on $\cT(\cH)$ is a \emph{CPTP semigroup} if:
\begin{enumerate}
    \item $\Phi_0 = \id$ and $\Phi_{t+s} = \Phi_t \circ \Phi_s$ for all $t,s \geq 0$,
    \item $\lim_{t \to 0^+} \norm{\Phi_t(\rho) - \rho}_1 = 0$ for all $\rho \in \cT(\cH)$ (strong continuity).
\end{enumerate}
\end{definition}

By the Gorini--Kossakowski--Sudarshan--Lindblad (GKSL) theorem~\cite{Gorini,Lindblad}, the generator $\cL$ of a CPTP semigroup on finite-dimensional $\cH$ admits the representation
\begin{equation}
\cL(\rho) = -i[H,\rho] + \sum_{k=1}^m \left( L_k \rho L_k^\dagger - \frac{1}{2}\acomm{L_k^\dagger L_k}{\rho} \right),
\label{eq:lindblad_form}
\end{equation}
where $H = H^\dagger \in \cB(\cH)$ is the Hamiltonian and $\{L_k\}_{k=1}^m \subset \cB(\cH)$ are Lindblad operators. The semigroup is given by $\Phi_t = \exp(t\cL)$, where the exponential is defined via the power series in the Banach algebra of bounded linear maps on $\cT(\cH)$.

\begin{lemma}[Spectral properties of Lindbladians]
\label{lem:lindblad_spectrum}
Let $\cL$ be a Lindbladian generator on finite-dimensional $\cH$. Then:
\begin{enumerate}
    \item $\spec(\cL) \subset \{z \in \C : \re z \leq 0\}$,
    \item $0 \in \spec(\cL)$ with eigenvector any steady state $\rho_\star$ satisfying $\cL(\rho_\star) = 0$,
    \item If $\cL$ has a unique steady state $\rho_\star$, then the spectral projection $\cP$ onto $\ker\cL$ is given by $\cP(X) = \tr[X]\rho_\star$.
\end{enumerate}
\end{lemma}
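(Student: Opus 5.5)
The plan is to derive all three items from two facts: each $\Phi_t=\exp(t\cL)$ is CPTP, hence a contraction in trace norm, and $\cL$ preserves Hermiticity. Since $\cH$ is finite-dimensional, $\cL$ is just a matrix acting on $\cT(\cH)\cong\C^{d^2}$, so every spectral statement reduces to linear algebra combined with the boundedness $\norm{\Phi_t}\le 1$ for all $t\ge0$.

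\textbf{Item 1.} Suppose $\cL X=\lambda X$ with $X\neq0$. Then $\Phi_t X=e^{t\lambda}X$, so $\norm{\Phi_t X}_1=e^{t\re\lambda}\norm{X}_1$. A positive map that preserves trace satisfies $\norm{\Phi(Y)}_1\le\norm{Y}_1$ for Hermitian $Y$: split $Y=Y_+-Y_-$ into positive and negative parts. The general case follows with at most a constant factor, after splitting $Y$ into Hermitian and anti-Hermitian parts. This gives $\sup_t\norm{\Phi_t}<\infty$, which rules out $\re\lambda>0$. The same boundedness rules out nontrivial Jordan blocks for eigenvalues on the imaginary axis, since such a block would make $\norm{\Phi_t}$ grow like $t$. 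I will record this semisimplicity now, because item 3 needs it.

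\textbf{Item 2.} Trace preservation gives $\tr[\cL(X)]=\frac{d}{dt}\tr[\Phi_tX]\big|_{t=0}=0$ for every $X$. Hence $\ran\cL\subseteq\{\tr=0\}$ is a proper subspace, so $\cL$ is not invertible and $0\in\spec(\cL)$. For the existence of a steady state, consider the Cesàro averages $\cP_T:=\frac1T\int_0^T\Phi_t\,dt$. On the Jordan decomposition these converge as $T\to\infty$: the modes with $\re\lambda<0$ decay, the purely imaginary modes with $\lambda\neq0$ average to zero, and $\lambda=0$ is semisimple. The limit $\cP$ is therefore the Riesz spectral projection onto $\ker\cL$. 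As a limit of CPTP maps, $\cP$ is positive and trace preserving, so $\cP\rho\in\cD(\cH)$ for any state $\rho$. Moreover $\Phi_s\cP=\cP$, so $\cL\cP=0$, and $\cP\rho$ is a steady state $\rho_\star$ spanning part of $\ker\cL$.

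\textbf{Item 3.} This is the main point: the projection has to be identified as a map, not only its range. By item 2, $\cP$ maps $\cD(\cH)$ into the set of steady states in $\cD(\cH)$. Under uniqueness, this forces $\cP(\rho)=\rho_\star$ for every density operator $\rho$. Every $X\in\cT(\cH)$ is a linear combination of at most four density operators (Hermitian and anti-Hermitian parts, each split into positive and negative parts), and $\cP$ is linear. Hence $\cP(X)=\tr[X]\,\rho_\star$.

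The subtle step is making sure the Riesz projection really coincides with the Cesàro limit. This is exactly what the semisimplicity of the peripheral spectrum from item 1 provides. As a byproduct, the argument shows $\ker\cL=\ran\cP=\C\rho_\star$, so $\ker\cL$ is one-dimensional.
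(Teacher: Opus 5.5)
Your proof is correct. It follows a different route from the paper's on items 1 and 3. Item 2 is the same trace-preservation argument, and you add the existence of a steady state in $\cD(\cH)$.

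For item 1, the paper argues from monotonicity of the purity $\tr[\rho(t)^2]$, while you use that $\{e^{t\cL}\}_{t\ge0}$ is uniformly bounded in trace norm because each $e^{t\cL}$ is CPTP. Your argument is the one that holds for every Lindbladian. Purity is non-increasing only for unital generators: amplitude damping drives $\id/2$ to $\ketbra{0}{0}$. So the paper's step is valid only in that special case. Your bound also yields semisimplicity of the eigenvalues on the imaginary axis. That is what guarantees the Riesz projection at $0$ projects onto $\ker\cL$ rather than onto a larger generalized eigenspace.

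For item 3, the paper invokes Perron--Frobenius theory to get algebraic simplicity of the eigenvalue $0$. It then reads off the projection from the dual side: trace preservation gives $\cL^*(\id)=0$, so the trace functional is the left eigenvector, and the rank-one Riesz projection must be $X\mapsto\tr[X]\rho_\star$. You instead realize $\cP$ as the Ces\`aro limit of $e^{t\cL}$, which makes $\cP$ positive and trace-preserving, and then use that density operators span $\cT(\cH)$.

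The paper's route is shorter once simplicity is granted. However, it hands the nontrivial facts to a citation: semisimplicity of $0$, and that uniqueness among density operators forces $\dim\ker\cL=1$. Your argument proves both directly from contractivity and positivity, so it is self-contained.
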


\begin{proof}
Part (1) follows from the dissipativity of Lindbladians: for any $\rho \in \cD(\cH)$, $\frac{d}{dt}\tr[\rho(t)^2] \leq 0$ along solutions of $\dot\rho = \cL(\rho)$, implying non-positive real parts of eigenvalues. Part (2) is immediate from trace preservation: $\tr[\cL(\rho)] = 0$ for all $\rho$. Part (3) follows from the Perron--Frobenius theory for positive maps~\cite{Evans}: uniqueness of the steady state implies simplicity of the eigenvalue $0$, and the projection formula follows from the dual action on observables.
\end{proof}

\subsection{Singular Perturbations of Lindbladians}
\label{subsec:singular_perturbations}

We consider families of Lindbladian generators parameterized by $\eps > 0$:
\begin{equation}
\cL_\eps = \frac{1}{\eps}\cL_{\mathrm{fast}} + \cL_{\mathrm{slow}},
\label{eq:lindblad_family}
\end{equation}
where $\cL_{\mathrm{fast}}$ and $\cL_{\mathrm{slow}}$ are themselves Lindbladian generators. The asymptotic analysis of $\exp(t\cL_\eps)$ as $\eps \to 0$ relies on the spectral structure of $\cL_{\mathrm{fast}}$.

\begin{assumption}[Spectral gap, primitivity, and centering condition]
\label{ass:spectral_gap_rigorous}
The fast generator $\cL_{\mathrm{fast}}$ satisfies:
\begin{enumerate}
    \item \textbf{Unique steady state}: There exists a unique $\rho_{\mathrm{fast}}^\star \in \cD(\cH)$ such that $\cL_{\mathrm{fast}}(\rho_{\mathrm{fast}}^\star) = 0$.
    
    \item \textbf{Spectral gap}: The spectrum satisfies
    \begin{equation}
    \lambda_{\mathrm{gap}} := \min\{ |\re\lambda| : \lambda \in \spec(\cL_{\mathrm{fast}}) \setminus \{0\} \} > 0.
    \end{equation}
    
    \item \textbf{Primitivity}: The semigroup $\exp(t\cL_{\mathrm{fast}})$ is primitive: there exists $t_0 > 0$ such that $\exp(t_0\cL_{\mathrm{fast}})$ is strictly positive (maps every nonzero $\rho \geq 0$ to a full-rank state), implying a unique full-rank steady state.
    
    \item \textbf{Centering Condition}: The slow generator satisfies the vanishing first-order coupling condition on the fast steady state manifold:
    \begin{equation}
    \cP \cL_{\mathrm{slow}} (\id - \cP) = 0 \quad \text{or equivalently} \quad \cP \cL_{\mathrm{slow}} = \cP \cL_{\mathrm{slow}} \cP.
    \label{eq:centering_condition}
    \end{equation}
    This ensures that the leading-order correction to the effective dynamics arises from the second-order Schur complement~\eqref{eq:effective_generator_corrected}.
    More precisely, for the standard adiabatic elimination to yield a valid Lindbladian at second order, we assume the \emph{zero-mean condition} relative to the fast steady state:
    \begin{equation}
    \tr_{\mathrm{fast}} [ \cL_{\mathrm{slow}} (\rho_{\mathrm{fast}}^\star \otimes \rho_{\mathrm{slow}}) ] = \cL_{\mathrm{eff}}^{(1)} (\rho_{\mathrm{slow}}),
    \end{equation}
    where typically $\cL_{\mathrm{eff}}^{(1)} = 0$ for interaction Hamiltonians that couple orthogonal subspaces (e.g., Jaynes-Cummings type). In the general case, the effective generator is determined by the Schur complement.
\end{enumerate}
\end{assumption}

\begin{remark}[Primitivity and diamond-norm convergence]\label{rem:diamond_convergence}
Assumption~\ref{ass:spectral_gap_rigorous}(3) ensures exponential convergence in diamond norm: there exists $C > 0$ such that
\begin{equation}
\dnorm{\exp(t\cL_{\mathrm{fast}}) - \cP} \leq C e^{-\lambda_{\mathrm{gap}} t}, \qquad t \geq 0,
\label{eq:diamond_convergence}
\end{equation}
where $\lambda_{\mathrm{gap}}$ is the spectral gap from (2). This follows from the spectral gap of the primitive map and the equivalence of norms on finite-dimensional spaces.
\end{remark}

Under Assumption~\ref{ass:spectral_gap_rigorous}, singular perturbation theory for Lindbladians~\cite{Davies,Kato,Ticozzi} guarantees convergence of the solution $\rho^\eps(t) = \exp(t\cL_\eps)(\rho_0)$ as $\eps \to 0$ to $\rho^0(t) = \exp(t\cL_{\mathrm{slow}}^{\mathrm{eff}})(\cP\rho_0)$, where $\cP$ is the spectral projection onto $\ker\cL_{\mathrm{fast}}$ and the effective generator is given by the second-order Schur complement formula:
\begin{equation}
\cL_{\mathrm{slow}}^{\mathrm{eff}} = \cP\cL_{\mathrm{slow}}\cP - \cP\cL_{\mathrm{slow}}\cL_{\mathrm{fast}}^+(\id - \cP)\cL_{\mathrm{slow}}\cP,
\label{eq:effective_generator_corrected}
\end{equation}
with $\cL_{\mathrm{fast}}^+$ the Drazin inverse of $\cL_{\mathrm{fast}}$ on $\ran(\id - \cP)$. Note that for Lindbladian generators, the Schur complement structure preserves complete positivity to second order under additional coupling constraints; see \cite{Ticozzi} for sufficient conditions ensuring that $\cL_{\mathrm{slow}}^{\mathrm{eff}}$ generates a CPTP semigroup.

\begin{lemma}[Resolvent estimate for $\cL_{\mathrm{fast}}$]
\label{lem:resolvent_estimate}
Under Assumption~\ref{ass:spectral_gap_rigorous}, the resolvent of $\cL_{\mathrm{fast}}$ satisfies
\begin{equation}
\norm{(\zeta - \cL_{\mathrm{fast}})^{-1}(\id - \cP)} \leq \frac{1}{\dist(\zeta, \spec(\cL_{\mathrm{fast}})\setminus\{0\})} \leq \frac{1}{\lambda_{\mathrm{gap}} - |\re\zeta|}
\end{equation}
for all $\zeta \in \C$ with $\re\zeta > -\lambda_{\mathrm{gap}}$.
\end{lemma}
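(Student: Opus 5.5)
The plan is to split the claim into its two inequalities. The second one is elementary and holds in full generality. The first one is where the real content lies: I will prove it by spectral calculus for an operator that is normal on $\ran(\id-\cP)$, with respect to an inner product whose induced norm is the norm used in the statement. This matters because, for a non-normal operator, the resolvent norm only satisfies the \emph{reverse} inequality $\norm{(\zeta-A)^{-1}}\ge 1/\dist(\zeta,\spec A)$. So equality-type control with constant $1$ forces a normality mechanism, and I would make that mechanism explicit rather than hide it in a constant.

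\textbf{Step 1 (invariance and restriction).} By Lemma~\ref{lem:lindblad_spectrum}(3) and Assumption~\ref{ass:spectral_gap_rigorous}(1), $\cP$ is the Riesz projection of $\cL_{\mathrm{fast}}$ at the simple eigenvalue $0$. Hence $\cP$ commutes with $\cL_{\mathrm{fast}}$, and $\mathcal{Q}:=\id-\cP$ projects onto an invariant complement. Set $A:=\cL_{\mathrm{fast}}|_{\ran\mathcal{Q}}$. Then $\spec(A)=\spec(\cL_{\mathrm{fast}})\setminus\{0\}$, and for every $\zeta\notin\spec A$,
\[
(\zeta-\cL_{\mathrm{fast}})^{-1}\mathcal{Q}=(\zeta-A)^{-1}\mathcal{Q}.
\]
This identity holds even at $\zeta=0$, since the right-hand side only involves $A$.

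\textbf{Step 2 (first inequality).} Equip $\ran\mathcal{Q}$ with an inner product $\langle\cdot,\cdot\rangle$ in which $A$ is normal. The natural candidate is the GNS/KMS inner product weighted by the full-rank state $\rho^\star_{\mathrm{fast}}$, which exists by primitivity (Assumption~\ref{ass:spectral_gap_rigorous}(3)). It is the standard setting in which a Lindbladian obeying quantum detailed balance becomes self-adjoint. I read the operator norm in the lemma as the norm induced by this inner product, with $\mathcal{Q}$ taken as an orthogonal projection in it.

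With $A$ normal, the spectral theorem gives $A=\sum_j\lambda_j\Pi_j$ with orthogonal projections $\Pi_j$, so that
\[
\norm{(\zeta-A)^{-1}}=\max_j|\zeta-\lambda_j|^{-1}=\dist\bigl(\zeta,\spec(\cL_{\mathrm{fast}})\setminus\{0\}\bigr)^{-1}.
\]
Since $\norm{\mathcal{Q}}=1$ for an orthogonal projection, this yields the first inequality.

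\textbf{Step 3 (second inequality).} This step is pure geometry. Every $\lambda\in\spec(\cL_{\mathrm{fast}})\setminus\{0\}$ satisfies $\re\lambda\le-\lambda_{\mathrm{gap}}$, by Lemma~\ref{lem:lindblad_spectrum}(1) together with the definition of $\lambda_{\mathrm{gap}}$. Therefore
\[
|\zeta-\lambda|\ge\re\zeta-\re\lambda\ge\re\zeta+\lambda_{\mathrm{gap}}\ge\lambda_{\mathrm{gap}}-|\re\zeta|.
\]
The last quantity is positive exactly when $|\re\zeta|<\lambda_{\mathrm{gap}}$. Hence $\dist(\zeta,\spec\setminus\{0\})\ge\lambda_{\mathrm{gap}}-|\re\zeta|$, and taking reciprocals gives the stated chain on that range. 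The inequality in the middle, $\ge\re\zeta+\lambda_{\mathrm{gap}}$, is in fact the sharper bound and covers the whole half-plane $\re\zeta>-\lambda_{\mathrm{gap}}$.

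\textbf{Main obstacle.} The hard step is Step 2: justifying that $A$ is normal in a norm compatible with the one in the statement. My first attempt would use the KMS-detailed-balance structure of $\cL_{\mathrm{fast}}$ with respect to $\rho^\star_{\mathrm{fast}}$.

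Without that structure, the best route is a similarity transform $A=S D S^{-1}$. This only gives the bound multiplied by $\kappa(S)=\norm{S}\norm{S^{-1}}$. In that case the constant-$1$ statement must be understood in the transformed norm, and transporting it back to the trace norm costs a factor $\kappa(S)$. The transport factor $d$ from the norm-relations remark plays the analogous role for the diamond norm.
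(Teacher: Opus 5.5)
Step~2 is a genuine gap, and it cannot be closed. It imports two things the statement does not give you. First, it assumes a quantum detailed-balance (KMS) hypothesis, which Assumption~\ref{ass:spectral_gap_rigorous} does not contain. Second, it changes the norm, whereas the paper's $\norm{\cdot}$ on superoperators is the operator norm induced by the trace norm (see the norm-relations remark).

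In that norm the first inequality is false under the stated hypotheses, even for a generator that is as normal as possible. Take $\cH=\C^d$ with $d\ge 3$, $\omega$ the maximally mixed state, $\cL_{\mathrm{fast}}(X)=\tr[X]\,\omega-X$ (completely depolarizing channel minus the identity), and $\cL_{\mathrm{slow}}=0$. This generator:
\begin{enumerate}
\item has unique full-rank steady state $\omega$, spectrum $\{0,-1\}$ and $\lambda_{\mathrm{gap}}=1$;
\item is primitive;
\item is self-adjoint for the Hilbert--Schmidt inner product, which is proportional to the KMS inner product for $\omega$;
\item has $\cP(X)=\tr[X]\,\omega$.
\end{enumerate}
For $\zeta\neq 0$ with $\re\zeta>-1$ one gets $(\zeta-\cL_{\mathrm{fast}})^{-1}(\id-\cP)=(\zeta+1)^{-1}(\id-\cP)$. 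The input $X=\ketbra{0}{0}$ gives $\norm{X-\omega}_1=2(1-1/d)>1$, so the left-hand side strictly exceeds $1/|\zeta+1|=1/\dist(\zeta,\{-1\})$. The obstruction is that $\id-\cP$ is not a trace-norm contraction, which no normality mechanism addresses. Moreover, for general primitive generators the restriction $A$ need not be diagonalizable: a resonantly driven, damped qubit has a Liouvillian exceptional point with a $2\times 2$ Jordan block on $\ran(\id-\cP)$. So your similarity fallback is not merely lossy but unavailable.

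The paper's own proof makes exactly the error you flagged. It asserts, ``by the spectral mapping theorem,'' that the resolvent norm is bounded \emph{above} by the inverse distance to the spectrum. That is a property of normal operators in a Hilbert-space norm with $\cP$ orthogonal, not of general generators in the trace norm. The same example also refutes its claim $\norm{\cL_{\mathrm{fast}}^+}\le 1/\lambda_{\mathrm{gap}}$, since there $\cL_{\mathrm{fast}}^+=-(\id-\cP)$.

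Your Step~3 is correct. This includes your observations that the second inequality requires $|\re\zeta|<\lambda_{\mathrm{gap}}$ and that $\re\zeta+\lambda_{\mathrm{gap}}$ is the right denominator on the whole half-plane; the paper's stated range is too large.

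What is true in the trace norm, and what the downstream Drazin-inverse estimates actually need, is a bound with a constant:
\begin{enumerate}
\item $e^{t\cL_{\mathrm{fast}}}(\id-\cP)=e^{t\cL_{\mathrm{fast}}}-\cP$.
\item $\norm{e^{t\cL_{\mathrm{fast}}}-\cP}\le C_{\lambda'}e^{-\lambda' t}$ for every $\lambda'<\lambda_{\mathrm{gap}}$, and also for $\lambda'=\lambda_{\mathrm{gap}}$ when the eigenvalues on $\re\lambda=-\lambda_{\mathrm{gap}}$ are semisimple.
\item The Laplace representation $(\zeta-\cL_{\mathrm{fast}})^{-1}(\id-\cP)=\int_0^\infty e^{-\zeta t}\bigl(e^{t\cL_{\mathrm{fast}}}-\cP\bigr)\,dt$ then gives $\norm{(\zeta-\cL_{\mathrm{fast}})^{-1}(\id-\cP)}\le C_{\lambda'}/(\re\zeta+\lambda')$ for $\re\zeta>-\lambda'$.
\end{enumerate}
Here $C_{\lambda'}$ is in general larger than $1$. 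Your KMS computation is a correct proof of the constant-one bound only under the additional hypothesis of KMS detailed balance and in the KMS-induced norm, and should be stated as such.
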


\begin{proof}
By the spectral mapping theorem for bounded operators, the resolvent norm is bounded by the inverse distance to the spectrum. The spectral gap condition ensures $\dist(\zeta, \spec(\cL_{\mathrm{fast}})\setminus\{0\}) \geq \lambda_{\mathrm{gap}} - |\re\zeta|$ for $\re\zeta > -\lambda_{\mathrm{gap}}$. The projection $\id - \cP$ restricts to the invariant subspace where $\cL_{\mathrm{fast}}$ is invertible, so the Drazin inverse $\cL_{\mathrm{fast}}^+ = -\int_0^\infty e^{t\cL_{\mathrm{fast}}}(\id - \cP)\,dt$ is well-defined with norm $\norm{\cL_{\mathrm{fast}}^+} \leq 1/\lambda_{\mathrm{gap}}$.
\end{proof}

\subsection{Trotter Error Analysis}
\label{subsec:trotter_error}

\begin{lemma}[Trotter error bound for Lindbladians]
\label{lem:trotter_error}
Let $\cL = \sum_{j=1}^m \cL_j$ be a decomposition of a Lindbladian generator into Lindbladian summands. The first-order Trotter product formula satisfies
\begin{equation}
\dnorm{\exp(\Delta t\cL) - \prod_{j=1}^m \exp(\Delta t\cL_j)} \leq \frac{(\Delta t)^2}{2} \sum_{1 \leq j < k \leq m} \dnorm{\comm{\cL_j}{\cL_k}} + \mathcal{O}((\Delta t)^3),
\end{equation}
where $\comm{\cL_j}{\cL_k} = \cL_j \circ \cL_k - \cL_k \circ \cL_j$ is the commutator of superoperators.
\end{lemma}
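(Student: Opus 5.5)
We expand both sides in powers of $\Delta t$ and identify the second-order discrepancy. In finite dimension all superoperators are bounded, so $\exp(\Delta t\cL)$ and each $\exp(\Delta t\cL_j)$ are given by absolutely convergent power series in the Banach algebra of maps on $\cT(\cH)$. We then bound remainders using submultiplicativity of $\dnorm{\cdot}$ together with the triangle inequality. The plan has three steps.

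\emph{Step 1 (expansion of the exact flow).} Write
\begin{equation}
\exp(\Delta t\cL) = \id + \Delta t\sum_j \cL_j + \frac{(\Delta t)^2}{2}\Big(\sum_j \cL_j\Big)^2 + R_1,
\end{equation}
with $\dnorm{R_1}\le \sum_{n\ge 3}\frac{(\Delta t)^n}{n!}\big(\sum_j\dnorm{\cL_j}\big)^n = \mathcal{O}((\Delta t)^3)$.

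\emph{Step 2 (expansion of the product).} Expand each factor to second order, $\exp(\Delta t\cL_j)=\id+\Delta t\cL_j+\frac{(\Delta t)^2}{2}\cL_j^2+\mathcal{O}((\Delta t)^3)$, and multiply in the given order $j=1,\dots,m$. Collecting terms gives
\begin{equation}
\prod_{j}\exp(\Delta t\cL_j) = \id+\Delta t\sum_j\cL_j+\frac{(\Delta t)^2}{2}\sum_j\cL_j^2+(\Delta t)^2\sum_{j<k}\cL_j\cL_k + R_2 .
\end{equation}
Here $\dnorm{R_2}=\mathcal{O}((\Delta t)^3)$, with a constant depending only on $\Lambda:=\sum_j\dnorm{\cL_j}$ and on $m$; for instance, it is bounded by the cubic tail of $\prod_j e^{\Delta t\dnorm{\cL_j}}$.

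\emph{Step 3 (subtract and bound).} Use $(\sum_j\cL_j)^2=\sum_j\cL_j^2+\sum_{j<k}(\cL_j\cL_k+\cL_k\cL_j)$. The zeroth- and first-order terms cancel, and the second-order difference is
\begin{equation}
\frac{(\Delta t)^2}{2}\sum_{j<k}\big(\cL_k\cL_j-\cL_j\cL_k\big) = -\frac{(\Delta t)^2}{2}\sum_{j<k}\comm{\cL_j}{\cL_k}.
\end{equation}
Applying the triangle inequality in $\dnorm{\cdot}$ gives the stated leading term, and $R_1-R_2$ contributes $\mathcal{O}((\Delta t)^3)$. The sign in front of the commutators is irrelevant once we take norms, and the ordering convention only affects that sign.

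The only step that needs care is Step 2, namely the bookkeeping of the remainder when multiplying $m$ truncated series. The cleanest route is induction on $m$: write $P_m=P_{m-1}\exp(\Delta t\cL_m)$ and track the second-order coefficient. At each stage the cubic remainder is controlled by submultiplicativity and $\dnorm{\exp(\Delta t\cL_j)}\le e^{\Delta t\dnorm{\cL_j}}$. This yields an explicit bound on the $\mathcal{O}((\Delta t)^3)$ term, for example $\frac{(\Delta t)^3}{3}\Lambda^3e^{\Delta t\Lambda}$ up to a constant. If a sharper statement is wanted, the alternative is the integral (Duhamel) representation of the local error. That representation also shows that the constant does not depend on the Lindblad structure beyond boundedness of the $\cL_j$. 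Note also that no CPTP property is needed for the bound itself; it only ensures that the individual factors are contractions.
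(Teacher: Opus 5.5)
Your proof is correct, but it takes a different and more elementary route than the paper. The paper uses the Baker--Campbell--Hausdorff series: it writes $\log(e^{\Delta t\cL_1}e^{\Delta t\cL_2})=\Delta t(\cL_1+\cL_2)+\frac{(\Delta t)^2}{2}\comm{\cL_1}{\cL_2}+\mathcal{O}((\Delta t)^3)$, re-exponentiates, and inducts on $m$. You instead match the second-order Taylor coefficients of $e^{\Delta t\cL}$ and of the full product directly, for all $m$ at once, and bound both tails by the cubic tail of $e^{\Delta t\Lambda}$.

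Your route has three advantages. It needs no logarithm, whereas the BCH series converges only for $\Delta t\Lambda<\log 2$. It avoids the re-exponentiation step, where one must control $e^{X+Y}-e^{X}$ and where the paper falls back on its contractivity remark. It gives an explicit remainder of at most $\frac13(\Delta t\Lambda)^3e^{\Delta t\Lambda}$, which exposes the $\Lambda^3$ dependence the paper leaves implicit. That dependence matters in the stiff regime, where $\Lambda$ contains $\eps^{-1}$. The BCH route buys structure instead: it exhibits the effective generator $\cL+\frac{\Delta t}{2}\sum_{j<k}\comm{\cL_j}{\cL_k}+\cdots$ of the product formula, the natural starting point for higher-order Suzuki formulas.

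One correction to your closing remarks concerns the Duhamel representation
\[
e^{\Delta t\cL_1}e^{\Delta t\cL_2}-e^{\Delta t(\cL_1+\cL_2)}=\int_0^{\Delta t}e^{(\Delta t-s)(\cL_1+\cL_2)}\Bigl(\int_0^{s}e^{(s-u)\cL_1}\comm{\cL_1}{\cL_2}e^{u\cL_1}\,du\Bigr)e^{s\cL_2}\,ds .
\]
Every exponential here is CPTP and so has diamond norm $1$. The error is therefore at most $\frac{(\Delta t)^2}{2}\dnorm{\comm{\cL_1}{\cL_2}}$, with no cubic remainder at all. Induction on $m$, using that $\sum_{j\ge 2}\cL_j$ is again a Lindbladian, extends this to the full lemma. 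So in that representation the Lindblad structure is exactly what removes the $\Lambda$-dependent remainder, not something the constant is independent of. This is the rigorous form of the paper's remark that the exponential prefactor equals $1$.
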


\begin{proof}
The proof follows from the Baker--Campbell--Hausdorff (BCH) formula for bounded operators on the Banach space $\cB(\cT(\cH))$. For two generators $\cL_1, \cL_2$, the BCH formula gives
\begin{equation}
\log(\exp(\Delta t\cL_1)\exp(\Delta t\cL_2)) = \Delta t(\cL_1 + \cL_2) + \frac{(\Delta t)^2}{2}\comm{\cL_1}{\cL_2} + \mathcal{O}((\Delta t)^3).
\end{equation}
Exponentiating and using the submultiplicativity of the diamond norm~\cite{Watrous} yields the two-term bound. The general case follows by induction on $m$, noting that the leading error term accumulates additively over pairwise commutators. The $\mathcal{O}((\Delta t)^3)$ remainder is uniform in the operators' norms by analyticity of the exponential map.

\begin{remark}[Exponential prefactor for CPTP maps]
In rigorous BCH-based derivations, the Trotter error bound typically includes a prefactor $e^{\Delta t \sum_j \norm{\cL_j}}$. However, for CPTP maps, each $\exp(\Delta t \cL_j)$ is a contraction in the diamond norm: $\dnorm{\exp(\Delta t \cL_j)} \leq 1$. Consequently, the exponential prefactor equals 1, and the stated bound holds without additional multiplicative factors. This observation simplifies the error analysis for quantum channels while preserving mathematical rigor.
\end{remark}
\end{proof}

For stiff generators of the form~\eqref{eq:lindblad_family}, the commutator of the scaled generators satisfies
\begin{equation}
\dnorm{\comm{\frac{1}{\eps}\cL_{\mathrm{fast}}}{\cL_{\mathrm{slow}}}} = \frac{1}{\eps} \dnorm{\comm{\cL_{\mathrm{fast}}}{\cL_{\mathrm{slow}}}},
\end{equation}
leading to the stiffness penalty $\Delta t = \mathcal{O}(\eps)$ for fixed error tolerance in standard Trotterization when $\dnorm{\comm{\cL_{\mathrm{fast}}}{\cL_{\mathrm{slow}}}} = \Theta(1)$.

\subsection{The Language of Asymptotic Preservation in Quantum Settings}
\label{subsec:ap_language}

The classical theory of AP schemes~\cite{Jin} is built upon a distinctive mathematical language: commutative diagrams expressing limiting procedures, uniform error bounds in appropriate function spaces, and resource estimates independent of scale-separation parameters. We now formalize how this language translates to quantum simulation.

\begin{definition}[AP Language Components for Quantum Channels]
\label{def:ap_language}
An asymptotic-preserving quantum simulation framework comprises:
\begin{enumerate}
    \item \textbf{Limiting diagram}: A commutative diagram in the category of CPTP maps with diamond-norm morphisms:
    \begin{equation}
    \begin{tikzcd}[row sep=large, column sep=large]
    \exp(\Delta t\cL_\eps) \arrow[r, "\text{discretize}"] \arrow[d, "\eps \to 0"'] & 
    \Psi_{\Delta t}^\eps \arrow[d, "\eps \to 0"] \\
    \exp(\Delta t\cL_{\mathrm{slow}}^{\mathrm{eff}}) \circ \cP \arrow[r, "\text{discretize}"'] & 
    \Psi_{\Delta t}^0
    \end{tikzcd}
    \label{eq:ap_diagram_quantum}
    \end{equation}
    
    \item \textbf{Uniform error metric}: Error bounds expressed in the diamond norm $\dnorm{\cdot}$, which controls worst-case distinguishability over all input states (including entangled ancillas) and is stable under tensoring with identity channels.
    
    \item \textbf{Scale-invariant resource accounting}: Complexity bounds $\cR(\Psi_{\Delta t}^\eps)$ that remain bounded as $\eps \to 0$ for fixed $\Delta t$ and target accuracy $\delta$, explicitly separating costs attributable to fast-scale resolution versus slow-scale dynamics.
    
    \item \textbf{Centering Condition}: The algebraic constraint ensuring that first-order couplings vanish or are accounted for in the effective generator, allowing the second-order Schur complement to dominate the slow dynamics.
\end{enumerate}
\end{definition}

\begin{remark}[Why diamond norm?]
The diamond norm is essential for quantum AP analysis because: (i) it is the operational metric for channel distinguishability in quantum information~\cite{Watrous}, (ii) it satisfies submultiplicativity and stability properties required for compositional error analysis, and (iii) exponential convergence in diamond norm follows from spectral gaps for primitive Lindbladians. Classical $L^p$ norms lack these properties for quantum channels.
\end{remark}

\begin{remark}[Commutative diagrams as design principle]
Diagram~\eqref{eq:ap_diagram_quantum} is not merely illustrative: it encodes the requirement that discretization and asymptotic limiting commute. This guides protocol design: any quantum algorithm claiming AP properties must explicitly verify that its $\eps \to 0$ limit reproduces a consistent discretization of the effective slow dynamics.
\end{remark}

\section{Asymptotic-Preserving Quantum Simulation: Rigorous Formulation}
\label{sec:ap_formulation}

We now formulate asymptotic-preserving quantum simulation with explicit error bounds uniform in the stiffness parameter $\eps$.

\subsection{Problem Statement and AP Requirements}
\label{subsec:ap_requirements_rigorous}

\begin{definition}[AP Quantum Simulation Protocol]
\label{def:ap_quantum_rigorous}
A discrete-time quantum simulation protocol $\{\Psi_{\Delta t}^\eps\}_{\Delta t > 0, \eps > 0}$, where each $\Psi_{\Delta t}^\eps : \cT(\cH) \to \cT(\cH)$ is a CPTP map, is \emph{asymptotic-preserving} for the family $\{\cL_\eps\}_{\eps > 0}$ if there exist constants $C_1, C_2, p, q > 0$ independent of $\eps$ and $\Delta t$ such that:
\begin{enumerate}
    \item \textbf{Consistency}: For fixed $\eps > 0$,
    \begin{equation}
    \dnorm{\Psi_{\Delta t}^\eps - \exp(\Delta t\cL_\eps)} \leq C_1 (\Delta t)^{p+1}.
    \label{eq:consistency_bound}
    \end{equation}
    
    \item \textbf{Asymptotic consistency}: As $\eps \to 0$ with fixed $\Delta t$,
    \begin{equation}
    \dnorm{\Psi_{\Delta t}^\eps - \exp(\Delta t\cL_{\mathrm{slow}}^{\mathrm{eff}}) \circ \cP} \leq C_2 (\eps(\Delta t)^q + (\Delta t)^{p+1}),
    \label{eq:asymptotic_consistency_bound}
    \end{equation}
    where $\cP$ is the spectral projection onto $\ker\cL_{\mathrm{fast}}$ and $\cL_{\mathrm{slow}}^{\mathrm{eff}}$ is given by~\eqref{eq:effective_generator_corrected}.
    
    \item \textbf{Resource uniformity}: The resource cost $\cR(\Psi_{\Delta t}^\eps)$ (gate count, measurement overhead, classical control time) satisfies
    \begin{equation}
    \cR(\Psi_{\Delta t}^\eps) \leq C_3 \cdot \poly(d, 1/\delta) \quad \text{for fixed } \Delta t, \delta > 0,
    \label{eq:resource_uniformity}
    \end{equation}
    where $d = \dim\cH$ and $\delta$ is the target accuracy in trace norm.
\end{enumerate}
\end{definition}

Definition~\ref{def:ap_quantum_rigorous} formalizes the AP property with explicit error bounds. Equation~\eqref{eq:consistency_bound} ensures the protocol accurately simulates the full stiff dynamics for fixed $\eps$. Equation~\eqref{eq:asymptotic_consistency_bound} is the key AP condition: as $\eps \to 0$, the protocol converges to a consistent discretization of the limiting slow dynamics, with error vanishing linearly in $\eps$ for fixed $\Delta t$. Equation~\eqref{eq:resource_uniformity} ensures computational advantage by bounding resource costs independently of $\eps$.

\subsection{Commutative Diagram Theorem}
\label{subsec:commutative_diagram}

The asymptotic-preserving property can be expressed through a commutative diagram in the category of CPTP maps with diamond-norm morphisms, as in~\eqref{eq:ap_diagram_quantum}.

\begin{theorem}[Commutative Diagram for AP Quantum Simulation]
\label{thm:ap_commutative}
Let $\{\cL_\eps\}_{\eps > 0}$ satisfy Assumption~\ref{ass:spectral_gap_rigorous}. Suppose the protocol $\Psi_{\Delta t}^\eps$ implements a layered strategy:
\begin{enumerate}
    \item \textbf{Fast layer}: Apply $N = \lceil \Delta t / (\eps\tau_{\mathrm{fast}}) \rceil$ substeps of a CPTP map $\Phi_{\mathrm{fast}}$ satisfying
    \begin{equation}
    \dnorm{\Phi_{\mathrm{fast}} - \exp((\eps\tau_{\mathrm{fast}})\cL_{\mathrm{fast}})} \leq C_{\mathrm{fast}} (\eps\tau_{\mathrm{fast}})^{r+1}.
    \end{equation}
    
    \item \textbf{Slow layer}: Apply a CPTP map $\Phi_{\mathrm{slow}}$ satisfying
    \begin{equation}
    \dnorm{\Phi_{\mathrm{slow}} - \exp(\Delta t\cL_{\mathrm{slow}})} \leq C_{\mathrm{slow}} (\Delta t)^{s+1}.
    \end{equation}
    
    \item \textbf{Composition}: $\Psi_{\Delta t}^\eps = \Phi_{\mathrm{slow}} \circ (\Phi_{\mathrm{fast}})^N$.
\end{enumerate}
Then $\Psi_{\Delta t}^\eps$ satisfies the AP property (Definition~\ref{def:ap_quantum_rigorous}) with explicit constants:
\begin{align}
C_1 &= \mathcal{O}\!\left(C_{\mathrm{fast}} + C_{\mathrm{slow}} + \frac{1}{\eps}\dnorm{\comm{\cL_{\mathrm{fast}}}{\cL_{\mathrm{slow}}}}\right), \quad p = \min(r, s), \label{eq:C1} \\
C_2 &= \mathcal{O}\!\left( \frac{C_{\mathrm{fast}}}{\lambda_{\mathrm{gap}}} + \dnorm{\cL_{\mathrm{slow}}^{\mathrm{eff}} - \cP\cL_{\mathrm{slow}}\cP} \right), \quad q = 1, \label{eq:C2} \\
C_3 &= \mathcal{O}\!\left(N \cdot \cR(\Phi_{\mathrm{fast}}) + \cR(\Phi_{\mathrm{slow}}) + \cR_{\mathrm{classical}}\right), \label{eq:C3}
\end{align}
where $\cR_{\mathrm{classical}}$ is the cost of computing effective parameters. Moreover, the diagram~\eqref{eq:ap_diagram_quantum} commutes in the sense that
\begin{equation}
\lim_{\eps \to 0} \dnorm{\Psi_{\Delta t}^\eps - \Psi_{\Delta t}^0} = 0 \quad \text{for fixed } \Delta t,
\end{equation}
with $\Psi_{\Delta t}^0 = \exp(\Delta t\cL_{\mathrm{slow}}^{\mathrm{eff}}) \circ \cP$.
\end{theorem}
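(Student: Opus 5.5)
We will prove the three properties separately. Each time we use one telescoping identity: for CPTP maps $A_i,B_i$,
\[
\prod_i A_i-\prod_i B_i=\sum_i \Bigl(\prod_{j>i}A_j\Bigr)(A_i-B_i)\Bigl(\prod_{j<i}B_j\Bigr).
\]
Submultiplicativity and $\dnorm{\cdot}\le 1$ on CPTP maps then reduce every composite error to a sum of single-step errors.

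\textbf{Fast layer.} Telescoping over the $N$ substeps gives
\[
\dnorm{\Phi_{\mathrm{fast}}^N-\exp(N\eps\tau_{\mathrm{fast}}\cL_{\mathrm{fast}})}\le N C_{\mathrm{fast}}(\eps\tau_{\mathrm{fast}})^{r+1}\lesssim C_{\mathrm{fast}}\,\Delta t\,(\eps\tau_{\mathrm{fast}})^{r},
\]
using $N\eps\tau_{\mathrm{fast}}\le \Delta t+\eps\tau_{\mathrm{fast}}$. The physical fast time is $t_f:=N\eps\tau_{\mathrm{fast}}\ge\Delta t$, so the fast layer propagates $\cL_{\mathrm{fast}}$ for rescaled time $t_f/\eps\ge \Delta t/\eps$. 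Remark~\ref{rem:diamond_convergence} then gives
\[
\dnorm{\exp((t_f/\eps)\cL_{\mathrm{fast}})-\cP}\le Ce^{-\lambda_{\mathrm{gap}}\Delta t/\eps}.
\]
This is $\mathcal{O}(\eps)$, since $e^{-x}\le 1/x$ yields the bound $C\eps/(\lambda_{\mathrm{gap}}\Delta t)$, and it is exponentially small once $\eps\ll\Delta t$.

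\textbf{Consistency (fixed $\eps$).} We compare $\Psi^\eps_{\Delta t}$ with $\exp(\Delta t\cL_{\mathrm{slow}})\exp(\Delta t\eps^{-1}\cL_{\mathrm{fast}})$, replacing each layer by its exact flow at cost $C_{\mathrm{slow}}\Delta t^{s+1}+C_{\mathrm{fast}}\Delta t^{r+1}$ (after absorbing the ceiling mismatch, whose contribution is $\mathcal{O}(\eps\tau_{\mathrm{fast}}\,\eps^{-1}\dnorm{\cL_{\mathrm{fast}}})$ and which we fold into the constant). The remaining splitting error comes from Lemma~\ref{lem:trotter_error} with $m=2$; the commutator bound displayed after that lemma gives $\frac{\Delta t^2}{2\eps}\dnorm{\comm{\cL_{\mathrm{fast}}}{\cL_{\mathrm{slow}}}}$. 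Summing yields $C_1$ in \eqref{eq:C1} with $p=\min(r,s)$, using $\Delta t^2\le \Delta t^{p+1}$ for $p\le 1$ and $\Delta t\le1$. The constant $C_1$ is allowed to depend on $\eps$, as \eqref{eq:C1} indicates, since consistency is only claimed for fixed $\eps$.

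\textbf{Asymptotic consistency (the main step).} Write $\Psi^\eps_{\Delta t}-\exp(\Delta t\cL^{\mathrm{eff}}_{\mathrm{slow}})\cP$ as the sum of three terms:
\begin{enumerate}
\item[(a)] $(\Phi_{\mathrm{slow}}-\exp(\Delta t\cL_{\mathrm{slow}}))\Phi_{\mathrm{fast}}^N$, bounded by $C_{\mathrm{slow}}\Delta t^{s+1}$;
\item[(b)] $\exp(\Delta t\cL_{\mathrm{slow}})(\Phi_{\mathrm{fast}}^N-\cP)$, bounded by the fast-layer estimate above, which is $\mathcal{O}(C_{\mathrm{fast}}\eps\Delta t/\lambda_{\mathrm{gap}})$ plus the exponentially small relaxation term;
\item[(c)] $(\exp(\Delta t\cL_{\mathrm{slow}})-\exp(\Delta t\cL^{\mathrm{eff}}_{\mathrm{slow}}))\cP$.
\end{enumerate}
For (c) we use $\cP^2=\cP$ and the centering condition $\cP\cL_{\mathrm{slow}}=\cP\cL_{\mathrm{slow}}\cP$ from Assumption~\ref{ass:spectral_gap_rigorous}(4) so that only the difference of generators restricted to $\ran\cP$ matters. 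Duhamel's formula
\[
e^{\Delta tA}-e^{\Delta tB}=\int_0^{\Delta t}e^{(\Delta t-s)A}(A-B)e^{sB}\,ds,
\]
with $A=\cL_{\mathrm{slow}}$ and $B=\cL^{\mathrm{eff}}_{\mathrm{slow}}$, then bounds (c) by $\Delta t\,\dnorm{\cL^{\mathrm{eff}}_{\mathrm{slow}}-\cP\cL_{\mathrm{slow}}\cP}$ plus an $\mathcal{O}(\Delta t^2)$ contribution from the leakage $(\id-\cP)\cL_{\mathrm{slow}}\cP$. This is the expected obstacle. The Schur-complement correction in \eqref{eq:effective_generator_corrected} carries no $\eps$, so matching the stated form $C_2(\eps\Delta t+\Delta t^{p+1})$ requires care. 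I would first try to bound the leakage term by $\Delta t^2\dnorm{\cL_{\mathrm{slow}}}^2$, absorbing it into the $\Delta t^{p+1}$ term. The correction difference would then go into $C_2$ as stated in \eqref{eq:C2}, possibly with the effective generator read as scaled by $\eps$; with that reading (c) becomes $\mathcal{O}(\eps\Delta t)$. Bounding $\cL^+_{\mathrm{fast}}$ by $1/\lambda_{\mathrm{gap}}$ via Lemma~\ref{lem:resolvent_estimate} makes the dependence explicit, giving $q=1$.

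\textbf{Resources and the commuting diagram.} The bound $C_3$ in \eqref{eq:C3} follows by additivity of cost under composition. It stays uniform in $\eps$ when $\cR(\Phi_{\mathrm{fast}})$ scales like $\eps$, as for native analog evolution, or when $N$ is replaced by a single application of $\cP$ computed classically (the $\cR_{\mathrm{classical}}$ term). The limit statement follows from asymptotic consistency with $\Psi^0_{\Delta t}$ defined as the limiting map. Letting $\eps\to0$ kills the $\eps\Delta t$ term and the exponentially small relaxation term. The $\Delta t^{p+1}$ discretization error is identified as part of $\Psi^0_{\Delta t}$, namely the slow-layer approximation of the limiting flow, so that the two vertical arrows of \eqref{eq:ap_diagram_quantum} commute with the discretization.
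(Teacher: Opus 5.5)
Your bound on term (c) is where the argument breaks, and the diagnosis is inverted. Since $\cL^{\mathrm{eff}}_{\mathrm{slow}}=\cP(\cdots)\cP$, one has $e^{s\cL^{\mathrm{eff}}_{\mathrm{slow}}}\cP=\cP e^{s\cL^{\mathrm{eff}}_{\mathrm{slow}}}\cP$. The Duhamel integrand for (c) is therefore $e^{(\Delta t-s)\cL_{\mathrm{slow}}}\bigl[(\id-\cP)\cL_{\mathrm{slow}}\cP+(\cP\cL_{\mathrm{slow}}\cP-\cL^{\mathrm{eff}}_{\mathrm{slow}})\bigr]e^{s\cL^{\mathrm{eff}}_{\mathrm{slow}}}\cP$, and the leakage enters at \emph{first} order: (c) $=\Delta t\,(\id-\cP)\cL_{\mathrm{slow}}\cP+\mathcal{O}(\Delta t^2)$. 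The centering condition constrains the opposite block $\cP\cL_{\mathrm{slow}}(\id-\cP)$ and says nothing about $(\id-\cP)\cL_{\mathrm{slow}}\cP$, so no $\Delta t^2\dnorm{\cL_{\mathrm{slow}}}^2$ bound is available.

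The obstacle you flagged is in fact empty. Because $\cP\cL^+_{\mathrm{fast}}=0$, centering gives $\cP\cL_{\mathrm{slow}}\cL^+_{\mathrm{fast}}=\cP\cL_{\mathrm{slow}}\cP\cL^+_{\mathrm{fast}}=0$. Hence $\cL^{\mathrm{eff}}_{\mathrm{slow}}=\cP\cL_{\mathrm{slow}}\cP$, and rescaling by $\eps$ is neither needed nor licensed.

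Since (c) does not depend on $\eps$, the claimed estimate would force (c) $=\mathcal{O}(\Delta t^{p+1})$, and this is false in general. Under Assumption~\ref{ass:spectral_gap_rigorous}(1) as written, $\cP X=\tr[X]\rho^\star_{\mathrm{fast}}$, so $\cP\cL_{\mathrm{slow}}=0$ and $\Psi^0_{\Delta t}=\cP$. The diamond norm of (c) is then $\norm{e^{\Delta t\cL_{\mathrm{slow}}}\rho^\star_{\mathrm{fast}}-\rho^\star_{\mathrm{fast}}}_1=\Delta t\norm{\cL_{\mathrm{slow}}\rho^\star_{\mathrm{fast}}}_1+\mathcal{O}(\Delta t^2)$. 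A concrete instance is $\cL_{\mathrm{fast}}=\cD[\sigma^-]+\tfrac12\cD[\sigma^+]$ on a qubit with $\cL_{\mathrm{slow}}=-i[\sigma^x,\cdot]$, which satisfies the assumption. With exact layers, even the limit $\dnorm{\Psi^\eps_{\Delta t}-\Psi^0_{\Delta t}}\to 0$ fails there. Absorbing the residual into $\Psi^0_{\Delta t}$, as you propose, changes the statement.

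Closing the argument needs an extra hypothesis, such as $(\id-\cP)\cL_{\mathrm{slow}}\cP=0$, or a comparison of $\cP\circ\Psi^\eps_{\Delta t}$ (that is, ending each step with a fast layer). The paper's own proof does not supply this: it asserts the Duhamel interaction term is $\mathcal{O}(\eps\Delta t)$ ``by the centering condition'' without treating this leakage.

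Two further steps also do not close.
\begin{enumerate}
\item \textbf{Relaxation term.} You have $Ce^{-\lambda_{\mathrm{gap}}\Delta t/\eps}\le C\eps/(\lambda_{\mathrm{gap}}\Delta t)$, which is $\mathcal{O}(\eps/\Delta t)$, not $\mathcal{O}(\eps\Delta t)$. It is of order one when $\Delta t\sim\eps$, so it cannot sit inside $C_2(\eps\Delta t+\Delta t^{p+1})$ with constants uniform in $\Delta t$; only the fixed-$\Delta t$ limit survives.
\item \textbf{Reading of the fast layer.} Your claim that $\Phi_{\mathrm{fast}}^N$ runs $\cL_{\mathrm{fast}}$ for time $t_f/\eps$ presumes the substep approximates $\exp(\tau_{\mathrm{fast}}\cL_{\mathrm{fast}})$. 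The hypothesis as written gives $\exp(t_f\cL_{\mathrm{fast}})$ with $t_f\approx\Delta t$, which does not approach $\cP$ as $\eps\to0$. The paper's appendix makes the same silent identification, but you should state it as an interpretation.
\item \textbf{Consistency step.} The ceiling mismatch cannot be folded into $C_1$. For $\Delta t<\eps\tau_{\mathrm{fast}}$ one has $N=1$, so $\Psi^\eps_{\Delta t}\to\Phi_{\mathrm{fast}}\neq\id$ as $\Delta t\to0$, while $\exp(\Delta t\cL_\eps)\to\id$. This is a $\Delta t$-independent error that no $C_1\Delta t^{p+1}$ dominates.
\item \textbf{Order $p$.} The Lie-splitting error $\Delta t^2/\eps$ is $\mathcal{O}(\Delta t^{p+1})$ only for $p\le 1$, so $p=\min(r,s)$ is not obtained when $r,s\ge 2$.
\end{enumerate}
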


\begin{proof}
To establish the AP property, we first analyze the convergence of the fast layer. By Assumption~\ref{ass:spectral_gap_rigorous} and Remark~\ref{rem:diamond_convergence}, the semigroup $\exp(t\cL_{\mathrm{fast}})$ converges exponentially to the spectral projection $\cP$ in diamond norm, specifically $\dnorm{\exp(t\cL_{\mathrm{fast}}) - \cP} \leq C e^{-\lambda_{\mathrm{gap}} t}$. Since the approximate map $\Phi_{\mathrm{fast}}$ approximates the exact evolution over a small time step $\eps\tau_{\mathrm{fast}}$ with error $\mathcal{O}((\eps\tau_{\mathrm{fast}})^{r+1})$, iterating this $N = \lceil \Delta t / (\eps\tau_{\mathrm{fast}}) \rceil$ times yields a total error bound for the fast layer composition. Using the submultiplicativity of the diamond norm and the fact that $N\eps\tau_{\mathrm{fast}} \geq \Delta t$, the error decomposes into an accumulation of local approximation errors and the exponential decay toward the projection $\cP$. For fixed $\Delta t > 0$, the exponential term vanishes as $\eps \to 0$, leaving an error scaling dominated by the local approximation order.

Next, we address the interaction between fast and slow dynamics using the Duhamel expansion for the exact evolution $\exp(\Delta t\cL_\eps)$. This integral representation allows us to isolate the contribution of the slow generator $\cL_{\mathrm{slow}}$ acting on the rapidly decaying fast semigroup. Due to the spectral gap $\lambda_{\mathrm{gap}}$, the integral term involving the fast semigroup acting on the range of $(\id - \cP)$ is suppressed by a factor of $\eps/\lambda_{\mathrm{gap}}$. This suppression ensures that the error contribution from the non-commutativity of $\cL_{\mathrm{fast}}$ and $\cL_{\mathrm{slow}}$ scales as $\mathcal{O}(\eps \Delta t)$ rather than the naive $\mathcal{O}(\Delta t^2/\eps)$, provided the centering condition holds. This condition ensures that the leading-order term $\cP \cL_{\mathrm{slow}} \cP$ is correctly captured by the effective generator, preventing secular growth in the error.

Combining these results, we bound the total error of the layered protocol $\Psi_{\Delta t}^\eps = \Phi_{\mathrm{slow}} \circ (\Phi_{\mathrm{fast}})^N$. By the triangle inequality and submultiplicativity, the total error is bounded by the sum of the slow layer error, the fast layer error propagated through the slow map, and the interaction error derived from the Duhamel expansion. This yields the consistency bound with constants $C_1$ and exponent $p = \min(r,s)$. In the asymptotic limit $\eps \to 0$ with fixed $\Delta t$, the fast-layer error vanishes exponentially, and the residual error is dominated by the second-order Schur complement term, which scales as $\mathcal{O}(\eps \Delta t)$. This establishes the asymptotic consistency bound with $C_2$ and $q=1$. Finally, the resource bound follows directly from the construction: if the fast layer is implemented via analog evolution (zero gates) or analytic elimination, its cost is negligible, and the total complexity is dominated by the slow layer and classical coordination, remaining independent of $\eps$.

\begin{remark}[Clarification on constant dependencies]
The constant $C_1$ in Eq.~\eqref{eq:C1} governs the \emph{consistency} bound~\eqref{eq:consistency_bound} for \emph{fixed} $\eps > 0$; it may depend on $\eps^{-1}$ through the commutator term $\eps^{-1}\dnorm{\comm{\cL_{\mathrm{fast}}}{\cL_{\mathrm{slow}}}}$. In contrast, the constant $C_2$ in Eq.~\eqref{eq:C2} governs the \emph{asymptotic consistency} bound~\eqref{eq:asymptotic_consistency_bound} and is \emph{independent of $\eps$}, ensuring that the AP property holds uniformly as $\eps \to 0$. This distinction is crucial: the $\eps$-dependence in $C_1$ reflects the stiffness penalty for simulating the full dynamics at finite $\eps$, while the $\eps$-independence of $C_2$ guarantees that the protocol correctly captures the limiting slow dynamics without requiring $\eps$-dependent resources.
\end{remark}
\end{proof}

\begin{remark}[Interpretation in AP language]
Theorem~\ref{thm:ap_commutative} operationalizes Definition~\ref{def:ap_language}: the error bounds~\eqref{eq:C1}--\eqref{eq:C3} provide the uniform metric, the layered protocol construction ensures the diagram~\eqref{eq:ap_diagram_quantum} commutes in diamond norm, and the resource bound~\eqref{eq:C3} achieves scale-invariance when fast layers use analog or analytic resolution. This establishes a complete translation of classical AP methodology into quantum channel formalism.
\end{remark}

\section{Resource Complexity Analysis}
\label{sec:complexity_rigorous}

We now establish precise resource-complexity bounds for AP quantum simulation, with explicit dependencies on Hilbert space dimension, error tolerance, and stiffness parameter.

\begin{assumption}[Time-scale hierarchy, dimension scaling, and locality]
\label{ass:timescale_rigorous}
The target system admits a decomposition $\cH = \cH_{\mathrm{fast}} \otimes \cH_{\mathrm{slow}}$ with:
\begin{enumerate}
    \item $n$ distinct time scales $\{\tau_i\}_{i=1}^n$ satisfying $\tau_1 \ll \tau_2 \ll \cdots \ll \tau_n$, separation ratio $\kappa = \tau_n/\tau_1 \gg 1$.
    
    \item Hilbert space dimensions $d_{\mathrm{fast}} = \dim\cH_{\mathrm{fast}}$, $d_{\mathrm{slow}} = \dim\cH_{\mathrm{slow}}$, with $d_{\mathrm{tot}} = d_{\mathrm{fast}} d_{\mathrm{slow}}$.
    
    \item \textbf{Local gate complexity}: For $k$-local Lindbladian generators (acting nontrivially on at most $k$ subsystems), implementing the corresponding CPTP map to accuracy $\delta$ via standard gate sets (e.g., Clifford+T) requires $\mathcal{O}(d^c \cdot \poly(1/\delta))$ gates for some $c = c(k) \geq 1$, where the polynomial depends on the compilation strategy and error correction overhead. For geometrically local systems in fixed spatial dimension, $c$ can be taken constant independent of $k$; for all-to-all $k$-local interactions, known compilation results give $c = \mathcal{O}(k)$ scaling.
\end{enumerate}
\end{assumption}

\begin{theorem}[Rigorous Resource Complexity for AP Quantum Simulation]
\label{thm:complexity_rigorous}
Under Assumptions~\ref{ass:spectral_gap_rigorous} and~\ref{ass:timescale_rigorous}, for simulation time $T$, error tolerance $\delta > 0$, and stiffness parameter $\eps = \tau_1$:
\begin{enumerate}
    \item \textbf{Standard digital simulation}: A Trotter-based protocol requires gate complexity
    \begin{equation}
    G_{\mathrm{std}}(T, \delta, \eps) \geq C_{\mathrm{std}} \cdot \frac{T}{\eps} \cdot d_{\mathrm{tot}}^c \cdot \poly(1/\delta),
    \label{eq:G_std}
    \end{equation}
    where $C_{\mathrm{std}} = \mathcal{O}(\dnorm{\comm{\cL_{\mathrm{fast}}}{\cL_{\mathrm{slow}}}})$.
    
    \item \textbf{Purely digital AP protocol}: If both fast and slow layers are implemented via gate decompositions with no dimensionality reduction,
    \begin{equation}
    G_{\mathrm{AP}}^{\mathrm{digital}}(T, \delta, \eps) = \Theta\!\left( \frac{T}{\eps} \cdot d_{\mathrm{tot}}^c \cdot \poly(1/\delta) \right),
    \label{eq:G_AP_digital}
    \end{equation}
    yielding at most constant-factor improvement: $G_{\mathrm{std}}/G_{\mathrm{AP}}^{\mathrm{digital}} = \Theta(1)$.
    
    \item \textbf{AP protocol with analog fast layer}: If fast dynamics are resolved via native analog evolution requiring zero gates\footnote{We define 'native analog evolution' as the ability to implement $e^{t \cL_\text{fast}}$ directly via hardware dynamics (e.g., natural cavity decay) without digital gate decomposition. In practice, this assumes control errors are negligible compared to the stiffness scale $\eps$.}, and slow dynamics via digital simulation on $\cH_{\mathrm{slow}}$,
    \begin{equation}
    G_{\mathrm{AP}}^{\mathrm{analog}}(T, \delta, \eps) \leq C_{\mathrm{analog}} \cdot \frac{T}{\tau_n} \cdot d_{\mathrm{slow}}^c \cdot \poly(1/\delta),
    \label{eq:G_AP_analog}
    \end{equation}
    with resource savings ratio
    \begin{equation}
    \frac{G_{\mathrm{std}}}{G_{\mathrm{AP}}^{\mathrm{analog}}} \geq C_{\mathrm{savings}} \cdot \kappa \cdot \left( \frac{d_{\mathrm{tot}}}{d_{\mathrm{slow}}} \right)^c = C_{\mathrm{savings}} \cdot \kappa \cdot d_{\mathrm{fast}}^c.
    \label{eq:savings_ratio}
    \end{equation}
    
    \item \textbf{AP protocol with analytic elimination}: If adiabatic elimination reduces the effective dimension from $d_{\mathrm{tot}}$ to $d_{\mathrm{slow}} \ll d_{\mathrm{tot}}$ before simulation,
    \begin{equation}
    G_{\mathrm{AP}}^{\mathrm{elim}}(T, \delta, \eps) \leq C_{\mathrm{elim}} \cdot \left( \frac{T}{\tau_n} + T_{\mathrm{precomp}} \right) \cdot d_{\mathrm{slow}}^c \cdot \poly(1/\delta),
    \label{eq:G_AP_elim}
    \end{equation}
    where $T_{\mathrm{precomp}} = \mathcal{O}(d_{\mathrm{tot}}^3)$ is the classical precomputation cost for the effective generator~\eqref{eq:effective_generator_corrected}.
\end{enumerate}
\end{theorem}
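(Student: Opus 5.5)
The plan is to treat the four items as separate resource-counting arguments. Each reduces to a product of the form (number of time steps) $\times$ (cost per step), where the cost per step comes from the local gate-complexity clause of Assumption~\ref{ass:timescale_rigorous}(3). Error control comes from the diamond-norm tools already in place: submultiplicativity, contractivity of CPTP maps, and telescoping. Local errors then add over steps, and a global tolerance $\delta$ is achieved by demanding per-step accuracy $\delta/(\#\text{steps})$. Since $\poly(1/\delta)$ absorbs polynomial losses in the step count only up to the polynomial degree, I will state the bounds with $\poly(1/\delta)$ understood to absorb such factors, as the statement implicitly does.

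For item (1), I invoke Lemma~\ref{lem:trotter_error} with $\cL_\eps=\eps^{-1}\cL_{\mathrm{fast}}+\cL_{\mathrm{slow}}$. The one-step error is $\tfrac{(\Delta t)^2}{2\eps}\dnorm{\comm{\cL_{\mathrm{fast}}}{\cL_{\mathrm{slow}}}}$. Summing over $T/\Delta t$ steps by telescoping and requiring total error $\le\delta$ forces $\Delta t\lesssim \eps\delta/(T\,C_{\mathrm{std}})$.

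This only gives an upper bound on what suffices. For the lower bound I argue that the step must also resolve the fast relaxation itself: for $\Delta t\gg\eps$ the single factor $\exp(\Delta t\eps^{-1}\cL_{\mathrm{fast}})$ is already close to $\cP$. The Trotter product then fails to reproduce the transient, whose diamond-norm discrepancy is $\Theta(1)$ when the commutator is $\Theta(1)$. Hence at least $\Omega(T/\eps)$ steps are needed, each costing $d_{\mathrm{tot}}^c\poly(1/\delta)$ gates on the full space.

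Item (2) follows from the layered structure of Theorem~\ref{thm:ap_commutative}. Digitally, the fast layer uses $N=\lceil\Delta t/(\eps\tau_{\mathrm{fast}})\rceil$ substeps per macro-step, so over $T/\Delta t$ macro-steps one gets $\Theta(T/\eps)$ gate-decomposed channels on $\cH$. Each costs $\Theta(d_{\mathrm{tot}}^c\poly(1/\delta))$, because there is no reduction of dimension. Matching this against item (1) gives a ratio of $\Theta(1)$.

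For items (3) and (4), the fast layer costs zero gates (analog), or is replaced by the effective generator~\eqref{eq:effective_generator_corrected} acting on $\cH_{\mathrm{slow}}$. Theorem~\ref{thm:ap_commutative} gives an $\eps$-uniform error $C_2(\eps\Delta t+\Delta t^{p+1})$, so $\Delta t$ can be chosen on the slow scale $\tau_n$ independently of $\eps$. The gate count is then $(T/\tau_n)\,d_{\mathrm{slow}}^c\poly(1/\delta)$. For elimination, I add the classical cost of forming $\cL_{\mathrm{fast}}^+$ and the Schur complement, which is dense linear algebra giving $T_{\mathrm{precomp}}=\mathcal{O}(d_{\mathrm{tot}}^3)$ in the superoperator dimension convention. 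The savings ratio~\eqref{eq:savings_ratio} follows by dividing the lower bound~\eqref{eq:G_std} by~\eqref{eq:G_AP_analog}, using $\eps=\tau_1$ and $\kappa=\tau_n/\tau_1$.

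The main obstacle is the lower bound in item (1) and the matching $\Theta$ in item (2). Upper bounds follow from error accumulation, but a lower bound on gate complexity needs an argument that no coarser Trotter step can work. My first attempt would be an explicit two-level witness, such as a decaying qubit coupled to a slow rotation. For it, I would compute $\exp(\Delta t\cL_\eps)$ and the Trotter product exactly, and show their diamond distance is bounded below by $c\min(1,\Delta t/\eps)$. I would then lift this witness to general $\cL$ with a $\Theta(1)$ commutator, accepting that the stated lower bound holds for worst-case instances rather than for every family.
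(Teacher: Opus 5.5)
Your overall architecture matches the paper's: (number of steps)$\times$(per-step cost from Assumption~\ref{ass:timescale_rigorous}(3)), with the BCH step size for item~(1), $N\sim\Delta t/\eps$ substeps for item~(2), slow-scale steps for items~(3)--(4), and a quotient for the savings ratio. You are right that Lemma~\ref{lem:trotter_error} gives only a \emph{sufficient} step size. The paper's proof reads that upper bound as a necessary condition and appeals to ``tightness of the Trotter bound''. The gap is that your replacement lower-bound argument fails, and the witness you plan to compute would refute it rather than support it.

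For $\Delta t\gg\eps$ one has $e^{(\Delta t/\eps)\cL_{\mathrm{fast}}}=\cP+\mathcal{O}(e^{-\lambda_{\mathrm{gap}}\Delta t/\eps})$. On bounded time intervals, and up to exponentially small terms, $e^{\Delta t\cL_\eps}=e^{\Delta t\,\cP\cL_{\mathrm{slow}}\cP}\cP+\mathcal{O}(\eps)$. So the Lie--Trotter step $S=e^{\Delta t\cL_{\mathrm{slow}}}e^{(\Delta t/\eps)\cL_{\mathrm{fast}}}$ differs from the exact step by $\Delta t\,(\id-\cP)\cL_{\mathrm{slow}}\cP+\mathcal{O}(\Delta t^2+\eps)$, not by $\Theta(1)$. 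Two things go wrong with your transient argument:
\begin{itemize}
\item The transient lasts a time $\mathcal{O}(\eps)$, during which $\cL_{\mathrm{slow}}$ moves the state only by $\mathcal{O}(\eps)$.
\item The off-manifold part of each local error is re-projected by the next fast factor. The global error is therefore that of a stable first-order scheme for $\cP\cL_{\mathrm{slow}}\cP$, namely $\mathcal{O}(\Delta t+\eps)$ uniformly in $\eps$.
\end{itemize}
Under the centering condition as literally stated, $\cP\cL_{\mathrm{slow}}(\id-\cP)=0$, one even gets $\cP S^n=\cP e^{n\Delta t\cL_\eps}$ exactly, for every $\Delta t$.

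Your qubit witness shows this explicitly. Take $\eps^{-1}\cD[\sigma^-]$ and $-i[\tfrac{\omega}{2}\sigma^x,\cdot\,]$. Up to $\mathcal{O}(e^{-\Delta t/(2\eps)})$, both $S$ and $e^{\Delta t\cL_\eps}$ are replacement channels, onto states with Bloch vectors $\approx(0,\sin\omega\Delta t,-\cos\omega\Delta t)$ and $\approx(0,2\omega\eps,-1)$. Their diamond distance is therefore $\approx\omega\Delta t$; in general it is $\mathcal{O}(\omega\Delta t\min(1,\Delta t/\eps))$, not $c\min(1,\Delta t/\eps)$. Moreover, $S^n$ stays within that distance of $e^{n\Delta t\cL_\eps}$ for all $n$. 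So $\mathcal{O}(\omega T/\delta)$ Trotter steps suffice independently of $\eps$, even though $\dnorm{\comm{\cL_{\mathrm{fast}}}{\cL_{\mathrm{slow}}}}=\Theta(\omega)$.

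Consequently no argument from Trotter error can yield the $\Omega(T/\eps)$ bound from the stated hypotheses. A Lie--Trotter step with exact sub-flows \emph{is} one step of the layered protocol of Theorem~\ref{thm:ap_commutative} with exact fast layer $(\Phi_{\mathrm{fast}})^N=e^{(\Delta t/\eps)\cL_{\mathrm{fast}}}$, as in Proposition~\ref{prop:composite_error}. Under Assumption~\ref{ass:timescale_rigorous}(3) taken at face value, as you use it (a CPTP map costs $\mathcal{O}(d^c\poly(1/\delta))$ gates regardless of its evolution time), that step has $\eps$-independent cost. Even without that reading, $e^{(\Delta t/\eps)\cL_{\mathrm{fast}}}$ is exponentially close to $\cP$, which is just a reset to the fast steady state. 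A $T/\eps$ term can only enter through a cost model that charges the fast factor in proportion to $\Delta t/\eps$, for instance by compiling it as $N$ short substeps as in item~(2). Item~(1), and the comparisons in items~(2)--(3) built on it, rest on that modelling choice, which you would have to impose as a hypothesis rather than derive.

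Three further steps are unsupported, in the paper's proof as well:
\begin{itemize}
\item A gate-count lower bound also needs a per-step \emph{lower} bound, but Assumption~\ref{ass:timescale_rigorous}(3) supplies only upper bounds.
\item In items~(3)--(4), $C_2(\eps\Delta t+\Delta t^{p+1})$ is a per-step bound. Over $T/\Delta t$ steps it becomes $C_2T(\eps+\Delta t^{p})$, which ties $\Delta t$ to $\delta/T$ rather than to $\tau_n$ and leaves a term $C_2\eps T$ that no choice of step reduces.
\item The factor $d_{\mathrm{slow}}^c$ in item~(3) presumes the slow layer acts on $\cH_{\mathrm{slow}}$. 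But $\Phi_{\mathrm{slow}}$ approximates $e^{\Delta t\cL_{\mathrm{slow}}}$, which couples the two factors (otherwise there would be nothing to eliminate), so it costs $d_{\mathrm{tot}}^c$. Only the elimination route of item~(4) actually runs on $\cH_{\mathrm{slow}}$.
\end{itemize}
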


\begin{proof}
For standard digital simulation, the Trotter error per step is determined by the commutator of the scaled generators, scaling as $\mathcal{O}((\Delta t)^2 \eps^{-1} \dnorm{\comm{\cL_{\mathrm{fast}}}{\cL_{\mathrm{slow}}}})$. To maintain a total error below $\delta$ over time $T$, the time step $\Delta t$ must scale linearly with $\eps$, specifically $\Delta t = \mathcal{O}(\eps\delta/T \cdot \dnorm{\comm{\cL_{\mathrm{fast}}}{\cL_{\mathrm{slow}}}}^{-1})$. Consequently, the number of steps required is $\Omega(T/(\eps\delta))$, and since each step involves operations on the full Hilbert space of dimension $d_{\mathrm{tot}}$, the total gate complexity scales as $\Omega(T/\eps \cdot d_{\mathrm{tot}}^c \cdot \poly(1/\delta))$. A purely digital AP protocol that attempts to resolve each time scale individually without dimensionality reduction faces a similar bottleneck; the sum of steps required across all scales forms a geometric series dominated by the fastest scale $\tau_1 = \eps$, resulting in the same $\Theta(T/\eps)$ scaling and thus only constant-factor improvements over standard Trotterization.

In contrast, an AP protocol utilizing native analog evolution for the fast layer incurs zero gate cost for the fast dynamics, as the hardware naturally resolves the stiff timescale. The global time step is then dictated by the slowest relevant timescale $\tau_n = \kappa\eps$, reducing the number of slow-layer steps to $T/\tau_n$. Furthermore, because the fast degrees of freedom are slaved to the slow ones, the simulation effectively occurs on the reduced Hilbert space $\cH_{\mathrm{slow}}$, introducing a dimensional savings factor of $(d_{\mathrm{tot}}/d_{\mathrm{slow}})^c = d_{\mathrm{fast}}^c$. Combining the temporal and dimensional savings yields the superlinear speedup. Similarly, analytic adiabatic elimination allows for classical precomputation of the effective generator, shifting the computational burden from online quantum gates to offline classical processing. While the precomputation cost scales as $\mathcal{O}(d_{\mathrm{tot}}^3)$, this cost is amortized over the simulation time $T$, and the online quantum simulation proceeds on the reduced space $\cH_{\mathrm{slow}}$ with time steps determined by $\tau_n$, leading to the stated complexity bound. The necessity of these approaches is underscored by the tightness of the Trotter error bound, which dictates that any purely digital method must resolve the fastest timescale $\eps$, thereby inheriting the $\Omega(1/\eps)$ penalty.
\end{proof}

\begin{remark}[Necessity of analog or analytic resolution and limitations]
Theorem~\ref{thm:complexity_rigorous} establishes that superlinear savings $\Omega(\kappa \cdot d_{\mathrm{fast}}^c)$ are achievable \emph{if and only if} fast dynamics are resolved via (i) native analog evolution, or (ii) analytic dimensionality reduction. Purely digital implementations cannot circumvent the $\Omega(1/\eps)$ stiffness penalty, as the Trotter error bound is tight for generic stiff generators~\cite{Berry}. We note that the savings assume: (a) perfect analog evolution with no control errors, (b) exact knowledge of $\cL_{\mathrm{fast}}$, $\cL_{\mathrm{slow}}$ for analytic elimination, and (c) negligible overhead from classical coordination. Relaxing these assumptions may reduce the achievable savings. In particular, the analysis assumes that analog implementation errors are $\ll \eps$; incorporating realistic analog error models (calibration drift, decoherence, control imperfections) into the error budget is an important direction for practical applications.
\end{remark}

\section{Applications with Rigorous Error Bounds}
\label{sec:applications}

We demonstrate the framework through two canonical examples, with complete proofs of the AP error bounds.

\subsection{Cavity QED in the Bad-Cavity Limit}
\label{subsec:cavity_qed_rigorous}

Consider a two-level system (qubit) coupled to a single lossy bosonic mode (cavity), with joint Hilbert space $\cH = \cH_{\mathrm{cav}} \otimes \cH_{\mathrm{qubit}}$, governed by the Lindblad master equation
\begin{equation}
\frac{d}{dt}\rho = -i[H,\rho] + \kappa\cD[a](\rho),
\label{eq:cavity_master}
\end{equation}
where $H = \omega_q \sigma^z/2 + g(\sigma^+ a + \sigma^- a^\dagger)$ and $\cD[a](\rho) = a\rho a^\dagger - \frac{1}{2}\acomm{a^\dagger a}{\rho}$. In the bad-cavity regime $\kappa \gg g$, we define $\eps = g/\kappa \ll 1$, yielding time scales $\tau_{\mathrm{fast}} = 1/\kappa$ for cavity decay and $\tau_{\mathrm{slow}} = 1/g$ for qubit dynamics.

\begin{proposition}[Spectral gap and fixed-point algebra for cavity dissipator]
\label{prop:cavity_spectral_gap}
The fast generator $\cL_{\mathrm{fast}}(\rho) = \kappa\cD[a](\rho)$ has:
\begin{enumerate}
    \item Unique steady state on the cavity: $\rho_{\mathrm{cav}}^\star = \ketbra{0}{0}$,
    \item Spectral gap $\lambda_{\mathrm{gap}} = \kappa$,
    \item Fixed-point algebra $\ker\cL_{\mathrm{fast}} = \{\ketbra{0}{0} \otimes \sigma_q : \sigma_q \in \cD(\cH_{\mathrm{qubit}})\}$,
    \item Spectral projection $\cP(\rho) = \ketbra{0}{0} \otimes \tr_{\mathrm{cav}}(\rho)$.
\end{enumerate}
\end{proposition}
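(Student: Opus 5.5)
The plan is to reduce everything to an explicit triangular structure of the superoperator $\cD[a]$ in the Fock basis. I work on a truncated cavity space $\cH_{\mathrm{cav}} = \mathrm{span}\{\ket{0},\dots,\ket{n_{\max}}\}$, so that the finite-dimensional setting of Section~\ref{sec:preliminaries} applies. The truncation is consistent because $a$ maps this space into itself. On the joint space the fast generator acts as $\cL_{\mathrm{fast}} = \kappa\cD[a]\otimes \mathrm{id}_{\mathrm{qubit}}$. Hence it suffices to analyse $\kappa\cD[a]$ on $\cT(\cH_{\mathrm{cav}})$ and tensor with the identity at the end.

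\textbf{Step 1: triangular structure.} For the matrix units $\ketbra{n}{m}$, a direct computation gives
\begin{equation*}
\cD[a](\ketbra{n}{m}) = \sqrt{nm}\,\ketbra{n-1}{m-1} - \tfrac{n+m}{2}\ketbra{n}{m}.
\end{equation*}
Order the basis by the grading $n+m$ and fix the offset $n-m$. In this ordering $\cD[a]$ is block upper-triangular: it maps grade $n+m$ to grades $n+m$ and $n+m-2$ only. Therefore $\spec(\kappa\cD[a]) = \{-\kappa(n+m)/2 : 0\le n,m\le n_{\max}\}$.

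\textbf{Step 2: kernel and steady state.} The eigenvalue $0$ occurs only at $n=m=0$, so it is simple. Since $\cD[a](\ketbra{0}{0})=0$, the unique cavity steady state is $\ketbra{0}{0}$, which gives item (1). Tensoring with $\mathrm{id}_{\mathrm{qubit}}$ yields $\ker\cL_{\mathrm{fast}} = \ketbra{0}{0}\otimes\cB(\cH_{\mathrm{qubit}})$. Its intersection with the states is exactly the set in item (3).

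\textbf{Step 3: the projection.} For item (4) I use the dual map. Because $\cD[a]^\dagger(\id)=a^\dagger a-a^\dagger a=0$, the functional $\tr_{\mathrm{cav}}$ is invariant under $e^{t\cL_{\mathrm{fast}}}$. Simplicity of the zero eigenvalue together with the strictly negative real parts from Step 1 gives $e^{t\cL_{\mathrm{fast}}}\to\cP$. Combining this with the invariance of $\tr_{\mathrm{cav}}$ forces $\cP(\rho) = \ketbra{0}{0}\otimes\tr_{\mathrm{cav}}\rho$, in line with Lemma~\ref{lem:lindblad_spectrum}(3). Idempotence and the identity $\cP\cL_{\mathrm{fast}}=\cL_{\mathrm{fast}}\cP=0$ can then be checked directly.

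\textbf{Main obstacle: item (2).} By Step 1 the smallest nonzero decay rate is $\kappa(n+m)/2$ at $n+m=1$, namely $\kappa/2$. It comes from the coherences $\ketbra{0}{1}$ and $\ketbra{1}{0}$. The value $\kappa$ is attained only on the population (diagonal) sector, where the rates are $\kappa n$. So I expect item (2) to hold as stated only in a qualified form. One option is to prove $\lambda_{\mathrm{gap}}=\kappa$ on the diagonal and phase-invariant sector, which is the sector relevant after eliminating the Jaynes--Cummings coupling at leading order. The other is to record the full gap as $\kappa/2$. The qualitative conclusion $\lambda_{\mathrm{gap}}=\Theta(\kappa)$ is unaffected, and that is all Theorem~\ref{thm:ap_commutative} needs.

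A secondary point concerns Assumption~\ref{ass:spectral_gap_rigorous}. On the joint space the steady state is not unique and is not full rank, so items (1) and (3) of that assumption fail there. I would state explicitly that uniqueness and primitivity are used only for the cavity factor. The projection formula in item (4) supplies what the commutative-diagram argument actually requires.
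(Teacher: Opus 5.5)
Your analysis is correct, and it shows that item (2) of the statement is false; there is no gap in your own argument. The paper's proof asserts that $\cD[a]$ has eigenvalues $-\kappa n$ ``corresponding to photon number states'' and concludes $\lambda_{\mathrm{gap}}=\kappa$. That list is only the spectrum on the diagonal (population) block.

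Your identity $\cD[a](\ketbra{n}{m})=\sqrt{nm}\,\ketbra{n-1}{m-1}-\tfrac{n+m}{2}\ketbra{n}{m}$, with the grading by $n+m$ at fixed offset $n-m$, gives the full spectrum $\{-\kappa(n+m)/2\}$. So the gap of $\cL_{\mathrm{fast}}$ is $\kappa/2$, attained on $\ketbra{0}{1}$ and $\ketbra{1}{0}$. Tensoring with the qubit identity only changes multiplicities.

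For items (1), (3) and (4) your route agrees in substance with the paper's one-line argument, but you actually supply the proofs:
\begin{itemize}
\item simplicity of the zero eigenvalue follows from the triangular form;
\item the projection formula follows from invariance of $\tr_{\mathrm{cav}}$ under the dual map together with $e^{t\cL_{\mathrm{fast}}}\to\cP$.
\end{itemize}
Your remark that $\ker\cL_{\mathrm{fast}}$ is the linear space $\ketbra{0}{0}\otimes\cB(\cH_{\mathrm{qubit}})$ is also correct; the set in (3) is only its state part. The truncation is harmless, since $a$ and $a^\dagger a$ preserve the truncated space and nothing depends on $n_{\max}\ge 1$.

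Two adjustments to your discussion of item (2).

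\textbf{Drop your first option.} The diagonal sector is not the one that matters for eliminating the Jaynes--Cummings coupling. Because $a\ket{0}=0$, one has $(\id-\cP)\cL_{\mathrm{slow}}\cP(\rho)=-ig\bigl(\ketbra{1}{0}\otimes\sigma^-\sigma-\ketbra{0}{1}\otimes\sigma\sigma^+\bigr)$ with $\sigma=\tr_{\mathrm{cav}}\rho$. This lies exactly in the coherence sector of rate $\kappa/2$, so $\cL_{\mathrm{fast}}^+$ acts there as multiplication by $-2/\kappa$. The second-order term of \eqref{eq:effective_generator_corrected}, evaluated with $\cL_{\mathrm{fast}}=\kappa\cD[a]$ and the Jaynes--Cummings part of $\cL_{\mathrm{slow}}$, then equals $\frac{4g^2}{\kappa}\cD[\sigma^-]$. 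That is the paper's own Purcell rate $\gamma=4g^2/\kappa$, whereas a coherence rate $\kappa$ would have produced $2g^2/\kappa$. So the right statement is $\lambda_{\mathrm{gap}}=\kappa/2$, and the paper's downstream value of $\gamma$ is consistent only with that value.

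\textbf{Primitivity fails even for the cavity factor.} In the sense of Assumption~\ref{ass:spectral_gap_rigorous}(3), the unique steady state $\ketbra{0}{0}$ is pure, and $e^{t\kappa\cD[a]}(\ketbra{0}{0})=\ketbra{0}{0}$ is not full rank. The later arguments only need what your Step~1 already provides: a semisimple zero eigenvalue with projection $\cP$, and the rest of the spectrum in $\re z\le-\kappa/2$. This gives exponential convergence $e^{t\cL_{\mathrm{fast}}}\to\cP$ at rate $\kappa/2$. In fact each fixed-offset block has distinct diagonal entries, so $\cL_{\mathrm{fast}}$ is diagonalizable and no polynomial prefactor arises.
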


\begin{proof}
The dissipator $\cD[a]$ acts only on the cavity mode and has eigenvalues $-\kappa n$ for $n = 0,1,2,\ldots$ corresponding to photon number states, with $n=0$ being the unique zero eigenvalue~\cite{Breuer}. The spectral gap is therefore $\kappa$. The fixed-point algebra consists of all states with the cavity in vacuum, tensored with arbitrary qubit states. The projection formula follows from the partial trace over the cavity.
\end{proof}

The effective slow generator is derived via Eq.~\eqref{eq:effective_generator_corrected}. Since $\cP H_{int} \cP = 0$ (centering condition), the first-order term vanishes. The second-order term yields $\cL_{\mathrm{slow}}^{\mathrm{eff}}(\rho_q) = -i[\omega_q \sigma^z/2, \rho_q] + \gamma \cD[\sigma^-](\rho_q)$ with Purcell rate $\gamma = 4g^2/\kappa$~\cite{Brion}, acting on $\cH_{\mathrm{qubit}} \cong \C^2$.

\begin{proposition}[AP error bound for cavity QED]
\label{prop:cavity_ap_error}
Let $\Psi_{\Delta t}^\eps$ implement the layered protocol with:
\begin{itemize}
    \item Fast layer: Analog evolution $\Phi_{\mathrm{fast}} = \exp((\eps/\kappa)\cL_{\mathrm{fast}})$ (zero gates),
    \item Slow layer: First-order Trotter approximation $\Phi_{\mathrm{slow}}$ of $\exp(\Delta t\cL_{\mathrm{slow}}^{\mathrm{eff}})$.
\end{itemize}
Then for $\Delta t \leq \min(1/\omega_q, 1/\gamma)$,
\begin{equation}
\dnorm{\Psi_{\Delta t}^\eps - \exp(\Delta t\cL_{\mathrm{slow}}^{\mathrm{eff}}) \circ \cP} \leq C \left( \eps\Delta t + (\Delta t)^2 \right),
\end{equation}
with $C = \mathcal{O}(\norm{\cL_{\mathrm{slow}}} + \gamma)$. Moreover, the resource cost satisfies $G_{\mathrm{AP}} = \mathcal{O}(T/\tau_n \cdot \poly(1/\delta))$ with $\tau_n = 1/g$.
\end{proposition}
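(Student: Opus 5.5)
The proof rests on the triangle inequality and submultiplicativity of the diamond norm. The layered map is $\Psi_{\Delta t}^\eps = \Phi_{\mathrm{slow}}\circ(\Phi_{\mathrm{fast}})^N$, and we compare it to $\exp(\Delta t\cL_{\mathrm{slow}}^{\mathrm{eff}})\circ\cP$ by splitting the error into three telescoping pieces:
\begin{align*}
\Psi_{\Delta t}^\eps - e^{\Delta t\cL_{\mathrm{slow}}^{\mathrm{eff}}}\cP
&= \Phi_{\mathrm{slow}}\bigl((\Phi_{\mathrm{fast}})^N - \cP\bigr)
+ \bigl(\Phi_{\mathrm{slow}} - e^{\Delta t\cL_{\mathrm{slow}}^{\mathrm{eff}}}\bigr)\cP \\
&\quad + \bigl(e^{\Delta t\cL_{\mathrm{slow}}^{\mathrm{eff}}}\cP - e^{\Delta t\cL_{\mathrm{slow}}^{\mathrm{eff}}}\cP\bigr).
\end{align*}
The third bracket is identically zero here. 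It is kept only to mark where a Duhamel correction would enter if $\Phi_{\mathrm{slow}}$ were taken to approximate $\exp(\Delta t\cL_{\mathrm{slow}})$ rather than the effective generator, as in Theorem~\ref{thm:ap_commutative}. Because $\Phi_{\mathrm{slow}}$ is CPTP, its diamond norm is $1$, so the first term is at most $\dnorm{(\Phi_{\mathrm{fast}})^N - \cP}$.

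\textbf{Fast layer.} The fast layer is exact analog evolution. The substeps therefore compose exactly, $(\Phi_{\mathrm{fast}})^N = \exp(N(\eps/\kappa)\cL_{\mathrm{fast}})$, with total duration $t_N := N\eps/\kappa \ge \Delta t$. By Proposition~\ref{prop:cavity_spectral_gap}, $\cL_{\mathrm{fast}} = \kappa\cD[a]\otimes\mathrm{id}$ has gap $\lambda_{\mathrm{gap}}=\kappa$ and projection $\cP(\rho) = \ketbra{0}{0}\otimes\tr_{\mathrm{cav}}\rho$. Remark~\ref{rem:diamond_convergence} then gives
\[
\dnorm{(\Phi_{\mathrm{fast}})^N - \cP} \le C e^{-\kappa t_N}.
\]
The constant $C$ comes from the cavity factor alone: diamond-norm stability kills the qubit tensor factor, and the Fock space is truncated to finite dimension so that the earlier finite-dimensional results apply. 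To convert this into the stated form, measure time in units where $\kappa = 1/\eps$, i.e.\ in slow units with $g=1$. Then $\kappa t_N \ge \Delta t/\eps$, and the elementary inequality $e^{-x}\le 1/x$ gives $e^{-\Delta t/\eps}\le \eps/\Delta t$. This does not yet give $\eps\Delta t$, so I would instead use $e^{-x}\le 2/x^2$ with $x=\Delta t/\eps$, which yields $e^{-\Delta t/\eps} \le 2\eps^2/\Delta t^2$. Even this only fits the target in the regime $\eps\lesssim\Delta t^2$. I expect this matching of the exponentially small fast-layer error against $\eps\Delta t+\Delta t^2$ to be the main obstacle. If it fails, my fallback is to accept an additive $e^{-\kappa\Delta t}$ term, which is dominated by $\eps\Delta t$ whenever $\Delta t\gg\eps\log(1/\eps)$, and absorb it into $C$.

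\textbf{Slow layer.} $\cL_{\mathrm{slow}}^{\mathrm{eff}} = -i[\omega_q\sigma^z/2,\cdot\,]+\gamma\cD[\sigma^-]$ acts on $\C^2$. Lemma~\ref{lem:trotter_error} bounds the first-order Trotter splitting by $\frac{\Delta t^2}{2}\dnorm{[\,-i\omega_q[\sigma^z/2,\cdot\,],\,\gamma\cD[\sigma^-]\,]} + \mathcal{O}(\Delta t^3)$. Since $\sigma^-$ is an eigenoperator of $\mathrm{ad}_{\sigma^z}$, this commutator in fact vanishes, so the estimate is not tight. In any case the bound is at most $C\Delta t^2$ with $C = \mathcal{O}(\omega_q+\gamma)$. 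The assumption $\Delta t\le\min(1/\omega_q,1/\gamma)$ makes the cubic remainder $\lesssim(\omega_q+\gamma)\Delta t^2$. Multiplying by $\cP$ costs nothing, since $\dnorm{\cP}=1$.

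\textbf{The $\eps\Delta t$ term.} This term records the Schur-complement (adiabatic elimination) error between the true stiff dynamics and the effective generator. I would obtain it by comparing $e^{\Delta t\cL_\eps}$ with $e^{\Delta t\cL^{\mathrm{eff}}_{\mathrm{slow}}}\cP$ through Duhamel's formula. The first-order term drops out because $\cP H_{\mathrm{int}}\cP=0$, the centering condition. The second-order term, after the resolvent estimate of Lemma~\ref{lem:resolvent_estimate} with $\norm{\cL_{\mathrm{fast}}^+}\le 1/\kappa$, reproduces $\gamma$ up to a remainder of order $\eps\cdot\Delta t\cdot\norm{\cL_{\mathrm{slow}}}$. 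This is the source of the constant $C=\mathcal{O}(\norm{\cL_{\mathrm{slow}}}+\gamma)$.

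\textbf{Resource count.} The fast layer costs zero gates. The slow layer uses $T/\Delta t$ steps on a single qubit, and choosing $\Delta t\sim\tau_n=1/g$, scaled down by the required $\poly(\delta)$ accuracy, gives $G_{\mathrm{AP}}=\mathcal{O}(T/\tau_n\cdot\poly(1/\delta))$, as in Theorem~\ref{thm:complexity_rigorous}(3) with $d_{\mathrm{slow}}=2$.
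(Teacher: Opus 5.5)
Your route is essentially the paper's: fast-layer convergence to $\cP$, slow-layer Trotter error, and Duhamel/centering for the $\eps\Delta t$ term. The gap is exactly where you suspect it. Your identity gives
\[
\dnorm{\Psi_{\Delta t}^\eps - e^{\Delta t\cL_{\mathrm{slow}}^{\mathrm{eff}}}\circ\cP} \le \dnorm{(\Phi_{\mathrm{fast}})^N - \cP} + \dnorm{\Phi_{\mathrm{slow}} - e^{\Delta t\cL_{\mathrm{slow}}^{\mathrm{eff}}}}.
\]
The first term is exponentially small in $\kappa\Delta t$ but of order one when $\kappa\Delta t=\mathcal{O}(1)$. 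No $\eps$-independent $C$ turns it into $C(\eps\Delta t+(\Delta t)^2)$, so ``absorbing it into $C$'' does not prove the stated inequality.

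The paper offers no idea you are missing. Its proof writes $e^{-\eps N}\le e^{-\kappa\Delta t}=\mathcal{O}(1)$ and declares this absorbed into $C$. That is the same invalid step: an additive $\mathcal{O}(1)$ term cannot be absorbed into a right-hand side that vanishes as $\Delta t\to 0$.

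On the joint cavity--qubit space the claim is in fact false. Set $g=1$, so $\kappa=1/\eps$ and $\gamma=4\eps$. Take $\Delta t=\eps$, which is admissible for small $\eps$, and input $\ketbra{1}{1}\otimes\sigma$ for any qubit state $\sigma$.
\begin{itemize}
\item Since $N=\lceil\kappa\Delta t/\eps\rceil$, the total fast time obeys $\kappa t_N<\kappa\Delta t+\eps=1+\eps$.
\item Hence $(\Phi_{\mathrm{fast}})^N$ leaves weight $e^{-\kappa t_N}>e^{-1-\eps}$ on $\ketbra{1}{1}$, while $\cP$ leaves none.
\item Both slow maps act only on the qubit, so the $\ketbra{1}{1}$ block of the output difference alone has trace norm $e^{-\kappa t_N}$.
\end{itemize}
The claimed bound is $2C\eps^2$, which this violates for small $\eps$.

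What is provable is the displayed bound with an explicit additive $\mathcal{O}(e^{-\kappa\Delta t/2})$ term. The rate is $\kappa/2$ because the gap of $\kappa\cD[a]$ is $\kappa/2$, not $\kappa$ as Proposition~\ref{prop:cavity_spectral_gap} claims: $\cD[a]\ketbra{0}{1}=-\tfrac12\ketbra{0}{1}$. So your $Ce^{-\kappa t_N}$ is not a valid upper bound either. The $\mathcal{O}(e^{-\kappa\Delta t/2})$ term is dominated by $\eps\Delta t$ only for $\Delta t\gtrsim\eps\log(1/\eps)$. Your fallback is therefore the right repair, provided it is stated as a regime restriction or an explicit extra term rather than folded into $C$. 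Alternatively, if you measure the error on the qubit marginal, the fast term vanishes identically, because $\tr_{\mathrm{cav}}\circ(\Phi_{\mathrm{fast}})^N=\tr_{\mathrm{cav}}=\tr_{\mathrm{cav}}\circ\cP$. Only the Trotter term then remains.

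Two smaller points.
\begin{itemize}
\item Your decomposition is an exact identity in which $e^{\Delta t\cL_\eps}$ never appears. Your paragraph on the $\eps\Delta t$ term therefore bounds nothing: no Duhamel or Schur-complement comparison enters this left-hand side. The paper's appeal to the interaction error of Theorem~\ref{thm:ap_commutative} is likewise unnecessary here. Drop that paragraph rather than offering it as the source of $\eps\Delta t$.
\item Your remark that $-i\omega_q[\sigma^z/2,\cdot\,]$ and $\gamma\cD[\sigma^-]$ commute is correct, and sharper than the paper's bare citation of Lemma~\ref{lem:trotter_error}. The natural first-order splitting is exact, so on the joint space the entire error is the fast-layer residual.
\end{itemize}
The resource count agrees with the paper's.
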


\begin{proof}
Apply Theorem~\ref{thm:ap_commutative} with $r = \infty$ (exact analog evolution), $s = 1$ (first-order Trotter). By Proposition~\ref{prop:cavity_spectral_gap}, the spectral gap is $\lambda_{\mathrm{gap}} = \kappa$, and primitivity holds for the cavity dissipator. The diamond-norm convergence~\eqref{eq:diamond_convergence} gives
\begin{equation}
\dnorm{(\Phi_{\mathrm{fast}})^N - \cP} \leq C e^{-\kappa N (\eps/\kappa)} = C e^{-\eps N}.
\end{equation}
Since $N = \lceil \Delta t / (\eps/\kappa) \rceil$, we have $\eps N \geq \kappa \Delta t$, which is independent of $\eps$ for fixed $\Delta t$. Therefore, $e^{-\eps N} \leq e^{-\kappa \Delta t} = \mathcal{O}(1)$, and the exponential term is absorbed into the constant $C$. The slow-layer Trotter error contributes $\mathcal{O}((\Delta t)^2)$ by Lemma~\ref{lem:trotter_error}. The interaction error is handled by the Duhamel expansion in Theorem~\ref{thm:ap_commutative}, which shows that due to the centering condition ($\cP H_{int} \cP = 0$), the leading error scales as $\mathcal{O}(\eps \Delta t)$. Combining via the triangle inequality yields the bound with $C = \mathcal{O}(\norm{\cL_{\mathrm{slow}}} + \gamma)$.

The resource cost follows from Theorem~\ref{thm:complexity_rigorous}(3): analog fast layer requires zero gates, slow layer on $\cH_{\mathrm{slow}} \cong \C^2$ requires $\mathcal{O}(T/\tau_n \cdot \poly(1/\delta))$ gates with $\tau_n = 1/g$.
\end{proof}

\begin{remark}[Multiscale interpretation]
Cavity QED in the bad-cavity limit exemplifies a multiscale physical system: fast photon decay ($\tau_{\mathrm{fast}} = 1/\kappa$) and slow qubit dynamics ($\tau_{\mathrm{slow}} = 1/g$) with separation $\kappa/g = 1/\eps \gg 1$. The AP framework enables simulation of the slow qubit dynamics without resolving individual photon decay events, achieving resource savings $\Omega(\kappa)$ via analog cavity evolution. This illustrates the general principle that quantum hardware can natively implement certain fast-scale physics, bypassing digital Trotterization costs.
\end{remark}

\subsection{Quantum-Inspired AP Scheme for Kinetic Equations}
\label{subsec:kinetic_rigorous}

We consider a quantum-inspired discretization of the classical Boltzmann equation in diffusive scaling:
\begin{equation}
\partial_t f^\eps + v \cdot \nabla_x f^\eps = \frac{1}{\eps} \cQ(f^\eps),
\label{eq:boltzmann}
\end{equation}
where $f^\eps(t,x,v)$ is the particle distribution function, $\cQ$ is the collision operator, and $\eps = \ell/L \ll 1$ is the Knudsen number. As $\eps \to 0$, this equation converges to the Navier--Stokes--Fourier equations governing hydrodynamic flow.

\begin{remark}[Classical-quantum interface]
While the Boltzmann equation is classical, we consider a quantum-inspired algorithm where the distribution $f^\eps$ is encoded in a quantum state $\rho_f$ via amplitude encoding or other quantum data structures. The collision operator $\cQ$ is approximated via a quantum linear solver or quantum walk with complexity $\mathcal{O}(\poly(\chi, 1/\delta))$, where $\chi$ is the bond dimension of a tensor network representation or the number of qubits used for discretization. The error bounds below are stated in classical $L^2$ norms for the distribution function, but the quantum implementation introduces additional approximation errors that can be bounded via standard quantum algorithm analysis~\cite{Watrous}.
\end{remark}

\begin{proposition}[AP property for quantum-inspired kinetic-fluid hierarchy]
\label{prop:kinetic_ap}
Let $\Psi_{\Delta t}^\eps$ implement:
\begin{itemize}
    \item Kinetic layer: $N = \lceil \Delta t / \eps \rceil$ steps of a CPTP approximation to $\exp(\eps \cQ)$ with diamond-norm error $\mathcal{O}(\eps^{r+1})$,
    \item Fluid layer: Finite-difference discretization of Navier--Stokes with viscosity $\nu = \eps \nu_0$, implemented via quantum linear algebra subroutines.
\end{itemize}
Under regularity assumptions $f^\eps \in H^s(\Omega_x \times \Omega_v)$ and $\cQ$ satisfying detailed balance, the protocol satisfies
\begin{equation}
\norm{\Psi_{\Delta t}^\eps(f_0) - \mathbf{U}(\Delta t)}_{L^2} \leq C \left( \eps\Delta t + (\Delta t)^2 + \chi^{-\alpha} + \delta_{\mathrm{quantum}} \right),
\end{equation}
where $\mathbf{U}(t)$ is the exact Navier--Stokes solution, $\alpha > 0$ depends on the tensor network approximation rate (with $\alpha > 0$ guaranteed for gapped 1D systems by area laws, but potentially smaller or zero for critical/higher-dimensional systems), and $\delta_{\mathrm{quantum}}$ bounds the error from quantum algorithmic approximations (e.g., Hamiltonian simulation, phase estimation).
\end{proposition}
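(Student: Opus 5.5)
The plan is to split the total error into four pieces with the triangle inequality, and to bound the AP piece using Theorem~\ref{thm:ap_commutative}. Let $\Psi^{0}_{\Delta t}$ be the limiting discrete map, i.e.\ the kinetic projection $\cP$ onto local Maxwellians (the kernel of $\cQ$) followed by one step of the fluid discretization. Write $\Psi^{\eps,\chi,\mathrm{q}}_{\Delta t}$ for the quantum implementation with bond dimension $\chi$. Then
\begin{equation}
\norm{\Psi^{\eps}_{\Delta t}(f_0)-\mathbf U(\Delta t)}_{L^2}\le E_{\mathrm{AP}}+E_{\mathrm{fluid}}+E_{\chi}+E_{\mathrm q},
\end{equation}
where the four terms are defined as follows:
\begin{itemize}
\item $E_{\mathrm{AP}}$ compares the layered scheme with $\Psi^0_{\Delta t}(f_0)$;
\item $E_{\mathrm{fluid}}$ compares $\Psi^0_{\Delta t}(f_0)$ with the exact solution $\mathbf U(\Delta t)$;
\item $E_\chi$ is the tensor-network truncation error;
\item $E_{\mathrm q}$ collects the quantum subroutine errors.
\end{itemize}

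\textbf{Step 1: the AP term.} I would transfer the setting of Theorem~\ref{thm:ap_commutative}, replacing the diamond norm by the $L^2$ operator norm. The linearized collision operator $\cQ$ plays the role of $\cL_{\mathrm{fast}}$ and $-v\cdot\nabla_x$ plays $\cL_{\mathrm{slow}}$.
\begin{itemize}
\item Detailed balance makes the linearized $\cQ$ self-adjoint and nonpositive in the weighted space $L^2(M^{-1}dv)$. Assuming a spectral gap $\lambda_{\mathrm{gap}}>0$ (a standing hypothesis, as in Assumption~\ref{ass:spectral_gap_rigorous}), this gives the exponential decay needed in place of \eqref{eq:diamond_convergence}.
\item The centering condition is the vanishing of the first velocity moment of Maxwellians in the moving frame, $\cP(v\cdot\nabla_x)\cP=0$.
\item The Duhamel argument of Theorem~\ref{thm:ap_commutative} then gives $C(\eps\Delta t+\Delta t^{p+1})$.
\item The kinetic substep error $N\cdot\mathcal O(\eps^{r+1})=\mathcal O(\Delta t\,\eps^{r})$ is absorbed into $\eps\Delta t$ once $r\ge1$.
\end{itemize}

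\textbf{Step 2: the fluid term.} The fluid step should be a consistent first-order-in-time and stable finite-difference scheme for the Navier--Stokes equations with viscosity $\nu=\eps\nu_0$. I would invoke the $H^s$ regularity to bound its local truncation error by $C\Delta t^2$. Because $\nu=\mathcal O(\eps)$, any viscous contribution is of size $\eps\Delta t$, so it is again absorbed.

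\textbf{Step 3: the approximation terms.} I would take $E_\chi\le C\chi^{-\alpha}$ directly as the assumed approximation rate of the tensor-network encoding. I would take $E_{\mathrm q}\le\delta_{\mathrm{quantum}}$ by composing subroutine errors with submultiplicativity. Every layer is a contraction: CPTP maps in the quantum part, and $e^{t\cQ}$ is dissipative. Hence errors accumulate only additively and are not amplified.

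\textbf{Main obstacle.} The hard step is Step~1. Theorem~\ref{thm:ap_commutative} is finite-dimensional, while $v\cdot\nabla_x$ is unbounded, so the Duhamel remainder $\cP\cL_{\mathrm{slow}}\cQ^{+}(\id-\cP)\cL_{\mathrm{slow}}$ has to be controlled using the $H^s$ regularity in $x$ instead of operator norms. My first attempt would be to work on the spatially discretized (hence bounded) transport operator and track the constant through $\norm{\nabla_x f}_{H^{s-1}}$. I would then verify that the Schur complement reproduces the viscous term $\eps\nu_0\Delta$, so that the diffusive limit is consistent with the stated viscosity.
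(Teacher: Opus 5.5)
Your four-term split, and the bounds you attach to the kinetic substeps ($N\cdot\mathcal O(\eps^{r+1})=\mathcal O(\eps^r\Delta t)$), the fluid step ($\Delta t^2$), $\chi^{-\alpha}$ and $\delta_{\mathrm{quantum}}$, match the paper's proof. You diverge in Step~1. The paper links kinetic and fluid variables through the Chapman--Enskog expansion; you instead import Theorem~\ref{thm:ap_commutative}, and that import fails at the centering condition.

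To obtain Navier--Stokes at all, $\cQ$ must conserve mass, momentum and energy. Linearizing about a global Maxwellian $M$, $\ker\cQ$ therefore contains $M$, $v_jM$ and $|v|^2M$. For $g=\rho(x)M$ one has $v\cdot\nabla_x g=(v\cdot\nabla_x\rho)M\in\ker\cQ$, so $\cP(v\cdot\nabla_x)\cP g=v\cdot\nabla_x g\neq0$. In general $\cP(v\cdot\nabla_x)\cP$ is the linearized compressible Euler operator, not zero.

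The paper's own form \eqref{eq:centering_condition}, $\cP\cL_{\mathrm{slow}}(\id-\cP)=0$, fails as well. Take $g=\phi(x)\,v_1v_2M$, which lies in $\ran(\id-\cP)$. Then $v\cdot\nabla_x g$ has a $v_1M$-component proportional to $\partial_2\phi$, which is the shear-stress divergence. Had that condition held, it would annihilate the Schur term: $\cQ^{+}=(\id-\cP)\cQ^{+}$ gives $\cP\cL_{\mathrm{slow}}\cQ^{+}=0$, and no viscosity would appear. Conversely, if $\cP(v\cdot\nabla_x)\cP=0$ did hold (kernel spanned by $M$ alone), the Schur complement would produce a diffusion equation for the density, not Navier--Stokes. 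So the hypothesis you use in Step~1 and the check you plan at the end (that the Schur complement yields $\eps\nu_0\Delta$) cannot both be true.

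The repair is to drop the centering condition entirely. In the scaling of \eqref{eq:boltzmann}, which is hyperbolic despite the paper's label, the leading effective dynamics is $\cP(-v\cdot\nabla_x)\cP$, i.e.\ Euler. The viscous and heat-conduction terms come from the $\eps$-weighted correction in $S_\eps$ of Corollary~\ref{cor:effective_generator}. They do not come from \eqref{eq:effective_generator_corrected}, which has no factor $\eps$ and would give an $\mathcal O(1)$ viscosity. This is exactly first-order Chapman--Enskog. In your linearized setting it yields only linearized Navier--Stokes; the nonlinear target $\mathbf U$ needs the expansion about local Maxwellians $\cM[\rho,u,T]$.

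Moreover, Theorem~\ref{thm:ap_commutative} does not fit this protocol. Its slow layer must approximate $e^{\Delta t\cL_{\mathrm{slow}}}$, i.e.\ free transport, whereas here the second layer is a Navier--Stokes solver $F$ acting on moments. Write $K$ for the kinetic substep. In your split, $E_{\mathrm{AP}}=\norm{F(K^Nf_0)-F(\cP f_0)}$ involves no transport, so there is no interaction term for a Duhamel argument to control. If $F$ reads only the conserved moments, $E_{\mathrm{AP}}$ is the stability constant of $F$ times the moment drift of the kinetic layer, $\mathcal O(\eps^r\Delta t)$, which is all the paper uses. If you instead bound it through relaxation to $\cP$, as your Step~1 suggests, note that $K^N\approx e^{N\eps\cQ}$ with $N\eps\approx\Delta t$. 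The residual is then only $Ce^{-\lambda_{\mathrm{gap}}\Delta t}$, which neither vanishes as $\eps\to0$ nor is $\mathcal O(\eps\Delta t+\Delta t^2)$.
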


\begin{proof}[Proof]
The proof follows the classical AP analysis~\cite{Jin} adapted to the quantum-inspired setting. The kinetic layer implements collisional relaxation to local equilibrium $\cM[\rho,u,T]$ with diamond-norm error $\mathcal{O}(\eps^{r+1})$ per step, accumulating to $\mathcal{O}(\eps^r \Delta t)$ over $N$ steps in diamond norm, which implies $\mathcal{O}(\eps^r \Delta t)$ error in $L^2$ norm for the encoded distribution. The fluid layer discretization error is $\mathcal{O}((\Delta t)^2)$ by standard finite-difference analysis. The tensor network approximation error $\chi^{-\alpha}$ arises from bond dimension truncation~\cite{Orus}, and $\delta_{\mathrm{quantum}}$ bounds quantum algorithmic errors~\cite{Childs,Watrous}. Combining via the triangle inequality and using the Chapman--Enskog expansion to relate kinetic and fluid variables yields the bound.
\end{proof}

\begin{remark}[Classical-quantum interface and multiscale limits]
Proposition~\ref{prop:kinetic_ap} demonstrates how the AP language extends beyond purely quantum systems: by encoding classical distribution functions into quantum states, we can leverage quantum linear algebra subroutines to implement AP discretizations of kinetic-to-fluid limits. The error bound explicitly separates: (i) classical AP error $\mathcal{O}(\eps\Delta t + (\Delta t)^2)$ from time-scale separation, (ii) tensor network approximation error $\chi^{-\alpha}$ from spatial/velocity discretization, and (iii) quantum algorithmic error $\delta_{\mathrm{quantum}}$ from Hamiltonian simulation or phase estimation. This modular error accounting is a hallmark of the AP language applied to hybrid classical-quantum algorithms.
\end{remark}

\section{Discussion and Open Problems}
\label{sec:discussion}

While the framework provides rigorous error bounds for AP quantum simulation, several mathematical challenges remain:

\begin{enumerate}
    \item \textbf{Non-Markovian extensions}: Assumption~\ref{ass:spectral_gap_rigorous} requires Markovian fast dynamics. Extending to non-Markovian settings requires analysis of time-convolutionless generators~\cite{Tanimura} and their singular perturbations, involving memory kernels and integro-differential equations.
    
    \item \textbf{Infinite-dimensional systems}: Our analysis assumes finite-dimensional $\cH$. Extending to bosonic modes or field theories requires careful treatment of unbounded generators, domain issues, and strong vs. norm convergence of semigroups~\cite{Davies}.
    
    \item \textbf{Measurement backaction}: The stochastic nature of quantum measurements in the slow layer introduces additional error terms not captured by diamond-norm bounds; a full analysis requires martingale techniques for quantum trajectories~\cite{Wiseman}.
    
    \item \textbf{Optimality of bounds}: The constants $C_1, C_2$ in Definition~\ref{def:ap_quantum_rigorous} depend on commutator norms that may be loose for specific physical systems; tighter bounds via problem-specific analysis remain an open direction.
    
    \item \textbf{Hardware imperfections}: The resource analysis assumes perfect analog evolution and exact knowledge of generators. Incorporating control errors, decoherence, and calibration uncertainty into the error bounds is an important direction for practical applications.
    
    \item \textbf{Automatic AP protocol synthesis}: Can the commutative diagram condition~\eqref{eq:ap_diagram_quantum} be used to algorithmically generate AP quantum circuits from high-scale specifications of multiscale physical systems?
\end{enumerate}

\begin{remark}[Comparison with classical AP theory]
Our quantum AP framework parallels classical theory~\cite{Jin} in structure but differs in key aspects: (i) diamond-norm vs. $L^p$ error metrics, (ii) CPTP constraints vs. positivity preservation, (iii) measurement-induced stochasticity vs. deterministic projection. These differences necessitate new mathematical tools, particularly for handling the interplay between unitary evolution, dissipation, and measurement.
\end{remark}

\begin{remark}[Finite-dimensional assumption]
All diamond-norm error bounds in this work implicitly assume that the underlying Hilbert space $\cH$ is finite-dimensional. This ensures that: (i) the diamond norm is well-defined and equivalent to other operator norms, (ii) spectral gaps imply exponential convergence with uniform constants, and (iii) the Drazin inverse $\cL_{\mathrm{fast}}^+$ is bounded. Extensions to infinite-dimensional systems (e.g., bosonic modes, quantum fields) require careful treatment of unbounded generators, domain considerations, and strong vs. norm convergence of semigroups~\cite{Davies}; such generalizations are an important direction for future work.
\end{remark}

\section{Conclusion}
\label{sec:conclusion}

We have developed a mathematically rigorous framework for simulating \emph{multiscale physical systems} using quantum computational resources, by systematically translating the \emph{language of asymptotic-preserving schemes} into quantum channel formalism. The central contributions are: (i) a precise definition of AP quantum simulation with explicit diamond-norm error bounds uniform in stiffness (Definition~\ref{def:ap_quantum_rigorous}), (ii) a commutative diagram theorem (Theorem~\ref{thm:ap_commutative}) establishing uniform convergence under spectral conditions, with complete proof providing explicit constants, and (iii) rigorous resource-complexity bounds (Theorem~\ref{thm:complexity_rigorous}) clarifying necessary and sufficient conditions for superlinear savings.

Our analysis demonstrates that time-scale separation alone does not guarantee quantum advantage: purely digital implementations yield at most constant-factor improvement. Genuine superlinear savings $\Omega(\kappa \cdot d_{\mathrm{fast}}^c)$ arise if and only if fast dynamics are resolved via native analog evolution or analytic adiabatic elimination. This precise characterization resolves misconceptions and establishes clear hardware requirements for resource-efficient stiff quantum simulation.

By providing explicit error bounds and resource accounting with quantified dependencies on $\eps$, $\Delta t$, and system dimension, this work enables rigorous comparison with alternative methodologies and guides the co-design of quantum hardware and multiscale algorithms. The framework's implications extend to quantum simulation of open systems, quantum thermodynamics, kinetic theory, and multiscale quantum control. Future directions include extending the AP language to non-Markovian dynamics, infinite-dimensional systems, and measurement-based adaptive protocols, as well as experimental validation on near-term quantum hardware with engineered dissipation.

\appendix
\section{Mathematical Techniques and Foundational Results}
\label{app:mathematical-techniques}

This appendix provides detailed explanations of the key mathematical techniques employed in the main text, including operator-theoretic foundations, perturbation theory, and quantum information-theoretic tools. We aim to make the framework accessible to readers with backgrounds in applied mathematics, theoretical physics, or quantum information science.

\subsection{Stinespring Dilation Theorem and Quantum Channel Representations}
\label{app:stinespring}

The Stinespring dilation theorem provides a fundamental structural characterization of completely positive maps, which underlies our treatment of quantum channels and their compositions.

\begin{theorem}[Stinespring Dilation]
\label{thm:stinespring}
Let $\cH$ and $\cK$ be Hilbert spaces, and let $\Phi : \cB(\cH) \to \cB(\cK)$ be a completely positive linear map. Then there exists a Hilbert space $\cE$, a $*$-representation $\pi : \cB(\cH) \to \cB(\cE)$, and a bounded operator $V : \cK \to \cE$ such that
\begin{equation}
\Phi(X) = V^\dagger \pi(X) V, \qquad \forall X \in \cB(\cH).
\end{equation}
If $\Phi$ is trace-preserving, then $V$ is an isometry: $V^\dagger V = \id_{\cK}$.
\end{theorem}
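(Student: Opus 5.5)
The plan is the standard GNS-type construction carried out directly from complete positivity. Because every space in this paper is finite-dimensional, I will also note a shortcut through the Choi matrix and Kraus operators.

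\textbf{Construction.} Form the algebraic tensor product $\cB(\cH)\odot\cK$ and define on it the sesquilinear form
\[
\Big\langle \sum_i X_i\otimes\xi_i,\ \sum_j Y_j\otimes\eta_j\Big\rangle := \sum_{i,j}\langle \xi_i, \Phi(X_i^\dagger Y_j)\eta_j\rangle_{\cK}.
\]
This form is positive semidefinite. For any finite family $X_1,\dots,X_n$, the block matrix $[X_i^\dagger X_j]_{i,j}$ is a positive element of $\cB(\cH)\otimes M_n$, since it equals $R^\dagger R$ with $R$ the row $(X_1,\dots,X_n)$. Complete positivity (Definition~\ref{def:cptp}, read for maps on $\cB(\cH)$) then gives $[\Phi(X_i^\dagger X_j)]\geq 0$, and pairing with $(\xi_1,\dots,\xi_n)$ yields $\langle v,v\rangle\geq 0$. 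Let $\cE$ be the completion of the quotient by the null space $\mathcal N=\{v:\langle v,v\rangle=0\}$; in finite dimensions no completion is needed. Then define
\[
\pi(A)\Big[\sum_i X_i\otimes\xi_i\Big] := \Big[\sum_i AX_i\otimes\xi_i\Big], \qquad V\xi := [\,\id\otimes\xi\,].
\]
Unwinding the definitions gives $\langle V\xi,\pi(X)V\eta\rangle=\langle\xi,\Phi(X)\eta\rangle$, which is exactly $\Phi(X)=V^\dagger\pi(X)V$. The representation identities $\pi(AB)=\pi(A)\pi(B)$ and $\pi(A^\dagger)=\pi(A)^\dagger$ also follow immediately from the form. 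Boundedness of $V$ follows from $\|V\xi\|^2=\langle\xi,\Phi(\id)\xi\rangle\leq\|\Phi(\id)\|\,\|\xi\|^2$.

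\textbf{Main obstacle.} The hard step is showing that $\pi(A)$ is well defined on the quotient, i.e.\ that it preserves $\mathcal N$, and that it is bounded. Both follow from one inequality, $\|\pi(A)v\|^2\leq\|A\|^2\|v\|^2$. To prove it, start from $A^\dagger A\leq\|A\|^2\id$, which gives
\[
[X_i^\dagger A^\dagger A X_j]_{i,j}\leq\|A\|^2\,[X_i^\dagger X_j]_{i,j}
\]
in $\cB(\cH)\otimes M_n$, because the difference equals $R^\dagger(\|A\|^2-A^\dagger A)R$ with $R$ the row above. Applying the positive map $\Phi\otimes\mathrm{id}_n$ and pairing with $(\xi_i)$ gives the inequality. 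Nothing beyond complete positivity is used, so I expect this step to carry through cleanly.

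\textbf{Finite-dimensional shortcut.} One can instead take the Kraus form $\Phi(X)=\sum_k K_k^\dagger X K_k$, obtained by diagonalizing the positive Choi matrix. Then set $\cE=\cH\otimes\C^m$, $\pi(X)=X\otimes\id_m$ and $V\xi=\sum_k K_k\xi\otimes e_k$.

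\textbf{Isometry clause.} Here I would flag that the clause must be read in the Heisenberg picture. Since $V^\dagger V=\Phi(\id)$, the operator $V$ is an isometry exactly when $\Phi$ is \emph{unital}. For $\Phi:\cB(\cH)\to\cB(\cK)$ as written, trace preservation does not force this. What holds is the duality: a CPTP Schrödinger-picture channel $\Phi_*$ has a unital dual $\Phi=\Phi_*^\dagger$, via $\tr[\Phi_*(\rho)]=\tr[\rho\,\Phi(\id)]$ for all $\rho$, so $\Phi(\id)=\id$. Applying the dilation to this dual gives $V^\dagger V=\id_\cK$. Taking adjoints then recovers the form $\Phi_*(\rho)=\tr_{\cE'}[V\rho V^\dagger]$ quoted in Section~\ref{subsec:operator_spaces}, once $\pi$ is identified with $X\mapsto X\otimes\id$, which holds for representations of $\cB(\cH)$ in finite dimensions. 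I will state the result with this interpretation.
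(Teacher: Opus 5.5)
Your GNS-type construction is exactly the paper's proof: the same form $\langle\xi,\Phi(X^\dagger Y)\eta\rangle$, the same $\pi$ by left multiplication, and the same $V\xi=[\id\otimes\xi]$. You also supply the details the paper only lists, namely positivity of the form and well-definedness of $\pi$ on the quotient via $\Phi\otimes\mathrm{id}_n$ applied to $R^\dagger(\|A\|^2-A^\dagger A)R\geq 0$.

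Your caveat on the final clause is correct, and the paper's proof never checks it. For the GNS dilation $V^\dagger V=\Phi(\id)$, and in general $\Phi(\id)=V^\dagger\pi(\id)V\leq V^\dagger V$, so the trace-preserving map $\Phi(X)=\tr[X]\,|0\rangle\langle 0|$ with $\dim\cH\geq 2$ (where $\|\Phi(\id)\|=\dim\cH>1$) admits no isometric $V$ at all. Unitality is therefore what is needed, and your Heisenberg-picture reading is the right fix.
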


\begin{remark}[Interpretation for CPTP maps]
For a CPTP map $\Phi : \cT(\cH) \to \cT(\cH)$, Theorem~\ref{thm:stinespring} implies the existence of an auxiliary environment $\cE$ and an isometry $V : \cH \to \cH \otimes \cE$ such that
\begin{equation}
\Phi(\rho) = \tr_{\cE}[V \rho V^\dagger], \qquad \rho \in \cT(\cH).
\label{eq:stinespring_cptp}
\end{equation}
This representation is crucial for our analysis because:
\begin{enumerate}
    \item It justifies the physical interpretation of CPTP maps as unitary evolution on a larger system followed by partial trace.
    \item It enables the derivation of diamond-norm bounds via purification arguments: for any two CPTP maps $\Phi, \Psi$, 
    \begin{equation}
    \dnorm{\Phi - \Psi} = \sup_{\ket{\psi} \in \cH \otimes \cH_{\mathrm{ref}}} \norm{(\Phi \otimes \id)(\ketbra{\psi}{\psi}) - (\Psi \otimes \id)(\ketbra{\psi}{\psi})}_1,
    \end{equation}
    where the supremum is achieved on pure states due to convexity of the trace norm.
    \item It facilitates the analysis of layered protocols: the composition $\Phi_2 \circ \Phi_1$ corresponds to sequential isometric embeddings $V_2 \circ V_1$, with error accumulation controlled by submultiplicativity of the diamond norm.
\end{enumerate}
\end{remark}

\begin{proof}[Proof of Theorem~\ref{thm:stinespring}]
The construction proceeds via the GNS-like representation:
\begin{enumerate}
    \item Define a sesquilinear form on $\cB(\cH) \otimes \cK$ by $\langle X \otimes \psi, Y \otimes \phi \rangle_\Phi := \langle \psi, \Phi(X^\dagger Y) \phi \rangle_{\cK}$.
    \item Complete the quotient by null vectors to obtain a Hilbert space $\cE$.
    \item Define $\pi(X)(Y \otimes \phi) := (XY) \otimes \phi$ and $V\psi := \id \otimes \psi$.
    \item Verify that $\Phi(X) = V^\dagger \pi(X) V$ and that $\pi$ is a $*$-representation.
\end{enumerate}
For finite-dimensional $\cH$, one may take $\dim \cE \leq (\dim \cH)^2$; minimal dilations are unique up to unitary equivalence on $\cE$.
\end{proof}

\subsection{Spectral Theory of Lindbladian Generators}
\label{app:lindblad-spectral}

The analysis of stiff open quantum systems relies on detailed spectral properties of Lindbladian generators. We collect here the key results used in the main text.

\begin{lemma}[Spectral decomposition of primitive Lindbladians]
\label{lem:lindblad_spectral_decomp}
Let $\cL$ be a Lindbladian generator on finite-dimensional $\cH$ satisfying Assumption~\ref{ass:spectral_gap_rigorous}. Then:
\begin{enumerate}
    \item The spectrum admits the decomposition
    \begin{equation}
    \spec(\cL) = \{0\} \cup \{\lambda \in \C : \re \lambda \leq -\lambda_{\mathrm{gap}}\},
    \end{equation}
    where $\lambda_{\mathrm{gap}} > 0$ is the spectral gap.
    
    \item The spectral projection $\cP$ onto $\ker \cL$ is given by the ergodic average
    \begin{equation}
    \cP(X) = \lim_{T \to \infty} \frac{1}{T} \int_0^T e^{t\cL}(X) \, dt = \tr[X] \rho_\star,
    \end{equation}
    where $\rho_\star$ is the unique steady state.
    
    \item The Drazin inverse $\cL^+$ on $\ran(\id - \cP)$ is well-defined and satisfies
    \begin{equation}
    \cL^+ = -\int_0^\infty e^{t\cL} (\id - \cP) \, dt, \qquad \norm{\cL^+} \leq \frac{1}{\lambda_{\mathrm{gap}}}.
    \label{eq:drazin_inverse}
    \end{equation}
\end{enumerate}
\end{lemma}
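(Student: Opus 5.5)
The plan is to work entirely in the finite-dimensional Banach space of superoperators on $\cT(\cH)$. Every claim then reduces to statements about a single matrix $\cL$ whose eigenvalue $0$ is simple, whose remaining eigenvalues lie in $\re\lambda \le -\lambda_{\mathrm{gap}}$, and which generates a contraction semigroup. I will take the three parts in order. Each later part uses the Riesz projection $\cP = \frac{1}{2\pi i}\oint_{|\zeta|=r}(\zeta-\cL)^{-1}\,d\zeta$, with $r<\lambda_{\mathrm{gap}}$, together with the resulting invariant splitting $\cT(\cH)=\ran\cP\oplus\ran(\id-\cP)$.

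For part (1), Lemma~\ref{lem:lindblad_spectrum}(1) gives $\spec(\cL)\subset\{\re z\le 0\}$, and the definition of $\lambda_{\mathrm{gap}}$ in Assumption~\ref{ass:spectral_gap_rigorous}(2) then gives $\spec(\cL)\subseteq\{0\}\cup\{\re\lambda\le-\lambda_{\mathrm{gap}}\}$. The displayed set equality can only mean this inclusion, since the spectrum is finite; I will state it that way.

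Part (2) has two separate claims.
\begin{enumerate}
\item \emph{Semisimplicity of $0$.} I will first show that $0$ is a semisimple eigenvalue, with no Jordan block. If there were a nontrivial block, $\|e^{t\cL}\|$ would grow polynomially in $t$, contradicting $\|e^{t\cL}\|\le 1$ for CPTP maps.
\item \emph{Rank one.} Uniqueness of the steady state in Assumption~\ref{ass:spectral_gap_rigorous}(1) gives $\ker\cL=\operatorname{span}\{\rho_\star\}$, so $\cP$ has rank one and $\cP(X)=\ell(X)\rho_\star$ for some linear functional $\ell$. Since $\ran(\id-\cP)$ is the sum of the other generalized eigenspaces and $\tr\circ\cL=0$, the trace vanishes on $\ran(\id-\cP)$. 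Applying the trace to $X=\cP X+(\id-\cP)X$ then gives $\ell=\tr$.
\end{enumerate}
For the ergodic average, I split $e^{t\cL}=\cP+e^{t\cL}(\id-\cP)$. On $\ran(\id-\cP)$ the spectral bound gives $\|e^{t\cL}(\id-\cP)\|\le C e^{-\mu t}$ for any $\mu<\lambda_{\mathrm{gap}}$, by the standard Jordan-form estimate, or directly by Remark~\ref{rem:diamond_convergence}. The time average of this term is $O(1/T)$, which yields the limit.

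For part (3), $\cL$ restricted to $\ran(\id-\cP)$ has spectrum in $\{\re\lambda\le-\lambda_{\mathrm{gap}}\}$ and is therefore invertible there. Using the same exponential bound, the integral $-\int_0^\infty e^{t\cL}(\id-\cP)\,dt$ converges absolutely. Differentiating $e^{t\cL}(\id-\cP)$ and integrating from $0$ to $\infty$ shows that $\cL$ composed with this integral equals $\id-\cP$, and that it annihilates $\ran\cP$. Hence the integral is the Drazin inverse $\cL^+$.

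The main obstacle is the norm estimate $\|\cL^+\|\le 1/\lambda_{\mathrm{gap}}$. Integrating Remark~\ref{rem:diamond_convergence} only gives $\|\cL^+\|\le C/\lambda_{\mathrm{gap}}$, and non-normal Lindbladians can have transient growth with $C>1$. I would first try to obtain $C=1$ from a dissipativity (contractivity) estimate on $\ran(\id-\cP)$, for example in a detailed-balance weighted Hilbert--Schmidt norm, where $\re\langle X,\cL X\rangle\le-\lambda_{\mathrm{gap}}\|X\|^2$ holds for reversible $\cL$. If that fails in general, I would prove the bound with the constant $C$ from~\eqref{eq:diamond_convergence}, and record that the sharp constant $1$ holds under an extra normality or reversibility hypothesis.
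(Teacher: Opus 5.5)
You have located the weak point correctly, and the problem is worse than a missing constant. The paper's proof of part (3) gets $\norm{\cL^+}\le 1/\lambda_{\mathrm{gap}}$ by integrating $\norm{e^{t\cL}(\id-\cP)}\le Ce^{-\lambda_{\mathrm{gap}}t}$ and silently setting $C=1$, which is exactly the step you flagged. In the paper's norm (the operator norm induced by the trace norm, and a fortiori the diamond norm) the inequality is false, even for the best-behaved generators.

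\begin{itemize}
\item \emph{Depolarizing generator.} Take $\cL(X)=\gamma(\tr[X]\,\id/d-X)$. It is primitive, unital, normal and reversible, with $\spec(\cL)=\{0,-\gamma\}$, $\lambda_{\mathrm{gap}}=\gamma$, $\cP(X)=\tr[X]\,\id/d$ and $\cL^+=-\gamma^{-1}(\id-\cP)$ on all of $\cT(\cH)$, as the displayed integral defines it. Since $\norm{\ketbra{\psi}{\psi}-\id/d}_1=2(1-1/d)$, this gives $\norm{\cL^+}\ge 2(1-1/d)/\lambda_{\mathrm{gap}}>1/\lambda_{\mathrm{gap}}$ for $d\ge 3$.
\item \emph{Non-normal cascade.} Take $\lambda\sum_{k<n}\cD[\ketbra{k+1}{k}]$, made primitive by an arbitrarily weak return jump. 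The coherences $\ketbra{k}{n}$ fix $\lambda_{\mathrm{gap}}\approx\lambda/2$, while $\norm{\cL^+(\ketbra{1}{1}-\ketbra{n}{n})}_1\approx 2(n-1)/\lambda$. So no universal constant works, even for the restriction of $\cL^+$ to $\ran(\id-\cP)$.
\end{itemize}

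Your first plan can therefore only succeed in the weighted norm itself. Reversibility gives $\norm{\cL^+}=1/\lambda_{\mathrm{gap}}$ in the detailed-balance weighted Hilbert--Schmidt norm, and normality gives it in the plain Hilbert--Schmidt norm. Neither gives it in trace norm, so the remark you intend to record must name that norm. What is provable in the paper's norm is your fallback, with one caveat. Remark~\ref{rem:diamond_convergence} with rate exactly $\lambda_{\mathrm{gap}}$ fails if an eigenvalue on the line $\re\lambda=-\lambda_{\mathrm{gap}}$ carries a Jordan block. The clean statement is therefore $\norm{\cL^+}\le C_\mu/\mu$ for every $\mu<\lambda_{\mathrm{gap}}$, consistent with what you already wrote in part (2).

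On the rest of parts (1)--(3), your route differs from the paper's and is the more reliable one. The paper's steps are these:
\begin{itemize}
\item it derives $\re\lambda\le 0$ from monotonicity of $\tr[\rho(t)^2]$, which is false for non-unital generators (amplitude damping, including the cavity dissipator $\kappa\cD[a]$ of Section~\ref{subsec:cavity_qed_rigorous}, purifies mixed states);
\item it gets simplicity of $0$ from Perron--Frobenius under primitivity;
\item it cites ``standard ergodic theory'' for (2);
\item it evaluates the Laplace-transformed resolvent at $z=0$ for (3).
\end{itemize}
Your trace-norm contractivity argument correctly gives both $\re\lambda\le 0$ and the absence of a Jordan block at $0$. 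Your identification $\ell=\tr$ via $\tr\circ\cL=0$ on $\ran(\id-\cP)$ makes explicit what the paper only asserts. Your fundamental-theorem-of-calculus check of the Drazin identities is equivalent to the resolvent formula, and reading (1) as an inclusion is right.

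One step needs a line of justification: uniqueness of the steady state among density operators does not by itself give $\dim\ker\cL=1$. Either note that $\ker\cL$ is spanned by Hermitian elements and that the positive and negative parts of a Hermitian fixed point of the positive trace-preserving maps $e^{t\cL}$ are again fixed. Or use Perron--Frobenius for the strictly positive map $e^{t_0\cL}$ from Assumption~\ref{ass:spectral_gap_rigorous}(3), as the paper does.
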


\begin{proof}
Part (1) follows from the dissipativity of Lindbladians: for any $\rho \in \cD(\cH)$, the function $t \mapsto \tr[\rho(t)^2]$ is non-increasing along solutions of $\dot\rho = \cL(\rho)$, implying $\re \lambda \leq 0$ for all eigenvalues. Primitivity ensures that $0$ is a simple eigenvalue with strictly positive eigenvector $\rho_\star$, and all other eigenvalues have real parts bounded away from zero by $\lambda_{\mathrm{gap}}$.

Part (2) is a standard result in ergodic theory for Markov semigroups: the time average converges to the projection onto the fixed-point subspace, which for primitive Lindbladians is one-dimensional.

Part (3) follows from the Laplace transform representation of the resolvent: for $\re z > -\lambda_{\mathrm{gap}}$,
\begin{equation}
(z - \cL)^{-1} = \int_0^\infty e^{-zt} e^{t\cL} \, dt,
\end{equation}
and evaluating at $z = 0$ on $\ran(\id - \cP)$ yields the Drazin inverse formula. The norm bound follows from exponential convergence: $\norm{e^{t\cL}(\id - \cP)} \leq C e^{-\lambda_{\mathrm{gap}} t}$.
\end{proof}

\subsection{Duhamel's Principle for Operator Semigroups}
\label{app:duhamel}

Duhamel's principle provides an integral representation for the difference between two semigroups, which is essential for analyzing the interaction between fast and slow dynamics.

\begin{lemma}[Duhamel expansion for perturbed generators]
\label{lem:duhamel}
Let $\cL_0$ and $\cL_1$ be bounded generators on a Banach space $\cX$, and define $\cL_\eps = \cL_0 + \eps \cL_1$. Then for any $t \geq 0$,
\begin{equation}
e^{t\cL_\eps} = e^{t\cL_0} + \eps \int_0^t e^{(t-s)\cL_0} \cL_1 e^{s\cL_\eps} \, ds.
\label{eq:duhamel}
\end{equation}
Iterating yields the Dyson series
\begin{equation}
e^{t\cL_\eps} = e^{t\cL_0} + \sum_{n=1}^\infty \eps^n \int_{0 \leq s_1 \leq \cdots \leq s_n \leq t} e^{(t-s_n)\cL_0} \cL_1 e^{(s_n-s_{n-1})\cL_0} \cdots \cL_1 e^{s_1\cL_0} \, ds_1 \cdots ds_n.
\end{equation}
\end{lemma}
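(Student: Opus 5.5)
The plan is to prove the Duhamel identity \eqref{eq:duhamel} by differentiating an interpolating family of bounded operators, and then to obtain the Dyson series by iteration with a norm estimate to control convergence. Everything takes place in the Banach algebra $\cB(\cX)$ of bounded operators on $\cX$. Since $\cL_0$ and $\cL_1$ are bounded, $t\mapsto e^{t\cL_0}$ and $t\mapsto e^{t\cL_\eps}$ are given by norm-convergent power series. They are therefore norm-differentiable, with $\frac{d}{dt}e^{t\cL}=\cL e^{t\cL}=e^{t\cL}\cL$. This lets us use the fundamental theorem of calculus for Bochner (or Riemann) integrals of continuous $\cB(\cX)$-valued functions.

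\textbf{Step 1 (Duhamel identity).} For fixed $t\ge 0$, define $F(s):=e^{(t-s)\cL_0}e^{s\cL_\eps}$ for $s\in[0,t]$. By the product rule in $\cB(\cX)$,
\[
F'(s)=e^{(t-s)\cL_0}\bigl(-\cL_0+\cL_\eps\bigr)e^{s\cL_\eps}=\eps\, e^{(t-s)\cL_0}\cL_1 e^{s\cL_\eps}.
\]
This derivative is norm-continuous in $s$. Integrating from $0$ to $t$ gives $F(t)-F(0)=e^{t\cL_\eps}-e^{t\cL_0}$, which is exactly \eqref{eq:duhamel}.

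\textbf{Step 2 (iteration).} Substitute \eqref{eq:duhamel}, with $t$ replaced by $s$, into the factor $e^{s\cL_\eps}$ under the integral, and repeat. After $n$ substitutions we obtain the terms of order $\eps^k$ for $k\le n$, as time-ordered integrals over the simplices $0\le s_1\le\cdots\le s_k\le t$, plus a remainder. The remainder $R_{n+1}$ has the same structure, except that the innermost factor $e^{s_1\cL_0}$ is replaced by $e^{s_1\cL_\eps}$. This is a routine induction; relabelling the nested integration variables gives the ordering displayed in the statement. One point to take care of is that the variable order must match the displayed formula: the outermost propagator is $e^{(t-s_n)\cL_0}$ and the innermost is $e^{s_1\cL_0}$.

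\textbf{Step 3 (convergence).} This is the only nontrivial step. Set $a=\norm{\cL_0}$, $b=\norm{\cL_1}$, and use the crude bound $\norm{e^{s\cL}}\le e^{s\norm{\cL}}$. The $k$-th term is then bounded by $\eps^k b^k e^{ta}\, t^k/k!$, since the simplex has volume $t^k/k!$. The remainder is bounded by $\eps^{n+1}b^{n+1}e^{t(a+\eps b)}\,t^{n+1}/(n+1)!$, which tends to $0$ as $n\to\infty$. Hence the series converges absolutely in operator norm, uniformly for $t$ in compact sets, and it equals $e^{t\cL_\eps}$.

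No contractivity or CPTP structure is needed at any point, so the lemma holds for arbitrary bounded generators, as stated. The expected obstacle is therefore only bookkeeping: getting the time-ordering and indices right in the induction. The analysis itself is elementary because all operators are bounded.
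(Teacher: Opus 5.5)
Your proof is correct, but it reaches \eqref{eq:duhamel} by a different route than the paper.

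The paper differentiates the right-hand side in $t$ and invokes uniqueness for the Cauchy problem. You instead differentiate the interpolant $F(s)=e^{(t-s)\cL_0}e^{s\cL_\eps}$ in $s$ and integrate. That uses only the product rule and the fundamental theorem of calculus in $\cB(\cX)$, and it derives the formula rather than verifying a candidate.

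This difference matters here. The paper's displayed computation actually produces $\cL_0(\mathrm{RHS})+\eps\cL_1e^{t\cL_\eps}$. Rewriting this as $(\cL_0+\eps\cL_1)(\mathrm{RHS})$ presupposes the identity being proved. The uniqueness route becomes valid only once one notes that $e^{t\cL_\eps}$ and the right-hand side both solve the inhomogeneous problem $Z'=\cL_0Z+\eps\cL_1e^{t\cL_\eps}$, $Z(0)=\id$. Their difference then solves $D'=\cL_0D$, $D(0)=0$, and so vanishes. Your argument avoids this issue entirely.

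For the Dyson series, the paper only asserts that iteration yields it and gives no convergence argument. You supply the missing piece: the explicit remainder with innermost factor $e^{s_1\cL_\eps}$, and the simplex-volume bound $\eps^{n+1}\norm{\cL_1}^{n+1}e^{t(\norm{\cL_0}+\eps\norm{\cL_1})}t^{n+1}/(n+1)!\to 0$. Together these justify that the infinite series converges in operator norm to $e^{t\cL_\eps}$, so your version is the more complete of the two.
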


\begin{proof}
Differentiate both sides of~\eqref{eq:duhamel} with respect to $t$:
\begin{align*}
\frac{d}{dt} \left[ e^{t\cL_0} + \eps \int_0^t e^{(t-s)\cL_0} \cL_1 e^{s\cL_\eps} \, ds \right] 
&= \cL_0 e^{t\cL_0} + \eps \cL_1 e^{t\cL_\eps} + \eps \int_0^t \cL_0 e^{(t-s)\cL_0} \cL_1 e^{s\cL_\eps} \, ds \\
&= (\cL_0 + \eps \cL_1) \left[ e^{t\cL_0} + \eps \int_0^t e^{(t-s)\cL_0} \cL_1 e^{s\cL_\eps} \, ds \right] \\
&= \cL_\eps \cdot (\text{RHS}),
\end{align*}
and both sides agree at $t = 0$. Uniqueness of solutions to the abstract Cauchy problem implies the identity. The Dyson series follows by iterating the integral equation.
\end{proof}

\begin{remark}[Application to stiff generators]
For the singularly perturbed generator $\cL_\eps = \eps^{-1} \cL_{\mathrm{fast}} + \cL_{\mathrm{slow}}$, we apply Lemma~\ref{lem:duhamel} with $\cL_0 = \eps^{-1} \cL_{\mathrm{fast}}$ and $\cL_1 = \cL_{\mathrm{slow}}$. The key observation is that $e^{t \cL_0} = e^{(t/\eps) \cL_{\mathrm{fast}}}$ converges exponentially to $\cP$ as $t/\eps \to \infty$. Thus, for fixed $\Delta t > 0$,
\begin{equation}
\int_0^{\Delta t} e^{(\Delta t - s)\cL_0} \cL_{\mathrm{slow}} e^{s\cL_\eps} \, ds = \int_0^{\Delta t} \cP \cL_{\mathrm{slow}} \cP \, ds + \mathcal{O}(\eps),
\end{equation}
where the $\mathcal{O}(\eps)$ term arises from the exponential decay of $(\id - \cP) e^{t\cL_0}$ and the centering condition~\eqref{eq:centering_condition}. This yields the uniform error bound $\mathcal{O}(\eps \Delta t + (\Delta t)^2)$ in Theorem~\ref{thm:ap_commutative}.
\end{remark}

\subsection{Schur Complement and Adiabatic Elimination}
\label{app:schur-complement}

The effective generator~\eqref{eq:effective_generator_corrected} arises from a Schur complement reduction, a standard technique in singular perturbation theory.

\begin{lemma}[Schur complement for block operators]
\label{lem:schur_complement}
Let $\cX = \cX_1 \oplus \cX_2$ be a direct sum decomposition, and let $\cL : \cX \to \cX$ have block matrix representation
\begin{equation}
\cL = \begin{pmatrix} A & B \\ C & D \end{pmatrix},
\end{equation}
where $D$ is invertible on $\cX_2$. Then the Schur complement of $D$ in $\cL$ is
\begin{equation}
S = A - B D^{-1} C,
\end{equation}
and satisfies: if $(x_1, x_2)^\top \in \ker \cL$, then $x_1 \in \ker S$. Conversely, if $x_1 \in \ker S$, then $(x_1, -D^{-1} C x_1)^\top \in \ker \cL$.
\end{lemma}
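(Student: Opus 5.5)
The plan is to write the block equation $\cL(x_1,x_2)^\top = 0$ as two component equations and eliminate $x_2$ using the invertibility of $D$ on $\cX_2$. With respect to $\cX = \cX_1 \oplus \cX_2$ the equation reads
\begin{equation*}
A x_1 + B x_2 = 0, \qquad C x_1 + D x_2 = 0 .
\end{equation*}
The first step is to solve the second equation for $x_2$. Since $D$ is invertible on $\cX_2$ and $C x_1 \in \cX_2$, we get $x_2 = -D^{-1} C x_1$, and this is the only possible value.

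Next we substitute into the first equation. This gives $A x_1 - B D^{-1} C x_1 = 0$, that is, $S x_1 = 0$, which is the forward direction.

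For the converse, take $x_1 \in \ker S$ and put $x_2 := -D^{-1} C x_1 \in \cX_2$. The second component is then $C x_1 + D x_2 = C x_1 - C x_1 = 0$. The first component is $A x_1 - B D^{-1} C x_1 = S x_1 = 0$. Hence $(x_1, -D^{-1}Cx_1)^\top \in \ker\cL$.

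The argument is purely algebraic, so there is no real obstacle. The one point to get right is that $B$, $C$ and $D^{-1}$ are the block components with the correct domains and codomains. Specifically, $C : \cX_1 \to \cX_2$, $D^{-1} : \cX_2 \to \cX_2$ and $B : \cX_2 \to \cX_1$, so the composition $B D^{-1} C$ is a well-defined map $\cX_1 \to \cX_1$. I would state this explicitly. As a remark, the two directions together show that $x_1 \mapsto (x_1, -D^{-1}Cx_1)^\top$ is a linear bijection from $\ker S$ onto $\ker \cL$.

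When this is applied to the effective generator~\eqref{eq:effective_generator_corrected}, one takes $\cX_1 = \ran\cP$ and $\cX_2 = \ran(\id-\cP)$. The invertibility of $D$ is then supplied by the Drazin inverse of Lemma~\ref{lem:lindblad_spectral_decomp}(3). That use is outside the present lemma and is not needed for its proof.
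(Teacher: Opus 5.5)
Your proposal is correct and follows the paper's own argument: write $\cL(x_1,x_2)^\top=0$ as the two block equations, solve the second for $x_2=-D^{-1}Cx_1$, and substitute into the first to get $Sx_1=0$. You also check the converse explicitly, whereas the paper's ``direct computation'' leaves that direction implicit, so your version is slightly more complete.
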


\begin{proof}
Direct computation: $\cL \binom{x_1}{x_2} = 0$ implies $A x_1 + B x_2 = 0$ and $C x_1 + D x_2 = 0$. Solving the second equation for $x_2 = -D^{-1} C x_1$ and substituting into the first yields $S x_1 = 0$.
\end{proof}

\begin{corollary}[Effective generator via Schur complement]
\label{cor:effective_generator}
Under Assumption~\ref{ass:spectral_gap_rigorous}, decompose $\cT(\cH) = \ker \cL_{\mathrm{fast}} \oplus \ran(\id - \cP)$. The generator $\cL_\eps = \eps^{-1} \cL_{\mathrm{fast}} + \cL_{\mathrm{slow}}$ has block form
\begin{equation}
\cL_\eps = \begin{pmatrix} 
\cP \cL_{\mathrm{slow}} \cP & \cP \cL_{\mathrm{slow}} (\id - \cP) \\
(\id - \cP) \cL_{\mathrm{slow}} \cP & \eps^{-1} \cL_{\mathrm{fast}} + (\id - \cP) \cL_{\mathrm{slow}} (\id - \cP)
\end{pmatrix}.
\end{equation}
The Schur complement of the lower-right block is
\begin{equation}
S_\eps = \cP \cL_{\mathrm{slow}} \cP - \eps \, \cP \cL_{\mathrm{slow}} (\id - \cP) \left[ \cL_{\mathrm{fast}} + \eps (\id - \cP) \cL_{\mathrm{slow}} (\id - \cP) \right]^{-1} (\id - \cP) \cL_{\mathrm{slow}} \cP.
\end{equation}
Expanding the inverse in powers of $\eps$ and using $\cL_{\mathrm{fast}}^+ = -\int_0^\infty e^{t\cL_{\mathrm{fast}}} (\id - \cP) \, dt$ yields
\begin{equation}
S_\eps = \cP \cL_{\mathrm{slow}} \cP - \eps \, \cP \cL_{\mathrm{slow}} \cL_{\mathrm{fast}}^+ (\id - \cP) \cL_{\mathrm{slow}} \cP + \mathcal{O}(\eps^2),
\end{equation}
which, upon rescaling time by $\eps$, gives the effective generator~\eqref{eq:effective_generator_corrected}.
\end{corollary}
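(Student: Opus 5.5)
The proof is a direct block computation with respect to the decomposition $\cT(\cH) = \ran\cP \oplus \ran(\id-\cP)$, followed by a Neumann-series expansion of the inverse of the lower-right block. Write $\cQ := \id - \cP$.

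\textbf{Block form.} Since $\cP$ is the spectral projection of $\cL_{\mathrm{fast}}$ for the eigenvalue $0$, it commutes with $\cL_{\mathrm{fast}}$. Moreover $\ran\cP = \ker\cL_{\mathrm{fast}}$ by Assumption~\ref{ass:spectral_gap_rigorous}(1) and Lemma~\ref{lem:lindblad_spectrum}(3). Hence $\cL_{\mathrm{fast}}\cP = \cP\cL_{\mathrm{fast}} = 0$, and the only nonzero block of $\cL_{\mathrm{fast}}$ is $\cQ\cL_{\mathrm{fast}}\cQ = \cL_{\mathrm{fast}}\cQ$. Sandwiching $\cL_\eps = \eps^{-1}\cL_{\mathrm{fast}} + \cL_{\mathrm{slow}}$ between $\cP$ and $\cQ$ then gives exactly the stated $2\times 2$ block matrix. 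This identifies $A = \cP\cL_{\mathrm{slow}}\cP$, $B = \cP\cL_{\mathrm{slow}}\cQ$, $C = \cQ\cL_{\mathrm{slow}}\cP$ and $D = \eps^{-1}\cL_{\mathrm{fast}} + \cQ\cL_{\mathrm{slow}}\cQ$ on $\ran\cQ$.

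\textbf{Invertibility of $D$.} By Lemma~\ref{lem:lindblad_spectral_decomp}(3), $\cL_{\mathrm{fast}}$ restricted to $\ran\cQ$ is invertible with inverse $\cL_{\mathrm{fast}}^+$. To confirm that the integral formula is this inverse, compute
\[
\cL_{\mathrm{fast}}\Bigl(-\int_0^\infty e^{t\cL_{\mathrm{fast}}}\cQ\,dt\Bigr) = -\bigl[e^{t\cL_{\mathrm{fast}}}\cQ\bigr]_0^\infty = \cQ ,
\]
where the boundary term at infinity vanishes by Remark~\ref{rem:diamond_convergence}. Write
\[
D = \eps^{-1}\cL_{\mathrm{fast}}\bigl(\id + \eps\,\cL_{\mathrm{fast}}^+\cQ\cL_{\mathrm{slow}}\cQ\bigr) \quad\text{on } \ran\cQ .
\]
Suppose $\eps\,\norm{\cL_{\mathrm{fast}}^+}\,\norm{\cL_{\mathrm{slow}}} < 1$, which holds for instance when $\eps < \lambda_{\mathrm{gap}}/\norm{\cL_{\mathrm{slow}}}$ using the bound $\norm{\cL_{\mathrm{fast}}^+}\le 1/\lambda_{\mathrm{gap}}$ as stated in Lemma~\ref{lem:lindblad_spectral_decomp}(3). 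Then the bracket is invertible by a Neumann series, and
\[
D^{-1} = \eps\bigl[\cL_{\mathrm{fast}} + \eps\cQ\cL_{\mathrm{slow}}\cQ\bigr]^{-1} = \eps\sum_{k\ge0}(-\eps)^k\bigl(\cL_{\mathrm{fast}}^+\cQ\cL_{\mathrm{slow}}\cQ\bigr)^k\cL_{\mathrm{fast}}^+ .
\]
Substituting into $S_\eps = A - BD^{-1}C$ from Lemma~\ref{lem:schur_complement} gives the first displayed formula. Truncating the series after $k=0$ gives the expansion $S_\eps = \cP\cL_{\mathrm{slow}}\cP - \eps\,\cP\cL_{\mathrm{slow}}\cL_{\mathrm{fast}}^+\cQ\cL_{\mathrm{slow}}\cP + \mathcal{O}(\eps^2)$. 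The remainder norm is at most $\eps^2\norm{\cL_{\mathrm{slow}}}^3\norm{\cL_{\mathrm{fast}}^+}^2/(1-\eps\norm{\cL_{\mathrm{fast}}^+}\norm{\cL_{\mathrm{slow}}})$, which is uniform once $\eps$ is below the threshold above.

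\textbf{Main obstacle: the final identification with~\eqref{eq:effective_generator_corrected}.} The expansion carries an explicit factor $\eps$ on the second-order term, whereas~\eqref{eq:effective_generator_corrected} has none. The route I would try first is to invoke the centering condition~\eqref{eq:centering_condition}: it makes $B = \cP\cL_{\mathrm{slow}}\cQ$ vanish and leaves only $\cP\cL_{\mathrm{slow}}\cP$ at leading order. The second-order term would then be recovered by working on the slow time scale, where the slow coupling is regarded as $\mathcal{O}(\eps^{-1/2})$ relative to its drift part, as in the cavity example with $\gamma = 4g^2/\kappa$. That rescaling turns $\eps\cdot(\eps^{-1/2})^2$ into an $\mathcal{O}(1)$ term. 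This bookkeeping has to be made explicit, by splitting $\cL_{\mathrm{slow}}$ into drift and coupling parts with distinct $\eps$-scalings. Without that split the claimed equality holds only formally, so I expect this to be the delicate step. The algebraic parts, namely the block form and the Neumann expansion, are routine.
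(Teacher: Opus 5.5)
Your block decomposition, the check that $-\int_0^\infty e^{t\cL_{\mathrm{fast}}}\cQ\,dt$ inverts $\cL_{\mathrm{fast}}$ on $\ran\cQ$, and the Neumann expansion with its $\mathcal{O}(\eps^2)$ remainder are all correct. They coincide with the derivation the paper writes into the statement itself; there is no separate proof. Keep the smallness condition in the form $\eps\norm{\cL_{\mathrm{fast}}^+}\norm{\cL_{\mathrm{slow}}}<1$, though. The paper's argument for Lemma~\ref{lem:lindblad_spectral_decomp}(3) integrates $\norm{e^{t\cL}(\id-\cP)}\le Ce^{-\lambda_{\mathrm{gap}}t}$, so it only yields $\norm{\cL_{\mathrm{fast}}^+}\le C/\lambda_{\mathrm{gap}}$. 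You are also right that the last clause does not follow. Rescaling time multiplies the whole generator by one factor and cannot remove the $\eps$ from the second term alone, and the paper offers nothing beyond the assertion.

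The genuine gap is in your proposed repair. Condition~\eqref{eq:centering_condition} says exactly $B=\cP\cL_{\mathrm{slow}}\cQ=0$, and then $BD^{-1}C=0$ for every $\eps$. The Schur complement is then exactly $\cP\cL_{\mathrm{slow}}\cP$, so no rescaling can ``recover'' a second-order term. Your drift/coupling bookkeeping needs a different condition: the one the cavity example actually uses ($\cP H_{\mathrm{int}}\cP=0$), i.e.\ vanishing of the \emph{first-order} term of the coupling part only. Take $\cL_\eps=\eps^{-1}\cL_{\mathrm{fast}}+\eps^{-1/2}\cL_{\mathrm{c}}+\cL_{\mathrm{d}}$ with $\cP\cL_{\mathrm{c}}\cP=0$. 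Your own Neumann argument, now in powers of $\eps^{1/2}$, gives
\[
S_\eps=\cP\cL_{\mathrm{d}}\cP-\cP\cL_{\mathrm{c}}\cL_{\mathrm{fast}}^+\cQ\cL_{\mathrm{c}}\cP+\mathcal{O}(\eps^{1/2}).
\]
This has the shape of~\eqref{eq:effective_generator_corrected}, but with different operators in the two slots. Without such a split, the correct output is the $\eps$-dependent generator $\cP\cL_{\mathrm{slow}}\cP-\eps\,\cP\cL_{\mathrm{slow}}\cL_{\mathrm{fast}}^+\cQ\cL_{\mathrm{slow}}\cP$. This is consistent with the Purcell rate $\gamma\propto g^2/\kappa=\eps g$ carrying exactly that factor. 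Note finally what happens under Assumption~\ref{ass:spectral_gap_rigorous} as literally stated. There $\ker\cL_{\mathrm{fast}}$ is one-dimensional and $\cP(X)=\tr[X]\,\rho^\star_{\mathrm{fast}}$, so trace annihilation by the Lindbladian $\cL_{\mathrm{slow}}$ gives $\cP\cL_{\mathrm{slow}}=0$. Then $A=B=0$, $S_\eps\equiv 0$, and \eqref{eq:effective_generator_corrected} vanishes as well, so the literal statement holds trivially. The nontrivial situation, e.g.\ $\cP(\rho)=|0\rangle\langle 0|\otimes\tr_{\mathrm{cav}}(\rho)$, violates Assumption~\ref{ass:spectral_gap_rigorous}(1), and it is there that the split is required.
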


\begin{remark}[Complete positivity of the effective generator]
While the Schur complement formula~\eqref{eq:effective_generator_corrected} is derived algebraically, ensuring that $\cL_{\mathrm{slow}}^{\mathrm{eff}}$ generates a CPTP semigroup requires additional structure. For Lindbladian generators, this holds to second order under the centering condition~\eqref{eq:centering_condition} and additional coupling constraints; see \cite{Ticozzi} for sufficient conditions ensuring complete positivity preservation under adiabatic elimination. The key is that the correction term $-\cP \cL_{\mathrm{slow}} \cL_{\mathrm{fast}}^+ (\id - \cP) \cL_{\mathrm{slow}} \cP$ can be written as a sum of dissipators when $\cL_{\mathrm{fast}}$ and $\cL_{\mathrm{slow}}$ arise from physical couplings satisfying appropriate commutation relations.
\end{remark}

\subsection{Diamond Norm: Properties and Estimation Techniques}
\label{app:diamond-norm}

The diamond norm is the operational metric for quantum channel distinguishability. We collect here the properties used in our error analysis.

\begin{proposition}[Key properties of the diamond norm]
\label{prop:diamond_properties}
Let $\Phi, \Psi : \cT(\cH) \to \cT(\cH)$ be linear maps. Then:
\begin{enumerate}
    \item \textbf{Stability}: $\dnorm{\Phi \otimes \id_n} = \dnorm{\Phi}$ for all $n \in \N$.
    
    \item \textbf{Submultiplicativity}: $\dnorm{\Phi \circ \Psi} \leq \dnorm{\Phi} \cdot \dnorm{\Psi}$.
    
    \item \textbf{Triangle inequality}: $\dnorm{\Phi + \Psi} \leq \dnorm{\Phi} + \dnorm{\Psi}$.
    
    \item \textbf{Duality}: $\dnorm{\Phi} = \sup \{ |\tr[Y^\dagger \Phi(X)]| : \dnorm{X} \leq 1, \dnorm{Y} \leq 1 \}$, where the dual norm is also the diamond norm.
    
    \item \textbf{Semidefinite characterization}: For Hermiticity-preserving $\Phi$,
    \begin{equation}
    \dnorm{\Phi} = \max \{ \tr[J(\Phi) (W_0 \otimes \id)] : W_0 \geq 0, \tr[W_0] = 1 \},
    \end{equation}
    where $J(\Phi) = \sum_{i,j} \Phi(\ketbra{i}{j}) \otimes \ketbra{i}{j}$ is the Choi matrix.
    
    \item \textbf{Exponential convergence}: If $\cL$ is primitive with spectral gap $\lambda_{\mathrm{gap}}$, then
    \begin{equation}
    \dnorm{e^{t\cL} - \cP} \leq C e^{-\lambda_{\mathrm{gap}} t}, \qquad t \geq 0,
    \end{equation}
    for some $C > 0$ depending on the dimension and the condition number of the eigenbasis.
\end{enumerate}
\end{proposition}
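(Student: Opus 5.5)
The six items are of two kinds. Items 1--4 follow directly from the definition of $\dnorm{\cdot}$ as a supremum of induced trace norms. Items 5 and 6 need finite-dimensional structure: the Choi isomorphism for item 5 and a Jordan decomposition for item 6. I would prove them in order, fixing $d=\dim\cH$ and using that the supremum over $n$ in Definition~\ref{def:diamond_norm} is attained at $n=d$.

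\textbf{Items 1--3.} For stability, note that $(\Phi\otimes\id_n)\otimes\id_m=\Phi\otimes\id_{nm}$. Hence $\dnorm{\Phi\otimes\id_n}$ is a supremum of the same quantities over a subfamily of ancilla dimensions. It is therefore $\le\dnorm{\Phi}$. The reverse inequality follows by taking $m\ge d$, together with the fact that any ancilla dimension $\ge d$ already attains the supremum (Schmidt decomposition). For submultiplicativity, write $(\Phi\circ\Psi)\otimes\id_n=(\Phi\otimes\id_n)\circ(\Psi\otimes\id_n)$. Then bound $\norm{(\Phi\otimes\id_n)(Z)}_1\le\dnorm{\Phi}\norm{Z}_1$ with $Z=(\Psi\otimes\id_n)(X)$, and take suprema. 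The triangle inequality is inherited pointwise from the trace norm before the suprema are taken.

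\textbf{Item 4.} As literally written this item is ill-posed, since $X,Y$ are operators, not maps. I would prove the intended statement:
\[
\dnorm{\Phi}=\sup\{|\tr[Y^\dagger(\Phi\otimes\id_d)(X)]|:\norm{X}_1\le1,\ \norm{Y}\le1\}.
\]
The proof is the trace-norm/operator-norm duality $\norm{Z}_1=\sup_{\norm{Y}\le1}|\tr Y^\dagger Z|$, followed by interchanging the two suprema. Read this way, the dual of the diamond norm is the completely bounded operator norm of the adjoint map. I would state this correction explicitly.

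\textbf{Item 5.} I would use the Remark after Theorem~\ref{thm:stinespring}: for Hermiticity-preserving $\Phi$ the supremum is attained on pure inputs $\ket\psi=(\sqrt{W_0}\otimes\id)\ket\Omega$ with $W_0\ge0$, $\tr W_0=1$, where $\ket\Omega=\sum_i\ket{ii}$ is the unnormalized maximally entangled vector. This gives
\[
\dnorm{\Phi}=\max_{W_0}\bigl\|(\id\otimes\sqrt{W_0}^{\,T})J(\Phi)(\id\otimes\sqrt{W_0}^{\,T})\bigr\|_1 .
\]
The linear formula stated in item 5 follows only when the operator inside the norm is positive, for instance when $\Phi$ is completely positive. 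For general Hermiticity-preserving maps I would instead present Watrous' SDP, which has the same optimization variable $W_0$. I expect this step to be the main obstacle, because the displayed formula is false without a positivity hypothesis and the statement has to be repaired before it can be proved.

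\textbf{Item 6.} Restrict $\cL$ to the invariant subspace $\ran(\id-\cP)$, where by Lemma~\ref{lem:lindblad_spectral_decomp}(1) every eigenvalue has real part $\le-\lambda_{\mathrm{gap}}$. If $\cL$ is diagonalizable, write $e^{t\cL}(\id-\cP)=\sum_\lambda e^{t\lambda}\cP_\lambda$. This gives an operator-norm bound $\kappa_{\mathrm{cond}}e^{-\lambda_{\mathrm{gap}}t}$, where $\kappa_{\mathrm{cond}}$ is the condition number of the eigenbasis. The bound transfers to the diamond norm via $\dnorm{\cdot}\le d\norm{\cdot}$, giving $C=d\,\kappa_{\mathrm{cond}}$.

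If $\cL$ has nontrivial Jordan blocks at the gap, the terms $t^ke^{-\lambda_{\mathrm{gap}}t}$ prevent the bound with exactly the rate $\lambda_{\mathrm{gap}}$. In that case I would prove it with rate $\lambda_{\mathrm{gap}}-\eta$ for any $\eta>0$, with $C=C(\eta)$, or alternatively with a polynomial prefactor, and note this caveat.
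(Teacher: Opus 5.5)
\textbf{Assessment.} Your proposal is correct, and where it overlaps with the paper it is more careful. One remark in item~5 needs correcting.

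\textbf{Item~6.} The paper proves only item~6, along the same lines as you. It writes $e^{t\cL}=\cP+e^{t(\id-\cP)\cL(\id-\cP)}(\id-\cP)$, triangularizes the restricted generator so the diagonal real parts are $\le-\lambda_{\mathrm{gap}}$, and passes to $\dnorm{\cdot}$ by norm equivalence.

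Your Jordan-block caveat is right, and it identifies an actual error in the paper's proof. The paper claims the polynomial prefactor coming from the triangular form can be ``absorbed into $C$''. But $t^k e^{-\lambda_{\mathrm{gap}}t}\le Ce^{-\lambda_{\mathrm{gap}}t}$ fails for $k\ge1$, and primitive Lindbladians with a nontrivial Jordan block on $\re z=-\lambda_{\mathrm{gap}}$ do exist. For example:
\begin{itemize}
\item take the classical cycle $1\to2\to3\to1$ with rates $\gamma,\gamma,4\gamma$, whose rate matrix has characteristic polynomial $\lambda(\lambda+3\gamma)^2$ and only a one-dimensional eigenspace at $-3\gamma$;
\item embed it via jump operators $\ketbra{j}{i}$, and add dephasing strong enough that all coherences decay faster than $3\gamma$.
\end{itemize}
So the exact-rate bound holds only in your diagonalizable case, or more generally when there is no Jordan block at the gap. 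This matches the statement's appeal to an eigenbasis.

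\textbf{Items 1--5.} The paper merely asserts these, so your arguments are additions rather than alternatives. They are sound, including your repair of item~4.

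\textbf{Correction in item~5.} Complete positivity alone does not rescue the displayed formula. In the paper's convention $J(\Phi)=\sum_{i,j}\Phi(\ketbra{i}{j})\otimes\ketbra{i}{j}$, the factor $W_0\otimes\id$ acts on the \emph{output} tensor factor. Hence $\tr[J(\Phi)(W_0\otimes\id)]=\tr[\Phi(\id)W_0]$, and the maximum is $\lambda_{\max}(\Phi(\id))$. For the replacement channel $X\mapsto\tr(X)\ketbra{0}{0}$ this equals $d$, whereas its diamond norm is $1$.

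What your own (correct) formula gives for completely positive $\Phi$ is different. There the conjugated Choi operator is positive, so its trace norm equals its trace, and you get $\max_{W_0}\tr[J(\Phi)(\id\otimes W_0)]=\norm{\Phi^*(\id)}$, with $\Phi^*$ the Heisenberg-picture adjoint. The repaired statement therefore needs two changes: the positivity hypothesis, and $W_0$ moved to the reference factor. For general Hermiticity-preserving maps, Watrous' SDP is the right replacement, as you propose.
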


\begin{proof}[Proof of exponential convergence]
By the spectral gap assumption, $\cL$ admits a decomposition $\cL = \cP \cL \cP + (\id - \cP) \cL (\id - \cP)$ with $\spec((\id - \cP) \cL (\id - \cP)) \subset \{z : \re z \leq -\lambda_{\mathrm{gap}}\}$. The semigroup satisfies
\begin{equation}
e^{t\cL} = \cP + e^{t (\id - \cP) \cL (\id - \cP)} (\id - \cP).
\end{equation}
The norm bound follows from the spectral mapping theorem and equivalence of norms on finite-dimensional spaces: there exists a basis in which $(\id - \cP) \cL (\id - \cP)$ is upper triangular with diagonal entries having real parts $\leq -\lambda_{\mathrm{gap}}$, yielding exponential decay with polynomial prefactor absorbed into $C$.
\end{proof}

\begin{remark}[Practical estimation]
For numerical verification of diamond-norm bounds, one may use:
\begin{itemize}
    \item Semidefinite programming via the Choi matrix characterization~\cite{Watrous}.
    \item Monte Carlo estimation using random pure states and ancillas.
    \item Analytic bounds via commutator norms and Trotter error formulas (Lemma~\ref{lem:trotter_error}).
\end{itemize}
In our analysis, we primarily rely on analytic bounds to maintain uniformity in $\eps$.
\end{remark}

\subsection{Trotter--Suzuki Error Analysis for Lindbladians}
\label{app:trotter-analysis}

The Trotter error bound in Lemma~\ref{lem:trotter_error} is derived from the Baker--Campbell--Hausdorff (BCH) formula. We provide additional detail on the extension to Lindbladian generators.

\begin{lemma}[BCH formula for bounded superoperators]
\label{lem:bch_superoperator}
Let $\cL_1, \cL_2$ be bounded linear maps on a Banach space. Then
\begin{equation}
\log(e^{\Delta t \cL_1} e^{\Delta t \cL_2}) = \Delta t (\cL_1 + \cL_2) + \frac{(\Delta t)^2}{2} [\cL_1, \cL_2] + \sum_{n=3}^\infty (\Delta t)^n Z_n(\cL_1, \cL_2),
\end{equation}
where $[\cL_1, \cL_2] = \cL_1 \circ \cL_2 - \cL_2 \circ \cL_1$ and the remainder terms $Z_n$ are homogeneous polynomials of degree $n$ in $\cL_1, \cL_2$ with coefficients independent of the operators.
\end{lemma}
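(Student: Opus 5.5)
The plan is to prove the BCH expansion for bounded superoperators in the Banach algebra $\cB(\cX)$ of bounded linear maps, working first at the level of formal power series in $\Delta t$ and then establishing convergence for small $\Delta t$. Set $X = \Delta t\,\cL_1$ and $Y = \Delta t\,\cL_2$. Then $Z(\Delta t) := \log(e^X e^Y)$ is defined through the logarithm series
\[
\log(\id + W) = \sum_{k \geq 1} \frac{(-1)^{k+1}}{k} W^k,
\qquad W := e^X e^Y - \id .
\]
This series converges absolutely whenever $\norm{W} < 1$. Since
\[
\norm{W} \leq e^{\norm{X} + \norm{Y}} - 1 ,
\]
convergence holds as soon as $\Delta t(\norm{\cL_1} + \norm{\cL_2}) < \log 2$. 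So the statement is to be read for $\Delta t$ in this range, which is the only regime where the logarithm is unambiguous.

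First, I will expand $W = \sum_{(p,q) \neq (0,0)} X^p Y^q/(p!\,q!)$ and substitute it into the logarithm series. Collecting terms of total degree $n$ in $(X,Y)$ gives $Z = \sum_{n \geq 1} Z_n(X,Y)$. Each $Z_n$ is a noncommutative homogeneous polynomial of degree $n$ with rational coefficients that depend only on the combinatorics of the two series, not on the operators. Homogeneity then yields $Z_n(X,Y) = (\Delta t)^n Z_n(\cL_1,\cL_2)$, which is exactly the claimed form. Absolute convergence justifies the rearrangement, because the double series is dominated termwise by the scalar series obtained by replacing $X, Y$ with $\norm{X}, \norm{Y}$. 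That scalar series converges in the stated range.

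Second, I will identify the low-order terms by direct computation.
\begin{itemize}
\item At degree $1$, only $k=1$ contributes, giving $X + Y$.
\item At degree $2$, the $k=1$ term contributes $X^2/2 + XY + Y^2/2$. The $k=2$ term contributes $-\tfrac12 (X+Y)^2 = -\tfrac12(X^2 + XY + YX + Y^2)$.
\end{itemize}
Summing the two degree-$2$ contributions gives $\tfrac12(XY - YX) = \tfrac12[X,Y]$. This yields the displayed leading terms $\Delta t(\cL_1+\cL_2) + \tfrac{(\Delta t)^2}{2}[\cL_1,\cL_2]$.

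The main obstacle is making the convergence claim honest rather than formal. The only convergence proof available with elementary tools is the majorant argument above, and it only works for $\Delta t(\norm{\cL_1}+\norm{\cL_2}) < \log 2$. For stiff generators, where $\norm{\cL_1} \sim \eps^{-1}$, this radius shrinks with $\eps$, so I would state the radius explicitly. I would not claim uniformity. Optionally, I would also note via Dynkin's formula that each $Z_n$ is a Lie polynomial, i.e.\ a combination of iterated commutators; this step is not needed for the statement as given.
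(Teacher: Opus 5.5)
Your proposal is correct and takes the same route as the paper: work in the Banach algebra of bounded maps under composition, define the logarithm by its power series, and obtain absolute convergence, and hence the degree-wise rearrangement into homogeneous $Z_n$, for $\Delta t(\norm{\cL_1}+\norm{\cL_2})<\log 2$. The paper only asserts that BCH holds in any Banach algebra and states this radius; you additionally supply the majorant bound $\norm{W}\le e^{\norm{X}+\norm{Y}}-1$ and the explicit degree-two cancellation giving $\tfrac12[\cL_1,\cL_2]$, which makes the argument self-contained.
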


\begin{proof}
The BCH formula holds in any Banach algebra. The space $\cB(\cT(\cH))$ of bounded linear maps on $\cT(\cH)$ is a Banach algebra under composition and the operator norm induced by the trace norm. The series converges absolutely for $\Delta t \norm{\cL_1} + \Delta t \norm{\cL_2} < \log 2$.
\end{proof}

\begin{corollary}[Trotter error for Lindbladians]
\label{cor:trotter_lindblad}
Under the hypotheses of Lemma~\ref{lem:trotter_error}, the constant in the error bound satisfies
\begin{equation}
\sum_{j < k} \dnorm{[\cL_j, \cL_k]} \leq 2 \sum_{j < k} \norm{\cL_j} \cdot \norm{\cL_k} \cdot d,
\end{equation}
where $d = \dim \cH$ and $\norm{\cdot}$ is the operator norm induced by the trace norm.
\end{corollary}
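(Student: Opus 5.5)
The plan is to bound each commutator term separately and then sum over pairs $j<k$. Both ingredients are already available: the norm relation in the Remark on norm relations, namely $\dnorm{\Phi} \leq d\cdot\norm{\Phi}$ for any linear map $\Phi$ on $\cT(\cH)$, and the Banach-algebra structure of $\cB(\cT(\cH))$ noted in the proof of Lemma~\ref{lem:bch_superoperator}. The only step needing care is which norm is used at which point. The factor $d$ is incurred exactly once per term, when passing from the diamond norm to the induced trace norm. All subsequent manipulations take place in the induced operator norm, where submultiplicativity holds with constant $1$.

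\textbf{Step 1 (pass to the induced norm).} Fix a pair $j<k$. Apply the Remark on norm relations to the linear map $\Phi = \comm{\cL_j}{\cL_k} = \cL_j\circ\cL_k - \cL_k\circ\cL_j$, which is a bounded linear map on $\cT(\cH)$. This gives
\[
\dnorm{\comm{\cL_j}{\cL_k}} \leq d\,\norm{\comm{\cL_j}{\cL_k}}.
\]

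\textbf{Step 2 (commutator estimate in the Banach algebra).} In $\cB(\cT(\cH))$, equipped with the operator norm induced by $\norm{\cdot}_1$, use the triangle inequality and then submultiplicativity of the induced norm:
\[
\norm{\cL_j\circ\cL_k - \cL_k\circ\cL_j} \leq \norm{\cL_j\circ\cL_k} + \norm{\cL_k\circ\cL_j} \leq 2\norm{\cL_j}\,\norm{\cL_k}.
\]
Submultiplicativity here is immediate from the definition of an induced norm:
\[
\norm{\cL_j\cL_k X}_1 \leq \norm{\cL_j}\,\norm{\cL_k X}_1 \leq \norm{\cL_j}\,\norm{\cL_k}\,\norm{X}_1 .
\]

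\textbf{Step 3 (sum over pairs).} Combining Steps 1 and 2 gives the termwise bound
\[
\dnorm{\comm{\cL_j}{\cL_k}} \leq 2d\,\norm{\cL_j}\,\norm{\cL_k}.
\]
Summing this over all $1\leq j<k\leq m$ yields the claimed inequality.

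The main, and essentially only, obstacle is justifying the dimension factor in Step 1. This rests on the Remark on norm relations, which I would take as given. If one wanted to verify it independently, I would first reduce, by Proposition~\ref{prop:diamond_properties} and the purification remark after Theorem~\ref{thm:stinespring}, to ancilla dimension at most $d$. I would then bound $\norm{(\Phi\otimes\id_d)(X)}_1$ by expanding $X=\sum_{a,b} X_{ab}\otimes\ketbra{a}{b}$ and applying $\norm{\Phi}$ blockwise. A crude version of this blockwise estimate picks up a power of $d$ that may exceed $1$, and tightening it to a single factor $d$ is where care is needed. For the purposes of the corollary, however, I would simply invoke the stated Remark.
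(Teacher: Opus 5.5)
Your proof is correct and follows the paper's argument exactly: apply $\dnorm{\cdot}\leq d\,\norm{\cdot}$ once to the commutator $\comm{\cL_j}{\cL_k}$, bound its induced norm by $2\norm{\cL_j}\,\norm{\cL_k}$ via the triangle inequality and submultiplicativity, and sum over $j<k$. If you want to verify the norm relation yourself, test on rank-one $uv^\dagger$ with unit $u,v$ and Schmidt-decompose both vectors; Cauchy--Schwarz gives $\sum_a\sqrt{p_a}\leq\sqrt{d}$ for each, which produces exactly one factor of $d$.
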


\begin{proof}
For any linear maps $A, B$ on $\cT(\cH)$, we have $\dnorm{[A, B]} \leq 2 \norm{A} \cdot \norm{B} \cdot d$ by the norm inequality $\dnorm{\cdot} \leq d \cdot \norm{\cdot}$ and submultiplicativity of the operator norm. The factor of 2 accounts for the two terms in the commutator.
\end{proof}

\begin{remark}[Higher-order Trotter formulas]
While we use first-order Trotterization for simplicity, higher-order Suzuki formulas yield improved error scaling:
\begin{equation}
\dnorm{e^{\Delta t \cL} - S_{2k}(\Delta t)} \leq C_{2k} (\Delta t)^{2k+1} \sum_{\text{nested commutators}} \dnorm{[\cL_{i_1}, [\cL_{i_2}, \cdots [\cL_{i_{2k}}, \cL_{i_{2k+1}}]\cdots]]},
\end{equation}
where $S_{2k}$ is the $2k$-th order Suzuki product formula. For stiff generators, however, the commutator scaling $\dnorm{[\eps^{-1}\cL_{\mathrm{fast}}, \cL_{\mathrm{slow}}]} = \eps^{-1} \dnorm{[\cL_{\mathrm{fast}}, \cL_{\mathrm{slow}}]}$ still necessitates $\Delta t = \mathcal{O}(\eps)$ unless the AP framework is employed.
\end{remark}

\subsection{Primitive Maps and Exponential Convergence in Diamond Norm}
\label{app:primitive-convergence}

The exponential convergence~\eqref{eq:diamond_convergence} is a cornerstone of our error analysis. We provide a self-contained proof adapted to the quantum setting.

\begin{lemma}[Exponential convergence for primitive CPTP maps]
\label{lem:primitive_convergence}
Let $\Phi : \cT(\cH) \to \cT(\cH)$ be a primitive CPTP map with unique steady state $\rho_\star$ and spectral gap $\lambda_{\mathrm{gap}} > 0$ (i.e., all eigenvalues $\lambda \neq 1$ satisfy $|\lambda| \leq e^{-\lambda_{\mathrm{gap}}}$). Then there exists $C > 0$ such that
\begin{equation}
\dnorm{\Phi^N - \cP} \leq C e^{-\lambda_{\mathrm{gap}} N}, \qquad N \in \N,
\end{equation}
where $\cP(X) = \tr[X] \rho_\star$.
\end{lemma}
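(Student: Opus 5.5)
The plan is to reduce the statement to a spectral-radius estimate for the restriction of $\Phi$ to the complement of its fixed space, and then pass from the trace-norm operator norm to the diamond norm using the finite-dimensional equivalence $\dnorm{\cdot} \le d\,\norm{\cdot}$ (Remark on norm relations).

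\textbf{Step 1: identify $\cP$.} Primitivity and the Perron--Frobenius argument behind Lemma~\ref{lem:lindblad_spectrum}(3) show that $1$ is an algebraically simple eigenvalue of $\Phi$. Its right eigenvector is $\rho_\star$. Trace preservation makes $\tr[\cdot]$ the left eigenvector, since $\Phi^\dagger(\id)=\id$. Hence the Riesz projection for $\{1\}$,
\begin{equation}
\cP = \frac{1}{2\pi i}\oint_{|z-1|=\eta} (z-\Phi)^{-1}\,dz ,
\end{equation}
equals $X \mapsto \tr[X]\rho_\star$. It commutes with $\Phi$ and satisfies $\Phi\cP = \cP\Phi = \cP$. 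Setting $\cQ := \Phi(\id-\cP)$, we get $\cQ\cP = \cP\cQ = 0$, and therefore
\begin{equation}
\Phi^N - \cP = (\Phi(\id-\cP))^N = \cQ^N .
\end{equation}
The spectrum of $\cQ$ is $(\spec\Phi\setminus\{1\})\cup\{0\}$, so its spectral radius is at most $r := e^{-\lambda_{\mathrm{gap}}}$.

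\textbf{Step 2: bound $\norm{\cQ^N}$ (the main obstacle).} The difficulty is that spectral radius $\le r$ only gives $\norm{\cQ^N} \le C_\delta (r+\delta)^N$ for every $\delta>0$, by Gelfand's formula. If $\cQ$ has a Jordan block of size $m>1$ at an eigenvalue of modulus exactly $r$, then $\norm{\cQ^N}$ genuinely grows like $N^{m-1} r^N$, and no $N$-independent $C$ can work at rate exactly $\lambda_{\mathrm{gap}}$.

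I would first write $\cQ = S(D+\mathcal{N})S^{-1}$ in Jordan form. Expanding binomially, using $\mathcal{N}^{d^2}=0$ and $[D,\mathcal{N}]=0$, gives
\begin{equation}
\norm{\cQ^N} \le \norm{S}\,\norm{S^{-1}} \sum_{j=0}^{m-1}\binom{N}{j} r^{N-j}\norm{\mathcal{N}}^j .
\end{equation}
When the peripheral eigenvalues (modulus $r$) are semisimple, this collapses to $\norm{S}\norm{S^{-1}}\,r^N$ after treating the remaining eigenvalues as strictly inside the circle. That yields the claim with $C$ equal to a condition number times a constant absorbing the strictly smaller eigenvalues. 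In the general case I would state the honest version, $C N^{m-1}e^{-\lambda_{\mathrm{gap}}N}$ or $C_\delta e^{-(\lambda_{\mathrm{gap}}-\delta)N}$. This matches the ``polynomial prefactor absorbed into $C$'' convention used in Proposition~\ref{prop:diamond_properties}(6), which is all the later applications (Theorem~\ref{thm:ap_commutative}) require.

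\textbf{Step 3: pass to the diamond norm.} Finally, $\dnorm{\Phi^N-\cP} \le d\,\norm{\cQ^N}$, so $C$ picks up an extra factor of $d$. An alternative route avoids the dimension factor and handles the ancilla directly. Stability gives $(\Phi\otimes\id_n)^N - \cP\otimes\id_n = (\cQ\otimes\id_n)^N$. The operator $\cQ\otimes\id_n$ has the same spectrum and Jordan structure, with a conjugating map $S\otimes\id_n$. Its condition number with respect to the trace norm can be bounded uniformly in $n$ through the Choi-matrix characterization, but I would use the simpler factor-$d$ bound.
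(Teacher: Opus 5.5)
Your argument is correct, caveat included. Its skeleton matches the paper's: split off the rank-one projection $\cP$, expand $\Phi^N-\cP$ in a (generalized) eigenbasis on $\ran(\id-\cP)$, and use equivalence of norms in finite dimension. You depart from the paper exactly where its proof breaks.

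The paper asserts that ``by primitivity, $\Phi$ is diagonalizable'' and writes $\Phi^N(X)-\cP(X)=\sum_{i\ge 2}c_i\lambda_i^N v_i$. Primitivity only makes $1$ an algebraically simple eigenvalue and the only one on the unit circle. It says nothing about Jordan blocks at other eigenvalues. Your Step~2 isolates what is actually needed: semisimplicity of the eigenvalues of modulus $e^{-\lambda_{\mathrm{gap}}}$.

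Your caveat is not hypothetical. Take $\Phi(X)=\sum_{i,j=1}^{3}P_{ij}\,\langle i|X|i\rangle\,\ketbra{j}{j}$ with $P=\frac{1}{12}\begin{pmatrix}5&5&2\\2&5&5\\2&2&8\end{pmatrix}$.
\begin{itemize}
\item It is CPTP, with Kraus operators $\sqrt{P_{ij}}\,|j\rangle\langle i|$.
\item It is strictly positive, since every $P_{ij}>0$.
\item Off-diagonal matrix units lie in its kernel.
\item $P$ has eigenvalues $1,\tfrac14,\tfrac14$, and $12P-3I_3$ has rank $2$.
\end{itemize}
So $\Phi$ has a $2\times 2$ Jordan block at $\tfrac14$, and $\dnorm{\Phi^N-\cP}\ge\norm{\Phi^N-\cP}\ge c\,N\,4^{-N}$. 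The lemma as stated therefore fails with $\lambda_{\mathrm{gap}}=\log 4$. Your bounds $C N^{m-1}e^{-\lambda_{\mathrm{gap}}N}$ or $C_\delta e^{-(\lambda_{\mathrm{gap}}-\delta)N}$ are the correct general statement. The same correction applies to the ``polynomial prefactor absorbed into $C$'' in Proposition~\ref{prop:diamond_properties}(6).

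For the passage to the diamond norm, the paper claims that $\Phi\otimes\id_n$ ``inherits primitivity and the same spectral gap'' and takes $C=C'$. This fails on two counts:
\begin{itemize}
\item $\Phi\otimes\id_n$ fixes every $\rho_\star\otimes Y$, so it is not primitive.
\item The norm-equivalence constant on $\cT(\cH\otimes\C^n)$ is never shown to be uniform in $n$.
\end{itemize}
Only the identity $(\Phi\otimes\id_n)^N-\cP\otimes\id_n=(\cQ\otimes\id_n)^N$ survives. Your Step~3 uses $\dnorm{\cdot}\le d\,\norm{\cdot}$, which sidesteps this at the price of an explicit factor $d$. Even simpler: $\dnorm{\cdot}$ is itself a norm on the finite-dimensional space of superoperators, hence equivalent to the Jordan-coordinate norm you already use. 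So neither ancillas nor your Choi-matrix aside are needed.
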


\begin{proof}
By primitivity, $\Phi$ is diagonalizable over $\C$ (or admits a Jordan decomposition with blocks of size 1 for the eigenvalue 1). Let $\{v_i\}$ be a basis of generalized eigenvectors with $\Phi v_i = \lambda_i v_i$ (or Jordan chains). Decompose any $X \in \cT(\cH)$ as $X = \tr[X] \rho_\star + \sum_{i \geq 2} c_i v_i$, where the sum excludes the steady state. Then
\begin{equation}
\Phi^N(X) - \cP(X) = \sum_{i \geq 2} c_i \lambda_i^N v_i.
\end{equation}
Taking the trace norm and using equivalence of norms on finite-dimensional spaces,
\begin{equation}
\norm{\Phi^N(X) - \cP(X)}_1 \leq C' \max_{i \geq 2} |\lambda_i|^N \cdot \norm{X}_1 \leq C' e^{-\lambda_{\mathrm{gap}} N} \norm{X}_1.
\end{equation}
The diamond norm bound follows by tensoring with an ancilla and using stability: for any $n$ and $X \in \cT(\cH \otimes \C^n)$,
\begin{equation}
\norm{(\Phi^N \otimes \id_n)(X) - (\cP \otimes \id_n)(X)}_1 \leq C' e^{-\lambda_{\mathrm{gap}} N} \norm{X}_1,
\end{equation}
since $\Phi \otimes \id_n$ inherits primitivity and the same spectral gap. Taking the supremum over $X$ with $\norm{X}_1 \leq 1$ yields the result with $C = C'$.
\end{proof}

\begin{remark}[Dependence of $C$ on dimension]
The constant $C$ in Lemma~\ref{lem:primitive_convergence} typically scales polynomially with $d = \dim \cH$, due to the condition number of the eigenbasis. For physically relevant Lindbladians (e.g., local dissipators), $C$ can often be bounded independently of $d$ via logarithmic Sobolev inequalities or hypercontractivity. Our resource bounds absorb such polynomial factors into the $\poly(d)$ notation.
\end{remark}

\subsection{Summary of Error Propagation in Layered Protocols}
\label{app:error-propagation}

We conclude by synthesizing the error analysis for the layered protocol $\Psi_{\Delta t}^\eps = \Phi_{\mathrm{slow}} \circ (\Phi_{\mathrm{fast}})^N$.

\begin{proposition}[Composite error bound]
\label{prop:composite_error}
Under the hypotheses of Theorem~\ref{thm:ap_commutative}, the total error decomposes as
\begin{align}
\dnorm{\Psi_{\Delta t}^\eps - e^{\Delta t \cL_\eps}} 
&\leq \underbrace{\dnorm{\Phi_{\mathrm{slow}} - e^{\Delta t \cL_{\mathrm{slow}}}}}_{\text{slow layer error}} \nonumber \\
&\quad + \underbrace{\dnorm{e^{\Delta t \cL_{\mathrm{slow}}}} \cdot \dnorm{(\Phi_{\mathrm{fast}})^N - e^{\Delta t \cL_{\mathrm{fast}}/\eps}}}_{\text{fast layer error}} \nonumber \\
&\quad + \underbrace{\dnorm{e^{\Delta t \cL_{\mathrm{slow}}} \circ e^{\Delta t \cL_{\mathrm{fast}}/\eps} - e^{\Delta t \cL_\eps}}}_{\text{interaction error}}.
\label{eq:error_decomposition}
\end{align}
Each term is bounded as follows:
\begin{enumerate}
    \item Slow layer: $\mathcal{O}((\Delta t)^{s+1})$ by assumption on $\Phi_{\mathrm{slow}}$.
    \item Fast layer: $\mathcal{O}(N \cdot (\eps \tau_{\mathrm{fast}})^{r+1} + e^{-\lambda_{\mathrm{gap}} \Delta t/\eps}) = \mathcal{O}(\eps^r \Delta t + e^{-\lambda_{\mathrm{gap}} \Delta t/\eps})$.
    \item Interaction: $\mathcal{O}(\eps \Delta t \cdot \dnorm{[\cL_{\mathrm{fast}}, \cL_{\mathrm{slow}}]} / \lambda_{\mathrm{gap}})$ via Duhamel expansion and centering condition.
\end{enumerate}
For fixed $\Delta t > 0$ and $\eps \to 0$, the exponential term vanishes, yielding the uniform bound $\mathcal{O}(\eps \Delta t + (\Delta t)^{\min(r,s)+1})$.
\end{proposition}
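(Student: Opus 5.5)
The plan is to telescope $\Psi_{\Delta t}^\eps - e^{\Delta t\cL_\eps}$ into three differences and bound each one separately with the diamond-norm tools already established. Write
\[
\Psi_{\Delta t}^\eps - e^{\Delta t\cL_\eps} = (\Phi_{\mathrm{slow}} - e^{\Delta t\cL_{\mathrm{slow}}})\circ(\Phi_{\mathrm{fast}})^N + e^{\Delta t\cL_{\mathrm{slow}}}\circ\bigl((\Phi_{\mathrm{fast}})^N - e^{\Delta t\cL_{\mathrm{fast}}/\eps}\bigr) + \bigl(e^{\Delta t\cL_{\mathrm{slow}}}\circ e^{\Delta t\cL_{\mathrm{fast}}/\eps} - e^{\Delta t\cL_\eps}\bigr).
\]
Then the triangle inequality and submultiplicativity (Proposition~\ref{prop:diamond_properties}) give \eqref{eq:error_decomposition}. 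The only extra input is $\dnorm{(\Phi_{\mathrm{fast}})^N}=1$, which holds because $\Phi_{\mathrm{fast}}$ is CPTP. The slow-layer bound $\mathcal{O}((\Delta t)^{s+1})$ is then just the hypothesis on $\Phi_{\mathrm{slow}}$ in Theorem~\ref{thm:ap_commutative}.

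For the fast layer, set $h=\eps\tau_{\mathrm{fast}}$. I would insert the intermediate map $e^{Nh\cL_{\mathrm{fast}}/\eps}$ and split the error into two pieces.

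\begin{enumerate}
\item \emph{Local errors.} Use the standard telescoping identity $A^N-B^N=\sum_{k}A^{k}(A-B)B^{N-1-k}$ with $A=\Phi_{\mathrm{fast}}$ and $B=e^{h\cL_{\mathrm{fast}}}$. Both maps are diamond-norm contractions, so this piece is bounded by $N C_{\mathrm{fast}} h^{r+1}$. Since $Nh\le \Delta t+h$, this is $\mathcal{O}(\eps^r\Delta t)$ once $h\lesssim\Delta t$.
\item \emph{Time mismatch.} Because $Nh\ge\Delta t$ rather than $Nh=\Delta t$, I would compare $e^{Nh\cL_{\mathrm{fast}}/\eps}$ and $e^{\Delta t\cL_{\mathrm{fast}}/\eps}$ through $\cP$. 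Applying \eqref{eq:diamond_convergence} to each gives $2Ce^{-\lambda_{\mathrm{gap}}\Delta t/\eps}$.
\end{enumerate}

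This produces exactly the stated $\mathcal{O}(\eps^r\Delta t + e^{-\lambda_{\mathrm{gap}}\Delta t/\eps})$. The time rescaling in $\lambda_{\mathrm{gap}}$ must be tracked carefully: the generator of the fast evolution here is $\cL_{\mathrm{fast}}/\eps$, so the decay rate is $\lambda_{\mathrm{gap}}/\eps$.

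For the interaction term, apply Lemma~\ref{lem:duhamel} with $\cL_0=\eps^{-1}\cL_{\mathrm{fast}}$ and $\cL_1=\cL_{\mathrm{slow}}$. Compare it with the analogous variation-of-constants formula for the product $e^{\Delta t\cL_{\mathrm{slow}}}e^{\Delta t\cL_0}$. The difference of the two integrands involves $e^{(\Delta t-s)\cL_0}\cL_{\mathrm{slow}}$ against $\cL_{\mathrm{slow}}e^{(\Delta t-s)\cL_0}$, and hence the commutator $[\cL_{\mathrm{fast}},\cL_{\mathrm{slow}}]$. I would then split each fast propagator as $\cP + e^{t\cL_0}(\id-\cP)$.

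\begin{enumerate}
\item The $(\id-\cP)$ parts decay like $e^{-\lambda_{\mathrm{gap}}t/\eps}$. Integrating over $[0,\Delta t]$ yields a factor $\eps/\lambda_{\mathrm{gap}}$, equivalently a use of the Drazin inverse bound $\norm{\cL_{\mathrm{fast}}^+}\le 1/\lambda_{\mathrm{gap}}$ from Lemma~\ref{lem:lindblad_spectral_decomp}. Together with the remaining $\Delta t$ and the commutator norm, this gives the claimed $\mathcal{O}(\eps\Delta t\,\dnorm{[\cL_{\mathrm{fast}},\cL_{\mathrm{slow}}]}/\lambda_{\mathrm{gap}})$.
\item The $\cP$--$\cP$ parts should cancel, or be controlled, by the centering condition \eqref{eq:centering_condition}, which gives $\cP\cL_{\mathrm{slow}}(\id-\cP)=0$.
\end{enumerate}

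The final sentence of the proposition then follows by letting $\eps\to0$ at fixed $\Delta t$. The exponential term disappears, $\eps^r\Delta t\le\eps\Delta t$ for $r\ge1$, and $(\Delta t)^{s+1}\le(\Delta t)^{\min(r,s)+1}$ for $\Delta t\le1$.

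The main obstacle is the interaction step, specifically the $\cP$--$\cP$ and mixed $\cP\cL_{\mathrm{slow}}(\id-\cP)$ contributions. As $\eps\to0$, $e^{\Delta t\cL_{\mathrm{slow}}}e^{\Delta t\cL_0}$ tends to $e^{\Delta t\cL_{\mathrm{slow}}}\cP$, whereas $e^{\Delta t\cL_\eps}$ tends to $e^{\Delta t\cL_{\mathrm{slow}}^{\mathrm{eff}}}\cP$ (after accounting for the $\eps$-rescaling in Corollary~\ref{cor:effective_generator}). These limits generally differ at order $\Delta t$, not $\eps\Delta t$, unless $(\id-\cP)\cL_{\mathrm{slow}}\cP$ is also controlled.

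I would therefore first attempt to prove that under \eqref{eq:centering_condition} the residual mismatch is $\mathcal{O}((\Delta t)^2)$, and absorb it into the $(\Delta t)^{\min(r,s)+1}$ term. If that fails, I would strengthen the hypothesis to $(\id-\cP)\cL_{\mathrm{slow}}\cP=0$ as well, so that $\cP$ commutes with $\cL_{\mathrm{slow}}$ and the leading term cancels exactly.
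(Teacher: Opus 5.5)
Your route (three-term splitting by triangle inequality and submultiplicativity, telescoping for the fast layer, Duhamel plus the splitting $\cP+(\id-\cP)$ for the interaction term) is the paper's route, and the decomposition and slow-layer step are fine. The interaction step, however, cannot be completed. The obstacle you flag at the end is fatal, not technical: item~3, and with it the final uniform bound, is false under the stated hypotheses.

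Under the centering condition $\cP\cL_{\mathrm{slow}}(\id-\cP)=0$, both $\cL_{\mathrm{slow}}$ and $\cL_\eps$ are block lower-triangular on $\ran\cP\oplus\ran(\id-\cP)$. Hence $\cP e^{t\cL_{\mathrm{slow}}}=\cP e^{t\cL_\eps}=e^{t\cP\cL_{\mathrm{slow}}\cP}\cP$, so the $\cP$-component of $e^{\Delta t\cL_{\mathrm{slow}}}e^{\Delta t\cL_{\mathrm{fast}}/\eps}-e^{\Delta t\cL_\eps}$ vanishes identically. But on inputs in $\ran\cP$, its $(\id-\cP)$-component is $(\id-\cP)\bigl(e^{\Delta t\cL_{\mathrm{slow}}}-e^{\Delta t\cL_\eps}\bigr)\cP=\Delta t\,(\id-\cP)\cL_{\mathrm{slow}}\cP+\mathcal{O}((\Delta t)^2)+\mathcal{O}(\eps)$. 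The leading coefficient $(\id-\cP)\cL_{\mathrm{slow}}\cP$ is untouched by centering, so your first fallback (an $\mathcal{O}((\Delta t)^2)$ residual) cannot succeed. This also locates the error in your item~1. The commutator the Duhamel comparison actually produces is $[\eps^{-1}\cL_{\mathrm{fast}},\cL_{\mathrm{slow}}]$. The factor $\eps/\lambda_{\mathrm{gap}}$ gained by integrating the decay is used up cancelling its $\eps^{-1}$, leaving $\mathcal{O}(\Delta t\,\dnorm{[\cL_{\mathrm{fast}},\cL_{\mathrm{slow}}]}/\lambda_{\mathrm{gap}})$, one power of $\eps$ short.

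Concrete counterexample: on a qubit, let $\cL_{\mathrm{fast}}(\rho)=\tr[\rho]\,\id/2-\rho$, which is primitive with $\lambda_{\mathrm{gap}}=1$ and $\cP(X)=\tr[X]\,\id/2$. Let $\cL_{\mathrm{slow}}=\cD[\sigma^-]$ with $\sigma^-=\ketbra{0}{1}$. Trace preservation gives $\cP\cL_{\mathrm{slow}}=0$, so centering holds, and $[\cL_{\mathrm{fast}},\cL_{\mathrm{slow}}]=-\cL_{\mathrm{slow}}\cP$ has diamond norm $1$. On input $\id/2$, the coordinate $z=\tr[\rho\,\sigma^z]$ with $\sigma^z=\ketbra{0}{0}-\ketbra{1}{1}$ obeys $\dot z=1-z-z/\eps$ under $\cL_\eps$, so the exact output has $0\le z\le\eps$. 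By contrast, $e^{\Delta t\cL_{\mathrm{slow}}}e^{\Delta t\cL_{\mathrm{fast}}/\eps}$ outputs $z=1-e^{-\Delta t}$. The interaction term is therefore at least $1-e^{-\Delta t}-\eps$, which is $\Theta(\Delta t)$ as $\eps\to0$. With exact fast and slow layers (any exact fast layer fixes $\id/2$), the total error obeys the same lower bound, contradicting $\mathcal{O}(\eps\Delta t+(\Delta t)^{\min(r,s)+1})$.

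The paper's proof handles this step with the bare assertion that centering ensures ``cancellation of the leading $\mathcal{O}(\Delta t)$ term''. That is exactly what fails: centering constrains $\cP\cL_{\mathrm{slow}}(\id-\cP)$, while the surviving term is governed by $(\id-\cP)\cL_{\mathrm{slow}}\cP$.

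Your second fallback does give a true statement. If $\cL_{\mathrm{slow}}$ commutes with $\cP$, the two maps coincide on $\ran\cP$ and both are exponentially small in $\Delta t/\eps$ on $\ran(\id-\cP)$. But that is a different proposition, and it removes exactly the coupling that generates the Schur-complement correction. It fails for the cavity-QED interaction, and for primitive $\cL_{\mathrm{fast}}$ it forces $\cL_{\mathrm{slow}}(\rho_{\mathrm{fast}}^\star)=0$. Under the original hypotheses, what survives is a bound on the error after projection by $\cP$, or after the next fast layer, since the $\mathcal{O}(\Delta t)$ discrepancy lives entirely in $\ran(\id-\cP)$.

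Separately, your fast-layer step has a rescaling slip that the paper shares. With $B=e^{h\cL_{\mathrm{fast}}}$ as the hypothesis dictates, $B^N=e^{Nh\cL_{\mathrm{fast}}}$, not $e^{Nh\cL_{\mathrm{fast}}/\eps}$. This map has relaxed under $\cL_{\mathrm{fast}}$ only for a time of order $\Delta t$, so its distance to $e^{\Delta t\cL_{\mathrm{fast}}/\eps}$ does not vanish as $\eps\to0$. Your $2Ce^{-\lambda_{\mathrm{gap}}\Delta t/\eps}$ estimate requires reading the hypothesis as $\Phi_{\mathrm{fast}}\approx e^{\tau_{\mathrm{fast}}\cL_{\mathrm{fast}}}=e^{h\cL_{\mathrm{fast}}/\eps}$.
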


\begin{proof}
The decomposition~\eqref{eq:error_decomposition} follows from the triangle inequality and submultiplicativity of the diamond norm. The slow layer bound is by assumption. For the fast layer, we use the telescoping sum
\begin{equation}
(\Phi_{\mathrm{fast}})^N - e^{N \eps \tau_{\mathrm{fast}} \cL_{\mathrm{fast}}} = \sum_{k=0}^{N-1} (\Phi_{\mathrm{fast}})^{N-1-k} \circ (\Phi_{\mathrm{fast}} - e^{\eps \tau_{\mathrm{fast}} \cL_{\mathrm{fast}}}) \circ e^{k \eps \tau_{\mathrm{fast}} \cL_{\mathrm{fast}}},
\end{equation}
and bound each term using $\dnorm{\Phi_{\mathrm{fast}} - e^{\eps \tau_{\mathrm{fast}} \cL_{\mathrm{fast}}}} = \mathcal{O}((\eps \tau_{\mathrm{fast}})^{r+1})$ and $\dnorm{e^{k \eps \tau_{\mathrm{fast}} \cL_{\mathrm{fast}}}} \leq 1$ (CPTP maps are contractions in trace norm). The interaction error follows from Lemma~\ref{lem:duhamel} and the spectral gap estimate~\eqref{eq:diamond_convergence}, with the centering condition ensuring cancellation of the leading $\mathcal{O}(\Delta t)$ term.
\end{proof}

\begin{remark}[Uniformity in $\eps$]
The key to asymptotic preservation is that the interaction error scales as $\mathcal{O}(\eps \Delta t)$ rather than $\mathcal{O}(\Delta t^2/\eps)$, which would arise from naive Trotter analysis. This improvement is achieved by: (i) resolving the fast scale via analog evolution or sufficient substeps, (ii) exploiting exponential convergence to the slow manifold, and (iii) imposing the centering condition to eliminate secular terms. These ingredients collectively ensure that the error bound remains valid uniformly as $\eps \to 0$.
\end{remark}

\end{document}